\documentclass[11pt]{article}
\usepackage{amssymb}
\usepackage{amsthm}
\usepackage{amsmath}
\usepackage{fullpage,hyperref,mathrsfs,txfonts,epsfig,graphicx,graphics}
\newcommand{\fin}{\nolinebreak\hspace{\stretch{1}}$\lhd$}

\newcommand\remove[1]{}

\newcommand{\rnote}[1]{}
\newcommand{\jnote}[1]{}

\newcommand{\Rad}{\mathbf{\mathrm{\bf Rad}}}

\renewcommand{\L}{\mathscr{L}}

\newcommand{\1}{\mathbf{1}}
\newcommand{\e}{\varepsilon}
\newcommand{\R}{\mathbb{R}}
\newcommand{\E}{\mathbb{E}}
\newcommand{\N}{\mathbb{N}}

\newcommand{\F}{\mathscr{F}}

\newcommand{\Alg}{\mathrm{Alg}}

\newcommand{\SDP}{\mathrm{SDP}}

\newcommand{\Clust}{\mathrm{\bf Clust}}

\def\calG{{\cal G}}

\def\calL{{\cal L}}

\def\calX{{\cal X}}
\def\calX{{\cal X}}

\def\chop{ {\rm chop}}
\newcommand{\eps}{\varepsilon}
\newcommand{\infl}{{\rm Inf}}

\newtheorem{theorem}{Theorem}[section]
\newtheorem{lemma}[theorem]{Lemma}

\newtheorem{corollary}[theorem]{Corollary}

\newtheorem{fact}[theorem]{Fact}

\newtheorem{remark}{Remark}[section]

\newtheorem{conjecture}{Conjecture}
\newtheorem{question}[conjecture]{Question}

\date{}

\title{Sharp kernel clustering algorithms and their associated Grothendieck inequalities}

\author{Subhash Khot \and
Assaf Naor}
\begin{document}

\maketitle

\begin{abstract}
In the kernel clustering problem we are given a (large) $n\times n$
symmetric positive semidefinite matrix $A=(a_{ij})$ with
$\sum_{i=1}^n\sum_{j=1}^n a_{ij}=0$ and a (small) $k\times k$
symmetric positive semidefinite matrix $B=(b_{ij})$. The goal is to
find a partition $\{S_1,\ldots,S_k\}$ of $\{1,\ldots n\}$ which
maximizes  $ \sum_{i=1}^k\sum_{j=1}^k \left(\sum_{(p,q)\in S_i\times
S_j}a_{pq}\right)b_{ij}$.
 We design
a polynomial time approximation algorithm that achieves an
approximation ratio of $\frac{R(B)^2}{C(B)}$, where $R(B)$ and
$C(B)$ are geometric parameters that depend only on the matrix $B$,
defined as follows: if $b_{ij} = \langle v_i, v_j \rangle$ is the
Gram matrix representation of $B$ for some $v_1,\ldots,v_k\in \R^k$
then $R(B)$ is the minimum radius of a Euclidean ball containing the
points $\{v_1, \ldots, v_k\}$. The parameter $C(B)$ is defined as
the maximum over all measurable partitions $\{A_1,\ldots,A_k\}$ of
$\R^{k-1}$ of the quantity $\sum_{i=1}^k\sum_{j=1}^k b_{ij}\langle
z_i,z_j\rangle$, where for $i\in \{1,\ldots,k\}$ the vector $z_i\in
\R^{k-1}$ is the Gaussian moment of $A_i$, i.e.,
$z_i=\frac{1}{(2\pi)^{(k-1)/2}}\int_{A_i}xe^{-\|x\|_2^2/2}dx$. We
also show that for every $\eps
> 0$, achieving an approximation guarantee of $(1-\e)\frac{R(B)^2}{C(B)}$ is Unique
Games hard.
\end{abstract}

\section{Introduction}

Kernel Clustering~\cite{SSGB07} is a combinatorial optimization
problem which originates in the theory of machine learning. It is a
general framework for clustering massive statistical data so as to
uncover a certain hypothesized structure. The problem is defined as
follows: let $A=(a_{ij})$ be an $n\times n$ symmetric positive
semidefinite matrix which is usually normalized to be centered,
i.e., $\sum_{i=1}^n\sum_{j=1}^n a_{ij}=0$. The matrix $A$ is often
thought of as the correlation matrix of random variables
$(X_1,\ldots,X_n)$ that measure attributes of certain empirical
data, i.e., $a_{ij}=\E\left[X_iX_j\right]$. We are also given
another symmetric positive semidefinite $k\times k$ matrix
$B=(b_{ij})$ which functions as a hypothesis, or test matrix. Think
of $n$ as huge and $k$ as small. The goal is to cluster $A$ so as to
obtain a smaller matrix which most resembles $B$. Formally, we wish
to find a partition $\{S_1,\ldots,S_k\}$ of $\{1,\ldots,n\}$ so that
if we write $c_{ij}\coloneqq\sum_{(p,q)\in S_i\times S_j} a_{pq}$,
i.e., we form a $k\times k$ matrix $C=(c_{ij})$ by clustering $A$
according to the given partition, then the resulting clustered
version of $A$ has the maximum correlation $\sum_{i=1}^k\sum_{j=1}^k
c_{ij}b_{ij}$ with the hypothesis matrix $B$. Equivalently, the goal
is to evaluate the number:
\begin{equation}\label{eq:def clust}
\Clust(A|B)\coloneqq
\max_{\sigma:\{1,\ldots,n\}\to\{1,\ldots,k\}}\sum_{i=1}^k\sum_{j=1}^k
a_{ij}b_{\sigma(i)\sigma(j)}.
\end{equation}

The strength of this generic clustering framework is based in part
on the flexibility of adapting the matrix $B$ to the problem at
hand. Various particular choices of $B$ lead to well studied
optimization problems, while other specialized choices of $B$ are
based on statistical hypotheses which have been applied with some
empirical success. We refer to~\cite{SSGB07,KN08} for additional
background and a discussion of specific examples.

In~\cite{KN08} we investigated the computational complexity of the
kernel clustering problem. Answering a question posed
in~\cite{SSGB07}, we showed that this problem has a constant factor
polynomial time approximation algorithm. We refer to~\cite{KN08} for
more information on the best known approximation guarantees. We also
obtained hardness results for kernel clustering under various
complexity assumptions. For example, we showed in~\cite{KN08} that
when $B=I_3$ is the $3\times 3$ identity matrix then a
$\frac{16\pi}{27}$ approximation guarantee for $\Clust(A|I_3)$ is
achievable, while any approximation guarantee smaller than
$\frac{16\pi}{27}$ is Unique Games hard. We will discuss the Unique
Games Conjecture (UGC) presently. At this point it suffices to say
that the above statement is evidence that the hardness threshold of
the problem of approximating $\Clust(A|I_3)$ is $\frac{16\pi}{27}$,
or more modestly that obtaining a polynomial time algorithm which
approximates $\Clust(A|I_3)$ up to a factor smaller than
$\frac{16\pi}{27}$ would require a major breakthrough.

Another result proved in~\cite{KN08} is that when $k\ge 3$ and $B$
is either the $k\times k$ identity matrix or is spherical (i.e.,
$b_{ii}=1$ for all $i\in \{1,\ldots,k\}$) and centered (i.e.,
$\sum_{i=1}^k\sum_{j=1}^k b_{ij}=0$) then there is a polynomial time
approximation algorithm which, given $A$, approximates $\Clust(A|B)$
to within a factor of $\frac{8\pi}{9}\left(1-\frac{1}{k}\right)$. We
also presented in~\cite{KN08} a conjecture (called the Propeller
Conjecture) which we proved would imply that
$\frac{8\pi}{9}\left(1-\frac{1}{k}\right)$ is the UGC hardness
threshold when $B=I_k$. We refer to~\cite{KN08} for more information
on the Propeller Conjecture, which at present remains open.

The above quoted result from~\cite{KN08} settles the problem of
evaluating the UGC hardness threshold of the following type of
algorithmic task: given $A$ and an hypothesis matrix $B$ which is
guaranteed to belong to a certain class of matrices (in our case
centered and spherical), approximate efficiently the number
$\Clust(A|B)$. Naturally this can be refined to a family of
optimization problems which depend on a fixed $B$: for each $B$,
what is the UGC hardness threshold of the problem of, given $A$,
approximating $\Clust(A|B)$? In~\cite{KN08} we answered this
question only when $B=I_3$, and for $B=I_k$ assuming the Propeller
Conjecture, and asked about the case of general $B$ (we did give
some $B$-dependent bounds in~\cite{KN08}, but they were not sharp
for $B\neq I_k$ for reasons that will become clear presently). This
is a natural question since it makes sense to use the best possible
polynomial time algorithm if we know $B$ in advance.

Here we answer the above question in full generality. To explain our
results we need to define two geometric parameters which are
associated to $B$. Since $B$ is symmetric and positive semidefinite
we can find vectors $v_1,\ldots,v_k\in \R^k$ such that $B$ is their
Gram matrix, i.e.,  $b_{ij}=\langle v_i,v_j\rangle$ for all $i,j\in
\{1,\ldots,k\}$. Let $R(B)$ be the smallest possible radius of a
Euclidean ball in $\R^k$ which contains $\{v_1,\ldots,v_k\}$ and let
$w(B)$ be the center of this ball. Let $C(B)$ be the maximum over
all partitions $\{A_1,\ldots,A_k\}$ of $\R^{k-1}$ into measurable
sets of the quantity $\sum_{i=1}^k\sum_{j=1}^k b_{ij}\langle
z_i,z_j\rangle$, where for $i\in \{1,\ldots,k\}$ the vector $z_i\in
\R^{k-1}$ is the Gaussian moment of $A_i$, i.e.,
$z_i=\frac{1}{(2\pi)^{(k-1)/2}}\int_{A_i}xe^{-\|x\|_2^2/2}dx$ (this
maximum exists, as shown in Section~\ref{sec:simplices}). Our main
result is the following theorem\footnote{We refer to the discussion
in Question 1 in Section~\ref{sec:open} below which addresses the
issue of computing efficiently good approximate clusterings rather
than approximating only the value $\Clust(A|B)$.}:

\begin{theorem}\label{thm:main intro}
For every symmetric positive semidefinite $k\times k$ matrix $B$
there exists a randomized polynomial time algorithm which given an
$n\times n$ symmetric positive semidefinite centered matrix $A$,
outputs a number $\Alg(A)$ such that
$$
\Clust(A|B)\le \E\left[\Alg(A)\right]\le
\frac{R(B)^2}{C(B)}\Clust(A|B).
$$

On the other hand, assuming the Unique Games Conjecture, no
polynomial time algorithm approximates $\Clust(A|B)$ to within a
factor strictly smaller than $\frac{R(B)^2}{C(B)}$.
\end{theorem}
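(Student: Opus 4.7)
The plan is to solve the natural SDP relaxation of $\Clust(A|B)$ and round it using the Gaussian partition witnessing $C(B)$. Writing $b_{ij}=\langle v_i,v_j\rangle$ and thinking of each $v_{\sigma(i)}$ as a vector to be chosen, I would introduce SDP variables $y_1,\ldots,y_n$ in a Hilbert space subject to convex constraints enforcing $\|y_i-w\|_2\le R(B)$ for a vector $w$ playing the role of $w(B)$, and maximize $\sum_{p,q}a_{pq}\langle y_p,y_q\rangle$. Substituting $y_i=v_{\sigma(i)}$ shows $\SDP(A)\ge \Clust(A|B)$. For rounding, draw a standard Gaussian $(k-1)\times d$ matrix $G$, apply it to each $(y_i-w)/R(B)$, and set $\sigma(i)=j$ whenever $G(y_i-w)/R(B)\in A_j$, where $\{A_1,\ldots,A_k\}$ is an optimal partition of $\R^{k-1}$ in the definition of $C(B)$. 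A rotation-invariance computation, combined with the fact that $\sum_{p,q}a_{pq}=0$ cancels the cross terms coming from the shift by $w$, shows that the expected value of the rounded objective equals $\frac{C(B)}{R(B)^2}\cdot \SDP(A)$. Outputting $\frac{R(B)^2}{C(B)}$ times the rounded value then gives the $\Alg(A)$ in the statement, satisfying both sides of the inequality.

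\textbf{Hardness.} The plan is to build a dictatorship test for $\Clust(\cdot|B)$ and compose with Unique Games via the Raghavendra--Khot--Naor framework. The test is a kernel clustering instance derived from the noisy Boolean cube $\{\pm 1\}^R$; the \emph{completeness} case is witnessed by a dictator assignment $\sigma:\{\pm 1\}^R\to[k]$ depending on a single coordinate, which achieves value $R(B)^2$ (up to a common normalizing scalar) by placing the label at a vertex $v_\sigma$ of the smallest ball enclosing $\{v_1,\ldots,v_k\}$. For \emph{soundness}, one applies the Mossel--O'Donnell--Oleszkiewicz invariance principle: any $\sigma$ in which every coordinate has small influence can be replaced, with negligible loss, by its Gaussian analogue, which is nothing other than a measurable partition $\{A_1,\ldots,A_k\}$ of $\R^{k-1}$, and hence has value at most $C(B)$ by the very definition of $C(B)$. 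Standard UG composition then transfers this dictatorship gap to a hardness of approximation by a factor of $\frac{R(B)^2}{C(B)}-\e$ for every $\e>0$.

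\textbf{Main obstacle.} The delicate point is achieving the sharp constant simultaneously on both sides. For the algorithm, one must arrange the SDP constraints tightly enough that the feasible vectors really lie in a ball of radius $R(B)$ (not larger), while still relaxing every integer solution; for the hardness reduction, the noise parameters and the ambient Gaussian dimension in the invariance principle must be tuned so that the limiting Gaussian problem is exactly the $(k-1)$-dimensional partition problem defining $C(B)$. One also relies on the existence of a maximizer in that Gaussian problem, proved in Section~\ref{sec:simplices}, so that both the rounding scheme and the reduction have a concrete optimal partition $\{A_1,\ldots,A_k\}$ to invoke. Ensuring that the completeness value $R(B)^2$ and the soundness value $C(B)$ appear with the same normalizing scalar — this is what links the two bounds — is the crux of making the ratio match exactly.
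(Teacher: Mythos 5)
Your algorithmic plan is essentially the paper's, modulo a presentational normalization: the paper solves the SDP $\max\sum a_{ij}\langle x_i,x_j\rangle$ over unit vectors, outputs $R(B)^2\cdot\SDP(A|B)$, and proves the upper bound via a generalized positive semidefinite Grothendieck inequality that is established precisely by the Gaussian-projection rounding you describe. However, your claim that ``rotation-invariance'' shows the expected rounded value \emph{equals} $\frac{C(B)}{R(B)^2}\SDP(A)$ is both misstated and a nontrivial gap: the true statement is a one-sided inequality $\E[\sum a_{ij}b_{\sigma(i)\sigma(j)}]\ge C(B)\sum a_{ij}\langle x_i,x_j\rangle$, and it does not follow from rotation invariance alone. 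The paper proves it via the closed-form Poisson kernel for Hermite polynomials (Lemma~\ref{lem:hermite}), which decomposes $\Pr[Gx\in E \wedge Gy\in F]$ into a sum over tensor powers $\langle x^{\otimes\ell},y^{\otimes\ell}\rangle$ times a product $\alpha_s(E)\alpha_s(F)$; the positive semidefiniteness of both $A$ and $B$ is then needed to certify that every $\ell\ge 2$ term contributes nonnegatively, and the centeredness kills the $\ell=0$ term. Without this identity and these sign considerations, the expectation estimate is unsupported.

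The hardness sketch has a more serious missing idea. You propose a dictatorship test over the Boolean cube $\{\pm 1\}^R$, but the paper's reduction crucially works over the $k$-ary domain $\{1,\ldots,k\}^n$ equipped with a \emph{non-uniform} product measure $\mu^n$, where $\mu$ on $\{1,\ldots,k\}$ is obtained (after a tiny perturbation for positivity) from the coefficients $p(i)$ expressing the center $w(B)$ as a convex combination of the $v_i$'s on the boundary of the minimum enclosing ball (Fact~\ref{fact:separation}). This choice is exactly what makes a dictator achieve $\sum_i\mu(i)\|v_i-\sum_j\mu(j)v_j\|_2^2 = R(B)^2 - \eps$ (Equation~\eqref{eqn:rbs-dict}), while a uniform distribution on labels, or a $\{\pm 1\}$-domain test, would produce a completeness value strictly below $R(B)^2$ for generic $B$ and the ratio would not match. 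Your soundness argument via the MOO invariance principle reducing low-influence functions to a Gaussian partition bounded by $C(B)$ is correct in outline, but without the $B$-adapted distribution, the completeness--soundness gap would be wrong, and this is precisely the ``new twist'' the paper identifies.
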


As an example of Theorem~\ref{thm:main intro} for a particular
hypothesis matrix consider the following perturbation of the
previously studied case $B=I_3$: $$ B_c\coloneqq
\begin{pmatrix}
  1 & 0 & 0 \\
   0 & 1 & 0\\
   0 & 0 & c
   \end{pmatrix},$$ where $c>0$ is a parameter. The problem of
   approximating efficiently $\Clust(A|B_c)$ corresponds to
   partitioning the rows of $A$ into $3$ sets $S_1,S_2,S_3\subseteq \{1,\ldots,n\}$ and maximizing the sum of
   the total masses of $A$ on $S_1\times S_1,S_2\times S_2,S_3\times
   S_3$, where the parameter $c$ can be used to tune the weight of
   the set $S_3$. This problem is not particularly important---we
   chose it just as a concrete example for the sake of illustration.
   In Section~\ref{sec:example} we compute the parameters
   $R(B_c),C(B_c)$ and deduce that the UGC hardness threshold of the
   problem of computing $\Clust(A|B_c)$ equals $\frac{4\pi
c(1+c)^2}{(1+2c)^3}$ if $c\ge\frac12$ and equals $\frac{\pi
(1+c)^2}{2+4c}$ if $c\le\frac12$. The change at $c=\frac12$
corresponds in a qualitative change in the best algorithm for
computing $\Clust(A|B_c)$---we refer to Section~\ref{sec:example}
for an explanation.

In the remainder of this introduction we will explain the various
ingredients of Theorem~\ref{thm:main intro} (in particular the
Unique Games Conjecture), and the new ideas used in its proof.

The main tool in the design of the algorithm in
Theorem~\ref{thm:main intro} is a natural generalization of the
positive semidefinite Grothendieck inequality. In~\cite{Gro53}
Grothendieck proved that there exists a universal constant $K>0$
such that for every $n\times n$ symmetric positive semidefinite
matrix $A=(a_{ij})$ we have\footnote{This inequality is sometimes
written as $ \max_{x_i,y_i\in S^{n-1}}\sum_{i=1}^n\sum_{j=1}^n
a_{ij} \langle x_i,y_j\rangle \le K\max_{\e_i,\delta_i\in
\{-1,1\}}\sum_{i=1}^n\sum_{j=1}^n a_{ij}\e_i\delta_j$, but it is
easy (and standard) to verify that since $A$ is positive
semidefinite this formulation coincides with~\eqref{eq:PSD gro}.}:
\begin{equation}\label{eq:PSD gro}
\max_{x_1,\ldots,x_n\in S^{n-1}}\sum_{i=1}^n\sum_{j=1}^n a_{ij}
\langle x_i,x_j\rangle \le K\max_{\e_1,\ldots,\e_n\in
\{-1,1\}}\sum_{i=1}^n\sum_{j=1}^n a_{ij}\e_i\e_j.
\end{equation}
The best constant $K$ in~\eqref{eq:PSD gro} was shown
in~\cite{Rie74} to be equal to $\frac{\pi}{2}$. A natural variant
of~\eqref{eq:PSD gro}  is to replace the numbers $-1,1$ by general
$v_1,\ldots,v_k\in \R^k$, namely one might ask for the smallest
constant $K>0$ such that for every symmetric positive semidefinite
$n\times n$ matrix $A$ we have:
\begin{equation}\label{eq:our gro with K intro}
\max_{x_1,\ldots,x_n\in S^{n-1}}\sum_{i=1}^n\sum_{j=1}^n a_{ij}
\langle x_i,x_j\rangle \le K\max_{u_1,\ldots,u_n\in
\{v_1,\ldots,v_k\}}\sum_{i=1}^n\sum_{j=1}^n a_{ij}\langle
u_i,u_j\rangle.
\end{equation}
In Section~\ref{sec:ineq} we prove that~\eqref{eq:our gro with K
intro} holds with $K=\frac{1}{C(B)}$, where $B=\left(\langle
v_i,v_j\rangle\right)$ is the Gram matrix of $v_1,\ldots,v_k$, and
that this constant is sharp. This inequality is proved along the
following lines. Fix $n$ unit vectors $x_1,\ldots,x_n\in S^{n-1}$.
Let $G=(g_{ij})$ be a $(k-1)\times n$ random matrix whose entries
are i.i.d. standard Gaussian random variables. Let $A_1,\ldots,A_k\
\subseteq \R^{k-1}$ be a measurable partition of $\R^{k-1}$ at which
$C(B)$ is attained. Define a random choice of $u_i\in
\{v_1,\ldots,v_k\}$ by setting $u_i=v_\ell$ for the unique $\ell\in
\{1,\ldots,k\}$ such that $Gx_i\in A_\ell$. The fact
that~\eqref{eq:our gro with K intro} holds with $K=\frac{1}{C(B)}$
is a consequence of the following fact, which we prove in
Section~\ref{sec:ineq}:
\begin{equation}\label{eq:in expectation}
\E\left[\sum_{i=1}^n\sum_{j=1}^n a_{ij}\langle
u_i,u_j\rangle\right]\ge C(B)\sum_{i=1}^n\sum_{j=1}^n a_{ij} \langle
x_i,x_j\rangle.
\end{equation}
The crucial point in the proof of~\eqref{eq:in expectation} is the
following identity, proved in Lemma~\ref{lem:hermite} as a corollary
of the closed-form formula for the Poison kernel of the Hermite
polynomials: for every two measurable subsets $E,F\subseteq
\R^{k-1}$ and any two unit vectors $x,y\in \R^n$, we have
\begin{multline}\label{poisson identity intro}
\Pr\left[Gx\in E\ \mathrm{and}\ Gy\in
F\right]\\=\gamma_{k-1}(E)\gamma_{k-1}(F)+\langle
x,y\rangle\left\langle\int_Eud\gamma_{k-1}(u),\int_Fud\gamma_{k-1}(u)\right\rangle+\sum_{\ell=2}^\infty
\left\langle x^{\otimes \ell},y^{\otimes
\ell}\right\rangle\sum_{\substack{s\in
(\N\cup\{0\})^{k-1}\\s_1+\cdots+s_{k-1}=\ell}}\alpha_s(E)\alpha_s(F),
\end{multline}
for some real coefficients $\{\alpha_s(E)\}_{s\in
(\N\cup\{0\})^{k-1}},\{\alpha_s(F)\}_{s\in
(\N\cup\{0\})^{k-1}}\subseteq \R$. Here $\gamma_{k-1}$ denotes the
standard Gaussian measure on $\R^{k-1}$. The product structure of
the decomposition~\eqref{poisson identity intro} hints at the role
of the fact that $A$ is positive semidefinite in the proof
of~\eqref{eq:in expectation}---the complete details appear in
Section~\ref{sec:ineq}.

Once the generalized Grothendieck inequality~\eqref{eq:our gro} is
obtained with $K=\frac{1}{C(B)}$ it is simple to design the
algorithm whose existence is claimed in Theorem~\ref{thm:main
intro}, which is based on semidefinite programming---this is done in
Section~\ref{sec:alg}.

We shall now pass to an explanation of the hardness result in
Theorem~\ref{thm:main intro}. The Unique Games Conjecture, posed by
Khot in~\cite{Khot02}, is as follows. A { Unique Game} is an
optimization problem with an instance ${\L}= {\L}( G(V,W,E), n,
\{\pi_{vw}\}_{(v,w) \in E})$. Here $G(V,W,E)$ is a regular bipartite
graph with vertex sets $V$ and $W$ and edge set $E$. Each vertex is
supposed to receive a label from the set $\{1,\ldots, n\}$.
%where $n$ is thought of  as a large constant.
For every edge $(v,w) \in E$ with $v \in V$ and $w\in W$, there is a
given permutation $\pi_{vw}: \{1,\ldots,n\} \to \{1,\ldots, n\}$. A
labeling of the Unique Game instance is an assignment $ \rho: V \cup
W \to \{1,\ldots, n\}$. An edge $(v,w)$ is satisfied by a labeling
$\rho$ if and only if $\rho(v) = \pi_{vw}(\rho(w))$. The goal is to
find a labeling that maximizes the fraction of edges satisfied (call
this maximum ${\rm OPT}({\L}))$. We think of the number of labels
$n$ as a constant and the size of the graph $G(V,W,E)$ as the size
of the problem instance. The Unique Games Conjecture (UGC) asserts
that for  arbitrarily small constants $\e, \delta > 0$, there exists
a constant $n = n(\e, \delta)$ such that no polynomial time
algorithm can distinguish whether a Unique Games instance ${\L}=
{\L}( G(V,W,E), n, \{\pi_{vw}\}_{(v,w) \in W})$  satisfies ${\rm
OPT}({\L}) \leq \delta$ (soundness) or there exists a labeling such
that for $1-\e$ fraction of the vertices $v\in V$ all the edges
incident with $v$ are satisfied (completeness)\footnote{This version
of the UGC is not the standard version  as stated in~\cite{Khot02},
which only requires ${\rm OPT}({\L}) \geq 1-\eps$ in the
completeness. However, it was shown in~\cite{KR08} that this
seemingly stronger version of the UGC actually follows from the
original UGC---we will require this stronger statement in our
proofs.}.
 This
conjecture is (by now) a commonly used complexity assumption to
prove hardness of approximation results. Despite several recent
attempts to get better polynomial time approximation algorithms  for
the Unique Game problem (see the table in~\cite{CMM06} for a
description of known results), the unique games conjecture still
stands.

Our UGC hardness result follows the standard ``dictatorship test"
approach which is prevalent in PCP based hardness proofs, with a new
twist which seems to be of independent interest. Since the kernel
clustering problem is concerned with an assignment of one of $k$
labels to each of the rows of the matrix $A$, the natural setting of
our hardness proof is a dictatorship test for functions on
$\{1,\ldots,k\}^n$ taking values in $\{1,\ldots,k\}$ (this was
already the case in~\cite{KN08}). The general ``philosophy" of such
hardness proofs is to associate to every such function a certain
numerical parameter called the ``objective value" (which is adapted
to the optimization problem at hand). The general scheme is to show
that for some numbers $a,b>0$, if $f$ depends on only one coordinate
(i.e., it is a ``dictatorship") then the objective value of $f$ is
at least $a$, while if $f$ does not have any coordinate which is too
influential then the objective value of $f$ is at most $b+o(1)$ (the
$o(1)$ depends on the notion of having no influential coordinates
and its exact form is not important for the purpose of this
overview---we refer to Section~\ref{sec:UGC} for details). Once such
a result is proved, techniques from the theory of Probabilistically
Checkable Proofs can show that under a suitable complexity theoretic
assumption (in our case the UGC) no polynomial time algorithm can
achieve an approximation factor smaller than $\frac{a}{b}$.

Implicit to the above discussion is an underlying product
distribution on $\{1,\ldots,k\}^n$ with respect to which we measure
the influence of variables. In~\cite{KN08} the case of $B=I_k$ was
solved using the uniform distribution on $\{1,\ldots,k\}$. It turns
out that in order to prove the sharp hardness result in
Theorem~\ref{thm:main intro} we need to use a non-uniform
distribution which depends on the geometry of $B$. Namely, writing
$B$ as a Gram matrix $b_{ij}=\langle v_i,v_j\rangle$, recall that
$R(B)$ is the radius of the smallest Euclidean ball containing
$\{v_1,\ldots,v_k\}$ and $w(B)$ is the center of this ball. A simple
separation argument shows that $w(B)$ is in the convex hull of the
vectors in $\{v_1,\ldots,v_k\}$ whose distance from $w(B)$ is
exactly $R(B)$. Writing $w(B)$ as a convex combination of these
points and considering the coefficients of this convex combination
results in a probability distribution on $\{1,\ldots,k\}$. In our
hardness proof we use the $n$-fold product of (a small perturbation
of) this probability distribution as the underlying distribution on
$\{1,\ldots,k\}$ for our dictatorship test---see Figure 1 for a
schematic description of the situation described above. The full
details of this approach, including all the relevant definitions,
are presented in Section~\ref{sec:UGC}.

\begin{figure}[h]\label{fig:subu}
\begin{center}\includegraphics[scale=0.377]{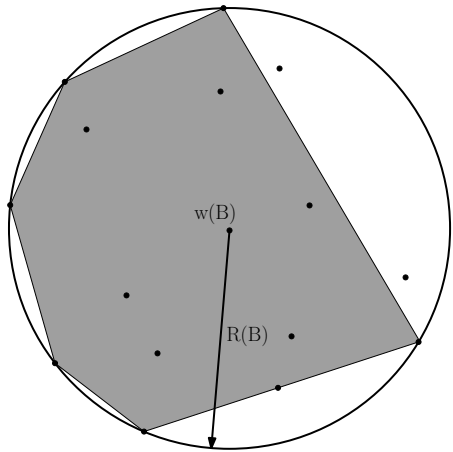}
\end{center}
 \caption{{{\em  The geometry of the test matrix $B$ induces a dictatorship test: the points above are the vectors
 $\{v_1,\ldots,v_k\}\subseteq \R^k$ such that $B$ is their Gram matrix.
 The ball depicted above  is the smallest
 Euclidean ball containing $\{v_1,\ldots,v_k\}$, $R(B)$ is its radius and $w(B)$ is its center. Then $w(B)$ is in the convex hull
 of the points in $\{v_1,\ldots,v_k\}$ which are at distance exactly $R(B)$ from $w(B)$. Writing $w(B)$ as a convex combination of
 these boundary points yields a distribution over the labels $\{1,\ldots,k\}$. Our dictatorship test corresponds to selecting
 a point from the $n$-fold power of this probability space and comparing the behavior
 of a certain ``objective value" (defined in equation~\eqref{eq:OBJ} below), which depends only on the singleton
 Fourier coefficients, for dictatorships and for functions with low influences.       }}} \label{fig:hull}
\end{figure}

\subsection{Open problems}\label{sec:open}

We end this introduction with a statement of some open problems.

\begin{question}\label{Q:
fixed parameter} {\em Theorem~\ref{thm:main intro} shows that the
UGC hardness threshold of the problem of computing $\Clust(A|B)$ for
a fixed hypothesis matrix $B$ equals $\frac{R(B)^2}{C(B)}$. It is
natural to ask if there is also a polynomial time algorithm which
outputs a clustering of $A$ whose value is within a factor of
$\frac{R(B)^2}{C(B)}$ of the optimal clustering. The issue is that
our rounding algorithm uses the partition $\{A_1,\ldots,A_k\}$ of
$\R^{k-1}$ at which $C(B)$ is attained. In
Section~\ref{sec:simplices} we study this optimal partition, and
show that it has a relatively simple structure rather than being
composed of general measurable sets: it corresponds to cones which
are induced by the faces of a simplex. This information allows us to
compute efficiently a partition which comes as close as we wish to
the optimal partition when $k$ is fixed, or grows slowly with $n$
(to be safe lets just say for the sake of argument that $k\approx
\log\log n$ works). We refer to Remark~\ref{rem:how to compute} for
details. We currently do not know if there is polynomial time
rounding algorithm when, say, $k\approx \sqrt{n}$. Given $\e>0$, is
there an algorithm which, given $A$ and $B$, computes $\Clust(A|B)$
to within a factor of $(1+\e)\frac{R(B)^2}{C(B)}$, and runs in time
which is polynomial in both $n$ and $k$ (and maybe even $1/\e$)?}
\end{question}

\begin{question}\label{Q:propeller}
{\em We remind the reader that the Propeller Conjecture remains
open. This conjecture is about the value of $C(I_k)$ when $k\ge 4$.
It states that the partition at which $C(I_k)$ is attained is
actually much simpler than what one might initially expect: only $3$
of the sets have positive measure and they form a cylinder over a
planar $120^\circ$ ``propeller". We refer to~\cite{KN08} for a
precise formulation and some evidence for the validity of the
Propeller Conjecture.}
\end{question}

\begin{question}\label{Q:not centered}
{\em The kernel clustering problem was stated in~\cite{SSGB07} for
matrices $A$ which are centered. This makes sense from the
perspective of machine learning, but it seems meaningful to also ask
for the UGC hardness threshold of the same problem when $A$ is not
assumed to be centered. In the present paper we did not investigate
this case at all, and it seems that the exact UGC hardness threshold
when $A$ is not necessarily centered is not known for any
interesting hypothesis matrix $B$. Note that in~\cite{KN08} we
showed that there is a constant factor polynomial time approximation
algorithm when $A$ is not necessarily centered: we obtained
in~\cite{KN08} an approximation guarantee of $1+\frac{3\pi}{2}$ in
this case, but this is probably suboptimal.}
\end{question}

\section{Preliminaries on the parameter $C(B)$}\label{sec:simplices}

Let $B=(b_{ij})_{i,j=1}^k\in M_k(\R)$ be a $k\times k$ symmetric
positive semidefinite matrix. In what follows we fix $k\ge 2$ and
the matrix $B$. We also fix vectors $v_1,\ldots,v_k\in \R^k$ for
which $b_{ij}=\langle v_i,v_j\rangle$ for all $i,j\in
\{1,\ldots,k\}$.

Let $\gamma_n$ denote the standard Gaussian measure on $\R^n$, i.e.,
the density of $\gamma_n$ is
$\frac{1}{(2\pi)^{n/2}}e^{-\|x\|_2^2/2}$. We denote by $H_k$ the
Hilbert space $L_2(\gamma_n)\oplus L_2(\gamma_n)\oplus \cdots\oplus
L_2(\gamma_n)$ ($k$ times) and we consider the convex subset
$\Delta_k(\gamma_n)\subseteq H_k$ give by:
\begin{equation}\label{eq:def simplex}
\Delta_k(\gamma_n)\coloneqq \left\{(f_1,\ldots,f_k)\in H_k:\ \forall
j\in \{1,\ldots,k\}\ f_j\ge 0\ \wedge\ \sum_{j=1}^kf_j= 1\right\}.
\end{equation}

Define: \begin{equation}\label{eq:def cnB}
 C(n,B)\coloneqq \sup_{(f_1,\ldots,f_k)\in \Delta_k(\gamma_n)}
 \sum_{i=1}^k\sum_{j=1}^k b_{ij} \cdot
 \left\langle \int_{\R^n}xf_i(x)d\gamma_n(x), \int_{\R^n}xf_j(x)d\gamma_n(x)
 \right\rangle.
\end{equation}

The following lemma is a variant of Lemma 3.1 in~\cite{KN08} (but
see Remark~\ref{remark:simplex with 0} for an explanation of a
subtle difference). It simply states that the supremum
in~\eqref{eq:def cnB} is attained at a $k$-tuple of functions which
correspond to a partition of $\R^n$.

\begin{lemma}\label{lemma:cnb-partition}
  \label{lem:maximizer exists} There exist disjoint measurable sets $A_1,\ldots,A_k\subseteq
 \R^n$ such that $A_1\cup A_2\cup\cdots \cup A_k=\R^n$ and
$$
\sum_{i=1}^k\sum_{j=1}^k b_{ij} \cdot \left\langle
\int_{A_j}xd\gamma_n(x) , \int_{A_j}xd\gamma_n(x) \right\rangle
=C(n,B).
$$
 \end{lemma}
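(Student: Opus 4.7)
The plan is to reformulate the problem as maximising a weak-$*$ continuous convex functional on a weak-$*$ compact convex subset of $L_\infty(\gamma_n)^k$, and then to appeal to Bauer's maximum principle together with an identification of the extreme points of $\Delta_k(\gamma_n)$ as indicator tuples. First, for existence of a maximiser, note that every $(f_1,\dots,f_k) \in \Delta_k(\gamma_n)$ has each $f_j$ taking values in $[0,1]$, so $\Delta_k(\gamma_n)$ sits inside the weak-$*$ compact unit ball of $L_\infty(\gamma_n)^k$ (Banach--Alaoglu). Its defining constraints $f_j \ge 0$ a.e.\ and $\sum_j f_j \equiv 1$ can be expressed by pairing against non-negative, respectively arbitrary, test functions $\varphi \in L_1(\gamma_n)$, and so are weak-$*$ closed; hence $\Delta_k(\gamma_n)$ is weak-$*$ compact. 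The first-moment vector $z_i(f) \coloneqq \int_{\R^n} x f_i(x)\,d\gamma_n(x)$ is weak-$*$ continuous in $f$ because each coordinate test function $x_p$ lies in $L_1(\gamma_n)$, so the objective $\Phi(f) \coloneqq \sum_{i,j=1}^k b_{ij}\langle z_i(f), z_j(f)\rangle$ is a continuous function of $(z_1,\dots,z_k)$, hence weak-$*$ continuous on $\Delta_k(\gamma_n)$, and therefore attains its supremum.

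To obtain the partition structure, I exploit that $B \succeq 0$ makes the quadratic form $(z_1,\dots,z_k) \mapsto \sum_{i,j} b_{ij}\langle z_i, z_j\rangle$ on $(\R^n)^k$ positive semidefinite (its matrix is $B \otimes I_n$). Composing with the linear map $f \mapsto z(f)$ shows that $\Phi$ is a convex function on $\Delta_k(\gamma_n)$, and Bauer's maximum principle then forces the maximum to be attained at an extreme point. It remains to identify the extreme points of $\Delta_k(\gamma_n)$ as exactly the indicator tuples $(\mathbf{1}_{A_1},\dots,\mathbf{1}_{A_k})$ arising from measurable partitions $\{A_1,\dots,A_k\}$ of $\R^n$: such a tuple is extreme because $\mathbf{1}_{A_i} = (g_i + h_i)/2$ with $g_i, h_i \in [0,1]$ pins both $g_i, h_i$ to $\mathbf{1}_{A_i}$ a.e.; conversely, if some $f_i$ takes values in $(0,1)$ on a positive-measure set, then the identity $\sum_\ell f_\ell \equiv 1$ yields a $j \ne i$ and a set $E$ with $\gamma_n(E) > 0$ on which both $f_i, f_j$ lie in $[\eta, 1-\eta]$ for some $\eta > 0$, and the perturbation transferring mass $\pm\delta \mathbf{1}_E$ between coordinates $i$ and $j$ exhibits $f$ as a nontrivial midpoint in $\Delta_k(\gamma_n)$ for small $\delta > 0$.

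The main (and rather minor) obstacle is choosing the topology in which both $\Delta_k(\gamma_n)$ is compact and the first-moment map is continuous; the weak-$*$ topology on $L_\infty(\gamma_n)^k$ does the job once one notes that each coordinate function $x_p$ belongs to $L_1(\gamma_n)$. A purely variational alternative --- writing down first-order Lagrange conditions to deduce that each $f_\ell$ must concentrate on the set where $x \mapsto \langle x, \sum_i b_{i\ell} z_i(f)\rangle$ is maximal --- would also produce a partition, but the extreme-point argument is cleaner and avoids bookkeeping around ties.
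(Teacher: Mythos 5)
Your proof is correct and follows essentially the same route as the paper's: compactness of $\Delta_k(\gamma_n)$ in a suitable weak topology, plus weak continuity and convexity of the objective, force the maximum to an extreme point, and the extreme points are identified (by the same perturbation argument) as indicator tuples of measurable partitions. The only cosmetic difference is that you work with the weak-$*$ topology on $L_\infty(\gamma_n)^k$ via Banach--Alaoglu and cite Bauer's maximum principle by name, whereas the paper uses the weak topology on $H_k = L_2(\gamma_n)^k$ and a direct computation for convexity; on the bounded set $\Delta_k(\gamma_n)$ these topologies agree, so the arguments are interchangeable.
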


\begin{proof} Define $\Psi:\Delta_k(\gamma_n)\to \R$ by
\begin{equation}\label{eq:def psi}
\Psi(f_1,\ldots,f_k)\coloneqq \sum_{i=1}^k\sum_{j=1}^k b_{ij} \cdot
 \left\langle \int_{\R^n}xf_i(x)d\gamma_n(x), \int_{\R^n}xf_j(x)d\gamma_n(x)
 \right\rangle.
\end{equation}
We first observe that $\Psi$ is a convex function. Indeed, fix
$\lambda\in [0,1]$ and $(f_1,\ldots,f_k),(g_1,\ldots,g_k)\in
\Delta_k(\gamma_n)$. Denote $z_i\coloneqq \int_{\R^n} x
f_i(x)d\gamma_n(x)$ and $w_i\coloneqq \int_{\R^n} x
g_i(x)d\gamma_n(x)$ for every $i\in \{1,\ldots,k\}$. Then:
\begin{eqnarray*}
&&\!\!\!\!\!\!\!\!\!\!\!\!\!\!\!\!\lambda\Psi(f_1,\ldots,f_k)+(1-\lambda)\Psi(g_1,\ldots,g_k)-\Psi(\lambda
f_1+(1-\lambda)g_1,\ldots,\lambda f_k+(1-\lambda)g_k)\\&=&
\sum_{i=1}^k\sum_{j=1}^k\langle v_i,v_j\rangle\left(\lambda\langle
z_i,z_j\rangle+(1-\lambda)\langle w_i,w_j\rangle-\langle \lambda
z_i+(1-\lambda)w_i,\lambda z_j+(1-\lambda)w_j\rangle\right)\\
&=& \lambda(1-\lambda)\sum_{i=1}^k\sum_{j=1}^k\langle
v_i,v_j\rangle\langle z_i-w_i,z_j-w_j\rangle\\
&=& \lambda(1-\lambda)\left\|\sum_{i=1}^n v_i\otimes
(z_i-w_i)\right\|_2^2\ge 0.
\end{eqnarray*}

Since $\Delta_k(\gamma_n)$ is a weakly compact subset of $H_k$ and
$\Psi$ is weakly continuous and convex, $\Psi$ attains its maximum
(which equals $C(n,B)$) on $\Delta_k(\gamma_n)$ at an extreme point
of $\Delta_k(\gamma_n)$, say at $(f_1^*,\ldots,f_k^*)\in
\Delta_k(\gamma_n)$. It follows that there exist measurable sets
$A_1,\ldots,A_k\subseteq \R^n$ which form a partition of $\R^n$ such
that $(f_1^*,\ldots,f_k^*)=(\1_{A_1},\ldots,\1_{A_k})$ almost
everywhere\footnote{To see this standard fact observe that otherwise
there would be some $A\subseteq \R^n$ of positive measure, $\e\in
(0,1/2)$, and distinct $i,j\in\{1,\ldots,k\}$ such that
$f_i\1_{A},f_j\1_{A}\in (\e,1-\e)$. But $(f_1^*,\ldots,f_k^*)$ would
then not be an extreme point since it is the average of
$(g_1,\ldots,g_k),(h_1,\ldots,h_k)\in
\Delta_k(\gamma_n)\setminus\{(f_1^*,\ldots,f_k^*)\}$, where
$g_\ell=h_\ell=f_\ell^*$ for $\ell\in \{1,\ldots,k\}\setminus
\{i,j\}$ and $g_i=(f_i^*+\e)\1_A+f_i^*\1_{\R^n\setminus A}$,
$h_i=(f_i^*-\e)\1_A+f_i^*\1_{\R^n\setminus A}$,
$g_j=(f_j^*-\e)\1_A+f_j^*\1_{\R^n\setminus A}$,
$h_j=(f_j^*+\e)\1_A+f_j^*\1_{\R^n\setminus A}$.}, as
required.\end{proof}

\begin{remark}\label{remark:simplex with 0} {\em In~\cite{KN08} a stronger
result was proved when $B=I_k$ (the $k\times k$ identity matrix).
Namely, using the notation of the proof of
Lemma~\ref{lemma:cnb-partition} it was shown that the maximum of
$\Psi$ on the larger convex set
$$
\widetilde{\Delta_k(\gamma_n)}\coloneqq \left\{(f_1,\ldots,f_k)\in
H_k:\ \forall j\in \{1,\ldots,k\}\ f_j\ge 0\ \wedge\
\sum_{j=1}^kf_j\le 1\right\}
$$
is also attained at
$(f_1^*,\ldots,f_k^*)=(\1_{A_1},\ldots,\1_{A_k})$ for some
measurable sets $A_1,\ldots,A_k\subseteq \R^n$ which form a
partition of $\R^n$. It turns out that this stronger fact helps to
slightly simplify the proof of the corresponding UGC hardness
result. However, we do not know how to prove this stronger statement
for general $B$, so we formulated the weaker statement in
Lemma~\ref{lemma:cnb-partition}, at the cost of needing to modify
our proof of the UGC hardness result for general $B$ in
Section~\ref{sec:UGC}.

The same extreme point argument as in the proof of
Lemma~\ref{lemma:cnb-partition} shows that the maximum of $\Psi$ on
$\widetilde{\Delta_k(\gamma_n)}$ is attained at
$(f_1^*,\ldots,f_k^*)=(\1_{A_1},\ldots,\1_{A_k})$ for some disjoint
measurable sets $A_1,\ldots,A_k\subseteq \R^n$, but now it does not
follow that they necessarily cover all of $\R^n$. When $B=I_k$ it
can be shown as in~\cite{KN08} that these sets do cover $\R^n$. The
same statement is true when $B$ is diagonal, as we now show by
arguing as in the proof in~\cite{KN08}, but we do not know if it is
true for general $B$. So, assume that $B$ is diagonal with positive
diagonal entries $(b_1,\ldots,b_k)$. Let $A=\R^n\setminus
\bigcup_{i=1}^k A_k$. Denote $z_j\coloneqq \int_{A_j} x
d\gamma_n(x)$ and $w=\int_Ax d\gamma_n(x)$. Note that
$w+z_1+\cdots+z_k=0$. If $w=0$ then $\Psi$ attains its maximum on
the partition $\{A\cup A_1,A_2,\ldots,A_k\}$, so assume for the sake
of contradiction that $w\neq 0$. For every $i\in \{1,\ldots,k\}$ we
have:
\begin{multline*}
\sum_{j=1}^n b_j\|z_j\|_2^2=\Psi(\1_{A_1},\ldots,\1_{A_k})\ge
\Psi(\1_{A_1},\ldots,\1_{A_{i-1}},\1_{A\cup
A_i},\1_{A_{i+1}},\ldots,\1_{A_k})\\=\sum_{\substack{1\le j\le
k\\j\neq i}}b_j\|z_j\|_2^2+b_i\|z_i+w\|_2^2=\sum_{j=1}^n
b_j\|z_j\|_2^2+ 2b_i\langle
z_i,w\rangle+b_i\|w\|_2^2.\end{multline*} Thus $2\langle
z_i,w\rangle+\|w\|_2^2\le 0$, and if we sum this inequality over
$i\in \{1,\ldots,k\}$ while recalling that $w=-\sum_{i=1}^kz_i$ we
see that $(k-2)\|w\|_2^2\le 0$, which is a contradiction. Note that
for general $B$ the same argument shows that for all
$i\in\{1,\ldots,k\}$ we have $2\sum_{j=1}^k b_{ij} \left\langle
z_j,w\right\rangle+b_{ii}\|w\|_2^2\le 0$. These inequalities do not
seem to lend themselves to the same type of easy contradiction as in
the case of diagonal matrices. \fin}
\end{remark}

The proof of the following lemma is an obvious midification of the
proof of Lemma 3.2 in~\cite{KN08}.

\begin{lemma}\label{lem:old reduction to k-1}
If $n\ge k-1$ then $C(n,B)=C(k-1,B)$.
\end{lemma}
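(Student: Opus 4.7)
The identity $C(n,B)=C(k-1,B)$ is a dimension-reduction statement, and the natural approach is to prove the two inequalities separately using the rotational invariance and the product structure of the Gaussian measure. The key geometric input is that for any $(f_1,\ldots,f_k)\in\Delta_k(\gamma_n)$ the Gaussian moments $z_i\coloneqq\int_{\R^n}xf_i(x)d\gamma_n(x)$ satisfy $z_1+\cdots+z_k=\int_{\R^n}x\,d\gamma_n(x)=0$, so they lie in a linear subspace of $\R^n$ of dimension at most $k-1$.

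The easy direction $C(n,B)\ge C(k-1,B)$ I would prove by lifting: given $(f_1,\ldots,f_k)\in\Delta_k(\gamma_{k-1})$, define $\tilde f_i(x_1,\ldots,x_n)\coloneqq f_i(x_1,\ldots,x_{k-1})$. Then $(\tilde f_1,\ldots,\tilde f_k)\in\Delta_k(\gamma_n)$ by Fubini, and its moments in $\R^n$ are precisely the original moments in $\R^{k-1}$ padded by zeros. Since the Euclidean inner product is preserved under this embedding, $\Psi(\tilde f_1,\ldots,\tilde f_k)=\Psi(f_1,\ldots,f_k)$, and the inequality follows.

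For the reverse inequality $C(n,B)\le C(k-1,B)$, I would start with any $(f_1,\ldots,f_k)\in\Delta_k(\gamma_n)$ and let $V\subseteq\R^n$ be the span of the moments $z_1,\ldots,z_k$, so $\dim V\le k-1\le n$. Choose an orthogonal transformation $U\in O(n)$ that maps $V$ into $\R^{k-1}\times\{0\}^{n-k+1}$, and replace each $f_i$ by $f_i\circ U^{-1}$. By the rotational invariance of $\gamma_n$, this substitution preserves membership in $\Delta_k(\gamma_n)$, transforms the moments by $z_i\mapsto Uz_i$, and hence preserves $\Psi$. After this rotation, every $Uz_i$ has its last $n-k+1$ coordinates equal to zero. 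Now I would integrate out those last coordinates: set
\[
g_i(y)\coloneqq\int_{\R^{n-k+1}}(f_i\circ U^{-1})(y,w)\,d\gamma_{n-k+1}(w)\qquad(y\in\R^{k-1}).
\]
Then $(g_1,\ldots,g_k)\in\Delta_k(\gamma_{k-1})$, and Fubini together with the vanishing of the last coordinates of $Uz_i$ shows that $\int_{\R^{k-1}}yg_i(y)d\gamma_{k-1}(y)$ equals the first $k-1$ coordinates of $Uz_i$, which, under the natural isometric identification $\R^{k-1}\hookrightarrow\R^{k-1}\times\{0\}$, has the same Euclidean inner products as $Uz_i$. Therefore $\Psi(g_1,\ldots,g_k)=\Psi(f_1\circ U^{-1},\ldots,f_k\circ U^{-1})=\Psi(f_1,\ldots,f_k)$, and taking a supremum gives $C(n,B)\le C(k-1,B)$.

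No step here looks genuinely hard; the only delicate point is to be careful that the rotation $U$ and the projection commute correctly with the Gaussian integration, which is exactly what rotational invariance and the product decomposition $\gamma_n=\gamma_{k-1}\otimes\gamma_{n-k+1}$ guarantee. This is also why the argument is an ``obvious modification'' of Lemma~3.2 of~\cite{KN08}: the matrix $B$ only enters through the inner products $\langle v_i,v_j\rangle$ of the moments, and those are preserved at every step.
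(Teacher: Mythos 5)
Your proof is correct and follows essentially the same route as the paper's: both exploit $\sum_j z_j=0$ to place the moments in a subspace $V$ of dimension at most $k-1$, then condition on (i.e., integrate out) the complementary Gaussian directions to descend to $\Delta_k(\gamma_{k-1})$ without changing $\Psi$. The only cosmetic difference is that you carry out the reduction for an arbitrary $(f_1,\ldots,f_k)\in\Delta_k(\gamma_n)$ (thereby not needing the existence of a maximizer in the $\le$ direction), and you make the identification $V\hookrightarrow\R^{k-1}$ explicit via a rotation $U\in O(n)$, whereas the paper works directly with the maximizing partition from Lemma~\ref{lem:maximizer exists} and defines $g_j(x)=\gamma_{V^\perp}\bigl((A_j-x)\cap V^\perp\bigr)$ intrinsically on $V$; the two are the same conditioning.
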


\begin{proof}
The inequality $C(n,B)\ge C(k-1,B)$ is easy since for every
$(f_1,\ldots,f_k)\in \Delta_k(\gamma_{k-1})$ we can define
$\left(\widetilde f_1,\ldots\widetilde f_k\right)\in
\Delta_k(\gamma_n)$ by $\widetilde f_j(x,y)=f_j(x)$ (thinking here
of $\R^n$ as $\R^{k-1}\times \R^{n-k+1}$). Then for all $j\in
\{1,\ldots,k\}$ we have
$\int_{\R^{k-1}}xf_j(x)d\gamma_{k-1}(x)=\int_{\R^n}x\widetilde
f_j(x)d\gamma_n(x)$, implying that $\Psi\left(\widetilde
f_1,\ldots\widetilde f_k\right)=\Psi\left(f_1,\ldots,f_k\right)$.

In the reverse direction, by Lemma~\ref{lem:maximizer exists} there
is a measurable partition $A_1,\ldots,A_k$ of $\R^n$ such that if we
define $z_j\coloneqq \int_{A_j}xd\gamma_n(x)\in \R^n$ then we have
$\sum_{i=1}^k\sum_{j=1}^k b_{ij}\left\langle
z_i,z_j\right\rangle=C(n,B)$. Note that $ \sum_{j=1}^k z_j=0$. Hence
the dimension of the subspace $V\coloneqq
\mathrm{span}\{z_1,\ldots,z_k\}$ is $d\le k-1$. Define
$g_1,\ldots,g_k:V\to [0,1]$ by $
g_j(x)=\gamma_{V^{\perp}}\left((A_j-x)\cap V^\perp\right)$. Then
$(g_1,\ldots,g_k)\in \Delta_k(\gamma_{V})$, so that
\begin{eqnarray*}
C(k-1,B)&\ge& C(d,B)\\&\ge& \sum_{i=1}^k\sum_{j=1}^k b_{ij}
\left\langle \int_{V} xg_i(x)d\gamma_V(x),\int_{V}
xg_j(x)d\gamma_V(x)\right\rangle\\&=&\sum_{i=1}^k\sum_{j=1}^k b_{ij}
\left\langle \int_{V}\int_{V^\perp}\1_{A_i}(x+y)
xd\gamma_V(x)d\gamma_{V^\perp}(y),\int_{V}\int_{V^\perp}\1_{A_j}(x+y)
xd\gamma_V(x)d\gamma_{V^\perp}(y)\right\rangle \\&=&
\sum_{i=1}^k\sum_{j=1}^k b_{ij} \left\langle
\int_{A_i}\mathrm{Proj}_{V}(w)d\gamma_n(w),\int_{A_j}\mathrm{Proj}_{V}(w)d\gamma_n(w)\right\rangle
\\&=&\sum_{i=1}^k\sum_{j=1}^k
b_{ij}\left\langle\mathrm{Proj}_{V}(z_i),\mathrm{Proj}_{V}(z_j)\right\rangle\\
&=& \sum_{i=1}^k\sum_{j=1}^k b_{ij}\left\langle
z_i,z_j\right\rangle=C(n,B),
\end{eqnarray*}
as required.
\end{proof}
In light of Lemma~\ref{lem:old reduction to k-1} we define
$C(B)\coloneqq C(k-1,B)$. We shall now prove an analogue of Lemma
3.3 in~\cite{KN08} which gives structural information on the
partition $\{A_1,\ldots, A_k\}$ of $\R^{k-1}$ at which $C(B)$ is
attained. We first recall some notation and terminology
from~\cite{KN08}. Given distinct $z_1,\ldots,z_k\in \R^{k-1}$ and
$j\in \{1,\ldots, k\}$ define a set $P_j(z_1,\ldots,z_k)\subseteq
\R^k$ by
$$
P_j(z_1,\ldots,z_k)\coloneqq \left\{x\in \R^{k}:\ \langle
x,z_j\rangle =\max_{i\in \{1,\ldots,k\}} \langle
x,z_i\rangle\right\}.
$$
Thus $\left\{P_j(z_1,\ldots,z_k)\right\}_{j=1}^k$ is a partition of
$\R^{k-1}$  which we call the simplicial partition induced by
$z_1,\ldots,z_k$ (strictly speaking the elements of this partition
are not disjoint, but they intersect at sets of measure $0$).

\begin{lemma}\label{lem:simplicial}  Let $A_1,\ldots,A_k\subseteq
\R^{k-1}$ be a partition into measurable sets such that  if we set
$z_j\coloneqq \int_{A_j}xd\gamma_{k-1}(x)$ then
\begin{equation}\label{eq:maximizer}C(B)=\sum_{i=1}^k\sum_{j=1}^k b_{ij}\left\langle
z_i,z_j\right\rangle.\end{equation}
 Assume also that this
partition is minimal in the sense that the number of elements of
positive measure in this partition is minimum among all the possible
partitions satisfying~\eqref{eq:maximizer}. Define $$J\coloneqq
\left\{j\in \{1,\ldots, k\}:\ \gamma_{k-1}(A_j)>0\right\}$$ and set
$|J|=\ell$. Then up to an orthogonal transformation $\{z_j\}_{j\in
J}\subseteq \R^{\ell-1}$ and the vectors $\{z_j\}_{j\in J}$ are
non-zero and distinct. Moreover, if we define $\{w_j\}_{j\in
J}\subseteq \R^{\ell-1}$ by
\begin{equation}\label{eq:def w_i}
w_j\coloneqq \sum_{s\in J} b_{js}z_s,
\end{equation}
then the vectors $\{w_j\}_{j\in J}$ are distinct and for each $j\in
J$ we have
\begin{equation}\label{eq:the cones}
A_j=P_j\big((w_i)_{i\in J}\big)\times \R^{k-\ell}
\end{equation}
up to sets of measure zero.
\end{lemma}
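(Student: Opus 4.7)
The plan is to study the effect of two types of perturbations of the optimal minimal partition $\{A_1,\dots,A_k\}$: (i) an infinitesimal transfer of mass from one cell to another, which will yield a first-order condition pinning down the geometry of the cells, and (ii) a finite ``merge'' of one positive-measure cell into another, which either strictly increases $\Psi$ (contradicting maximality) or preserves $\Psi$ while reducing $|J|$ (contradicting minimality). The first perturbation will give the inclusion $A_j\subseteq P_j\big((w_s)_{s\in J}\big)$ and the second will establish all the non-degeneracy assertions. Throughout I will use the basic fact that $z_j=0$ for $j\notin J$, so $\sum_{j\in J}z_j=\int_{\R^{k-1}}x\,d\gamma_{k-1}(x)=0$; this identity will drive the final dimensional reduction.

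For the first-order step, I would shift a small portion of $A_j$ near a point $x$ into $A_i$: the new $z_i$ and $z_j$ differ from the old ones by $\pm\delta$ with $\delta\approx\mu x$, and a direct expansion using the symmetry of $B$ gives the leading-order change $2\mu\langle x,w_i-w_j\rangle$. Maximality therefore forces $\langle x,w_j\rangle\ge\langle x,w_i\rangle$ for almost every $x\in A_j$ and every $i$, proving $A_j\subseteq P_j\big((w_s)_{s\in J}\big)$ up to measure zero. The same computation carried out exactly for the full merge $A_i':=A_i\cup A_j$, $A_j':=\emptyset$ (all other cells unchanged) will yield the finite identity
\[
\Psi'-\Psi=2\langle z_j,w_i-w_j\rangle+\|z_j\|_2^2\|v_i-v_j\|_2^2.
\]
Applied to distinct $i,j\in J$, this single identity delivers three consequences: (i) if $z_j=0$ then $\Psi'=\Psi$ and $|J|$ decreases, contradicting minimality, so $z_j\neq 0$; (ii) if $v_i=v_j$ then $w_i=w_j$ and both terms vanish, again giving $\Psi'=\Psi$ and contradicting minimality, so $v_i\neq v_j$; and (iii) if $w_i=w_j$ the identity reduces to $\|z_j\|_2^2\|v_i-v_j\|_2^2>0$ by (i) and (ii), contradicting maximality, so $w_i\neq w_j$. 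With $\{w_s\}_{s\in J}$ distinct, the cones $P_s\big((w_t)_{t\in J}\big)$ are pairwise disjoint up to measure zero and cover $\R^{k-1}$, so the inclusions $A_s\subseteq P_s$ combined with $\sum_{s\in J}\gamma_{k-1}(A_s)=1$ upgrade to the equality $A_j=P_j\big((w_s)_{s\in J}\big)$ up to measure zero.

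The hard part will be the distinctness of the $z_j$'s, which does not follow from a single merge. The trick I plan to use is to apply the merge identity in both directions simultaneously: the inequality from the merge $A_i\leftarrow A_j$ is $2\langle z_j,w_i-w_j\rangle+\|z_j\|_2^2\|v_i-v_j\|_2^2\le 0$, and the inequality from the reverse merge $A_j\leftarrow A_i$ is the same with $i$ and $j$ swapped. Adding the two, and using $z_i=z_j$ to cancel the linear terms, leaves $2\|z_i\|_2^2\|v_i-v_j\|_2^2\le 0$, which is impossible by (i) and (ii). For the dimensional reduction, the identity $\sum_{j\in J}z_j=0$ forces the $\ell$ vectors $\{z_j\}_{j\in J}$ into a subspace of dimension at most $\ell-1$; an orthogonal change of coordinates then places them, and hence all $w_j=\sum_{s\in J}b_{js}z_s$, inside $\R^{\ell-1}\times\{0\}^{k-\ell}$, so that each $P_j\big((w_s)_{s\in J}\big)$ depends only on the first $\ell-1$ coordinates and factorizes as $\tilde P_j\times\R^{k-\ell}$ with $\tilde P_j\subseteq\R^{\ell-1}$, matching the stated form.
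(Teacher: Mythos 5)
Your proposal is correct and takes essentially the same approach as the paper: the same merge perturbation yields the same identity $\Psi'-\Psi=2\langle w_i-w_j,z_j\rangle+\|v_i-v_j\|_2^2\|z_j\|_2^2$, and the same summing trick proves $z_i\neq z_j$. The only differences are stylistic: the paper derives all the non-degeneracy facts from the single \emph{strict} inequality that minimality forces (so it never needs to separately observe $v_i\neq v_j$), and it establishes the cone containment $A_i\subseteq P_i$ via a finite perturbation (moving a set $E$ of positive measure) rather than your infinitesimal Lebesgue-point argument, which avoids having to make the ``small shift near a point'' rigorous.
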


\begin{proof}
Since $\sum_{j\in J}\1_{A_j}=1$ almost everywhere we have
$\sum_{j\in J}z_j=0$. Thus the dimension of the span of
$\{z_j\}_{j\in J}$ is at most $|J|-1=\ell-1$, and by applying an
orthogonal transformation we may assume that $\{z_j\}_{j\in
J}\subseteq  \R^{\ell-1}$. Also, for every distinct $i,j\in J$
replace $A_i$ by $A_i\cup A_j$ and $A_j$ by the empty set and obtain
a partition of $\R^{k-1}$ which contains exactly $\ell-1$ elements
of positive measure and for which we have (by the minimality of
$\ell$):
\begin{eqnarray*}
C(B)&>& \sum_{s,t\in J\setminus\{i,j\}}b_{st}\left\langle
z_s,z_t\right\rangle+2\sum_{s\in J\setminus\{i,j\}}
b_{is}\left\langle z_s,z_i+z_j\right\rangle +
b_{ii}\left\|z_i+z_j\right\|_2^2\\ &=&
%\sum_{s,t\in
%J}b_{st}\left\langle z_s,z_t\right\rangle+2\sum_{s\in
%J\setminus\{i,j\}}\left(b_{is}-b_{js}\right)\left\langle
%z_s,z_j\right\rangle+2\left(b_{ii}-b_{ij}\right)\left\langle
%z_i,z_j\right\rangle+\left(b_{ii}-b_{jj}\right)\|z_j\|_2^2\\&=&
\sum_{s,t\in J}b_{st}\left\langle z_s,z_t\right\rangle+2\sum_{s\in
J}\left(b_{is}-b_{js}\right)\left\langle
z_s,z_j\right\rangle+\left(b_{ii}+b_{jj}-2b_{ij}\right)\|z_j\|_2^2\\
&=& C(B)+2\left\langle
w_i-w_j,z_j\right\rangle+\|v_i-v_j\|_2^2\cdot\|z_j\|_2^2,
\end{eqnarray*}
where we used the fact that $b_{st}=\left\langle
v_s,v_t\right\rangle$. Thus
\begin{equation}\label{eq:distinct ij}
2\left\langle
w_i-w_j,z_j\right\rangle+\|v_i-v_j\|_2^2\cdot\|z_j\|_2^2<0,
\end{equation}
and by symmetry we also have the inequality:
\begin{equation}\label{eq:distinct ij2}
2\left\langle
w_j-w_i,z_i\right\rangle+\|v_i-v_j\|_2^2\cdot\|z_i\|_2^2<0.
\end{equation}
It follows in particular from~\eqref{eq:distinct ij}
and~\eqref{eq:distinct ij2} that $z_i$ and $z_j$ are non-zero and
that $w_i\neq w_j$. Moreover if we sum~\eqref{eq:distinct ij}
and~\eqref{eq:distinct ij2} we get that
$$
2\left\langle
w_i-w_j,z_j-z_i\right\rangle+\|v_i-v_j\|_2^2\left(\|z_i\|_2^2+\|z_j\|_2^2\right)<0
$$
which implies that $z_i\neq z_j$.

The above reasoning implies in particular that
$\left\{P_j\big((w_i)_{i\in J}\big)\times \R^{k-\ell}\right\}_{j\in
J}$ is a partition of $\R^{k-1}$ (up to pairwise intersections at
sets of measure $0$). Assume for the sake of contradiction that
these exist
 $i\in J$ such that
$$
\gamma_{k-1}\left(A_i\setminus \left(P_i\big((w_s)_{s\in
J}\big)\times \R^{k-\ell}\right)\right)>0.
$$
Arguing as in the proof of Lemma 3.3 in~\cite{KN08} we see that
 there exists $\e>0$ and $j\in J\setminus \{i\}$
such that if we denote $E\coloneqq \left\{x\in A_i:\ \langle
x,z_j\rangle \ge \left\langle x,z_i\right\rangle+ \e\right\}$ then
$\gamma_{k-1}(E)>0$.
%By the Lebesgue density theorem it follows that
%there exits $x\in E$ such that
%$$
%\lim_{r\to 0^+} \frac{\gamma_{k-1}\left(B(x,r)\cap
%E\right)}{\gamma_{k-1}(B(x,r)}=1,
%$$
%where $B(x,r)$ denotes the Euclidean ball of radius $r$ centered at
%$x$. Fix $r>0$ and denote $E_r\coloneqq B(x,r)\cap E$.

Define a partition $\widetilde A_1,\ldots \widetilde A_k$ of
$\R^{k-1}$ by
$$
\widetilde A_r\coloneqq \left\{\begin{array}{ll}A_r &
r\notin \{i,j\}\\
A_i\setminus E& r=i\\
A_j\cup E & r=j.\end{array}\right.
$$
Then for $w\coloneqq \int_{E}xd\gamma_{k-1}(x)$ we have
\begin{eqnarray*}
C(B)&\ge& \sum_{s,t\in J}b_{st} \left\langle\int_{\widetilde A_s}
xd\gamma_{k-1}(x),\int_{\widetilde A_t}
xd\gamma_{k-1}(x)\right\rangle\\&=&\sum_{s,t\in
J\setminus\{i,j\}}b_{st} \left\langle
z_s,z_t\right\rangle+2\sum_{s\in
J\setminus\{i,j\}}b_{is}\left\langle
z_s,z_i-w\right\rangle+2\sum_{s\in
J\setminus\{i,j\}}b_{js}\left\langle z_s,z_j+w\right\rangle
\\&\phantom{\le}&+2b_{ij}\left\langle z_i-w,z_j+w\right\rangle
+b_{ii}\|z_i-w\|_2^2+b_{jj}\|z_j+w\|_2^2
\\&=& C(B)-2\sum_{s\in J}b_{is}\left\langle
z_s,w\right\rangle+2\sum_{s\in J}b_{js}\left\langle
z_s,w\right\rangle+\left(b_{ii}+b_{jj}-2b_{ij}\right)\|w\|_2^2\\
&\stackrel{\eqref{eq:def w_i}}{=}& C(B)+2\left\langle
w_j-w_i,w\right\rangle+\|v_i-v_j\|_2^2\cdot\|w\|_2^2\\
&\ge& C(B)+2\int_{E} \left(\langle z_j,x\rangle-\langle
z_i,x\rangle\right)d\gamma_{k-1}(x)
\\&\ge&
C(B)+2\e\gamma_{k-1}(E)>C(B),
\end{eqnarray*}
a contradiction.
\end{proof}

\remove{
\begin{remark}\label{rem:if tilde} {\em In continuation of the discussion
in Remark~\ref{remark:simplex with 0} (and using the notation
there), if we maximize $\Psi$ on $\widetilde{\Delta_k(\gamma_n)}$
then the maximum is attained at $(\1_{A_1},\ldots,\1_{A_k})\in
\widetilde{\Delta_k(\gamma_n)}$ for some disjoint measurable sets
$A_1,\ldots,A_k\subseteq \R^n$ which do not necessarily cover
$\R^n$. Denote $B\coloneqq \R^n\setminus\bigcup_{i=1}^k A_i$. As
before we set $z_j\coloneqq \int_{A_j}xd\gamma_{k-1}(x)$ and let
$J\subseteq \{1,\ldots, k\}$ be set of indices $j$ for which $A_j$
has positive measure. If we assume once more that $|J|=\ell$ is
minimal then we obtain the following variant of
Lemma~\ref{lem:simplicial}: apply an orthogonal transformation so
that $\{z_j\}_{j\in J}\subseteq \R^\ell$ (note that we are no longer
ensured that the $z_j$ sum to $0$). Define $\{w_j\}_{j\in J}$ as
in~\eqref{eq:def w_i} and for $j\in J$ define
$$
D_j\coloneqq \left\{x\in R^\ell:\ \max_{i\in J} \langle
x,w_i\rangle=\langle x,w_j\rangle\ge 0\right\}.
$$
Then up to sets of measure $0$ we have $A_j=D_j\times \R^{n-\ell}$.
This is proved along the lines of the proof of
Lemma~\ref{lem:simplicial}, with the additional argument which
arises from the fact that we are allowed to change the sets by
transferring mass from some $A_j$ to $B$, and this would increase
the value of $\Psi$ if $\langle x,w_j\rangle< 0$.\fin}
\end{remark}
}

\begin{remark}\label{rem:crucial identities} {\em Note that we have the
following non-trivial identity as a corollary of
Lemma~\ref{lem:simplicial} (and using the same notation): For each
$i\in J$, \begin{equation}\label{eq:identity}
z_j=\int_{P_j\big((w_i)_{i\in J}\big)}xd\gamma_{\ell-1}(x),
\end{equation}
where we recall that the $w_i$ are defined in~\eqref{eq:def w_i}.
This system of equalities seems to contain non-trivial information
on the structure of the partition at which $C(B)$ is attained. In
future research it would be of interest to exploit this information,
though we have no need for it for our present purposes.\fin}
\end{remark}

\begin{remark}\label{rem:how to compute}
{\em Given $B$ and $\e>0$ we can estimate $C(B)$ up to an error of
at most $\e$ in constant time (which depends only on $B,k,\e$).
Moreover, we can compute in constant time a conical simplicial
partition of $\R^{k-1}$ at which the value of $\Psi$ is at least
$C(B)-\e$. These statements are a simple corollary of
Lemma~\ref{lem:simplicial}. Indeed, all we have to do is to run over
all choices of $\ell\in \{1,\ldots,k\}$ and for each such $\ell$
construct an appropriate net of $z_1,\ldots,z_\ell\in \R^{\ell-1}$
of bounded size, and then check each of the induced  simplicial
partitions of $\R^{k-1}$ as in~\eqref{eq:the cones} for the one
which maximizes $\Psi$. To this end we need some a priori bound on
the length of $z_i$: the crude bound
$$
\|z_i\|_2=\left\|\int_{A_i}xd\gamma_{\ell-1}(x)\right\|_2\le
\int_{\R^{\ell-1}} \|x\|_2d\gamma_{\ell-1}(x)\le \sqrt{\ell}
$$
will suffice. Fix $\delta>0$ which will be determined momentarily.
Let $\mathcal N$ be a $\delta$-net in the Euclidean ball of radius
$\sqrt{\ell}$ in $\R^{\ell-1}$. Then $|\mathcal{N}|\le
\left(\frac{3\sqrt{\ell}}{\delta}\right)^\ell$.}

{\em Let $A_1,\ldots,A_k$ be as in Lemma~\ref{lem:simplicial}, i.e.,
the true (minimal) partition at which $C(B)$ is attained. Let $J$,
$\ell$, $z_i$ and $w_i$ be as in Lemma~\ref{lem:simplicial}. For
each $i\in J$ find $z_i'\in \mathcal N$ for which $\|z_i-z_i'\|_2\le
\delta$. Define $w_i'=\sum_{s\in J} b_{js}z_s'$. Then we have the
crude bound $\|w_i-w_i'\|_2\le \delta\sum_{s=1}^k\sum_{t=1}^k
|b_{st}|\coloneqq\delta \|B\|_1$. We also have the a priori bounds
$\|w_i\|_2,\|w_i'\|_2\le \sqrt{\ell}\|B\|_1$. By compactness there
exists $\delta=\delta(\e,\ell,B)$ such that these estimates imply
that for all $j\in J$,
\begin{equation}\label{eq:compactness}
\left\|z_j-\int_{P_j\big((w_i')_{i\in
J}\big)}xd\gamma_{\ell-1}(x)\right\|_2=\left\|\int_{P_j\big((w_i)_{i\in
J}\big)}xd\gamma_{\ell-1}(x)-\int_{P_j\big((w_i')_{i\in
J}\big)}xd\gamma_{\ell-1}(x)\right\|_2 \le
\frac{\e}{2\sqrt{\ell}\|B\|_1}.
\end{equation}
(It is actually easy to give a concrete bound on the required
$\delta$ if so desired, but this is not important for our purposes.)
It follows from~\eqref{eq:compactness} that:
\begin{multline*}
C(B)\ge \sum_{s,t\in J}b_{st} \left\langle\int_{P_s\big((w_i')_{i\in
J}\big)}xd\gamma_{\ell-1}(x),\int_{P_t\big((w_i')_{i\in
J}\big)}xd\gamma_{\ell-1}(x)\right\rangle\\
\ge \sum_{s,t\in J}b_{st} \left\langle
z_s,z_t\right\rangle-\sum_{s,t\in J} |b_{st}| \cdot
\frac{\e}{2\sqrt{\ell}\|B\|_1}\cdot 2\sqrt{\ell}=C(B)-\e.
\end{multline*}
Note that the above integrals can be estimated efficiently
(polynomial time in $k$) with arbitrarily good precision due to the
fact that the simplicial cones $P_j\big((w_i')_{i\in J}\big)$ have
an efficient membership oracle and the Gaussian measure is
$\log$-concave. These are very crude bounds that suffice for our
algorithmic purposes when $k$ is fixed, but deteriorate
exponentially with $k$. It would be of interest to understand
whether we can estimate $C(B)$ (and more importantly the associated
partitions, as they are used in our rounding procedure) in time
which is polynomial in $k$. Perhaps the
identities~\eqref{eq:identity} can play a role in the design of such
an efficient algorithm, but we did not investigate this issue. \fin}
\end{remark}

We end this section with a simple analytic interpretation of the
parameter $C(B)$. Given a square integrable function $f: \R^n\to
\R^k$ its Rademacher projection $\Rad (f):\R^n\to \R^k$
(see~\cite{Pisier89} for an explanation of this terminology) is
defined for $x=(x_1,\ldots,x_n)\in \R^n$ as:
$$
\Rad (f)(x)=\sum_{i=1}^n
\left(\int_{\R^n}y_if(y)d\gamma_n(y)\right)x_i.
$$
Assume that $f$ takes values in $\{v_1,\ldots,v_k\}\subseteq \R^k$
and define $A_i=f^{-1}(v_i)$ for $i\in \{1,\ldots,k\}$. Then
$\{A_1,\ldots,A_k\}$ is a measurable partition of $\R^n$. We also
have the identity:
$$
\Rad (f)(x)=\sum_{i=1}^n
\left(\sum_{j=1}^kv_j\int_{A_j}y_id\gamma_n(y)\right)x_i.
$$
Thus
\begin{multline}\label{eq:rademacher identity}
\left\|\Rad
(f)\right\|_{L_2(\gamma_n,\R^k)}^2=\int_{\R^n}\left\|\Rad
(f)(x)\right\|_2^2d\gamma_n(x)=\sum_{i=1}^n\left\|\sum_{j=1}^kv_j\int_{A_j}y_id\gamma_n(y)\right\|_2^2\\
=\sum_{i=1}^n\sum_{j=1}^k\sum_{\ell=1}^k \left\langle
v_j,v_\ell\right\rangle\left(\int_{A_j}y_id\gamma_n(y)\right)\left(\int_{A_\ell}y_id\gamma_n(y)\right)
=\sum_{j=1}^k\sum_{\ell=1}^kb_{j\ell}\left\langle\int_{A_j}yd\gamma_n(y),\int_{A_j}yd\gamma_n(y)\right\rangle.
\end{multline}
The identity~\eqref{eq:rademacher identity} implies the following
lemma:
\begin{lemma}\label{lem:rademacher} For every $n\ge k-1$ we have:
$$ C(B)=\max_{f:\R^n\to \{v_1,\ldots,v_k\}} \left\|\Rad
(f)\right\|_{L_2(\gamma_n,\R^k)}^2.
$$
\end{lemma}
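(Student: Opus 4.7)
The plan is to observe that the lemma follows almost immediately from combining the identity~\eqref{eq:rademacher identity} derived just before the statement with Lemma~\ref{lemma:cnb-partition} and Lemma~\ref{lem:old reduction to k-1}. Indeed, the right-hand side of~\eqref{eq:rademacher identity} (once the evident typo $A_j,A_j$ is corrected to $A_j,A_\ell$) is nothing but $\Psi(\1_{A_1},\ldots,\1_{A_k})$ where $\Psi$ is the functional defined in~\eqref{eq:def psi}, and $\{A_1,\ldots,A_k\}=\{f^{-1}(v_1),\ldots,f^{-1}(v_k)\}$ is the measurable partition of $\R^n$ induced by the $\{v_1,\ldots,v_k\}$-valued function $f$.

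For the upper bound, I would take an arbitrary measurable $f:\R^n\to\{v_1,\ldots,v_k\}$, set $A_j\coloneqq f^{-1}(v_j)$, and note that $(\1_{A_1},\ldots,\1_{A_k})\in \Delta_k(\gamma_n)$. By~\eqref{eq:rademacher identity} we have $\|\Rad(f)\|_{L_2(\gamma_n,\R^k)}^2=\Psi(\1_{A_1},\ldots,\1_{A_k})\le C(n,B)$, and since $n\ge k-1$, Lemma~\ref{lem:old reduction to k-1} gives $C(n,B)=C(k-1,B)=C(B)$.

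For the matching lower bound, I would invoke Lemma~\ref{lemma:cnb-partition} (applied on $\R^n$, which is valid since $n\ge k-1$) to obtain a measurable partition $\{A_1,\ldots,A_k\}$ of $\R^n$ with $\Psi(\1_{A_1},\ldots,\1_{A_k})=C(n,B)=C(B)$. Define $f:\R^n\to\{v_1,\ldots,v_k\}$ by $f(x)=v_j$ for $x\in A_j$ (breaking ties arbitrarily on the null intersection). Then $f$ is measurable, takes values in $\{v_1,\ldots,v_k\}$, and again by~\eqref{eq:rademacher identity} we have $\|\Rad(f)\|_{L_2(\gamma_n,\R^k)}^2=\Psi(\1_{A_1},\ldots,\1_{A_k})=C(B)$, showing that the supremum is attained and equals $C(B)$.

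There is no real obstacle here: the content of the lemma is essentially bookkeeping, rephrasing the variational definition of $C(B)$ in terms of the Rademacher projection of $\{v_1,\ldots,v_k\}$-valued functions. The only minor point worth flagging in the write-up is that a general $(f_1,\ldots,f_k)\in \Delta_k(\gamma_n)$ need not be of the form $(\1_{A_1},\ldots,\1_{A_k})$, so the identification of $C(B)$ with the Rademacher-projection maximum relies crucially on the fact, established in Lemma~\ref{lemma:cnb-partition}, that the supremum defining $C(n,B)$ is attained at an extreme point corresponding to an honest partition.
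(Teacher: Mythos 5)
Your proposal is correct and follows exactly the route the paper intends: the paper offers no explicit proof beyond asserting that identity~\eqref{eq:rademacher identity} implies the lemma, and you correctly supply the two missing bookkeeping steps — using~\eqref{eq:rademacher identity} to translate $\|\Rad(f)\|^2$ into $\Psi$ evaluated at an indicator tuple (giving the upper bound via the definition of $C(n,B)$ and Lemma~\ref{lem:old reduction to k-1}), and invoking Lemma~\ref{lemma:cnb-partition} to produce an honest partition attaining $C(B)$ (giving the lower bound and attainment). You also correctly flag the one non-trivial point, namely that equality rather than mere inequality requires the extreme-point/partition fact from Lemma~\ref{lemma:cnb-partition}.
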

Recall that $R(B)$ is defined as the radius of the smallest ball in
$\R^k$ which contains the set $\{v_1,\ldots,v_k\}$ and that $w(B)$
is the center of this ball. Lemma~\ref{lem:rademacher} implies the
following corollary:

\begin{corollary}\label{cor:CR^2}
$C(B)\le R(B)^2$.
\end{corollary}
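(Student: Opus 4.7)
The plan is to combine two elementary observations about the Rademacher projection $\Rad$: that it is an orthogonal projection on $L_2(\gamma_n,\R^k)$, and that it annihilates constants. Both together will let me replace $f$ by $f - w(B)$ before applying a trivial norm bound.

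First, I would verify that $\Rad$ is precisely the orthogonal projection of $L_2(\gamma_n,\R^k)$ onto the span of the degree-one Hermite functions $\{x\mapsto x_i e_j:\ i\in \{1,\ldots,n\},\ j\in\{1,\ldots,k\}\}$, where $e_1,\ldots,e_k$ is the standard basis of $\R^k$. These functions are orthonormal in $L_2(\gamma_n,\R^k)$, and the coefficient $\int_{\R^n}y_i f(y)d\gamma_n(y)$ appearing in the definition of $\Rad(f)$ is exactly the inner product of $f$ with $x\mapsto x_i$ componentwise. Thus $\Rad$ has operator norm at most $1$, so $\|\Rad(g)\|_{L_2(\gamma_n,\R^k)}\le \|g\|_{L_2(\gamma_n,\R^k)}$ for every $g\in L_2(\gamma_n,\R^k)$.

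Second, I would note that for any constant vector $c\in\R^k$ and any $i\in\{1,\ldots,n\}$ the coefficient $\int_{\R^n}y_i\cdot c\, d\gamma_n(y)$ vanishes, so $\Rad(f)=\Rad(f-c)$ for every constant $c$. Applying this with $c=w(B)$ and using that $f$ takes values in $\{v_1,\ldots,v_k\}$, each of which lies within distance $R(B)$ of $w(B)$, yields the pointwise bound $\|f(y)-w(B)\|_2\le R(B)$, and hence $\|f-w(B)\|_{L_2(\gamma_n,\R^k)}^2\le R(B)^2$. Combining the two steps gives $\|\Rad(f)\|_{L_2(\gamma_n,\R^k)}^2=\|\Rad(f-w(B))\|_{L_2(\gamma_n,\R^k)}^2\le \|f-w(B)\|_{L_2(\gamma_n,\R^k)}^2\le R(B)^2$, and taking the supremum over $f$ and invoking Lemma~\ref{lem:rademacher} finishes the proof.

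No real obstacle is expected; the only conceptual point is to recognize the Rademacher projection as a genuine orthogonal projection in $L_2(\gamma_n,\R^k)$, so that the translation trick by $w(B)$ can convert the problem into a uniform pointwise bound on $\|f-w(B)\|_2$.
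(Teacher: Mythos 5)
Your proof is correct and is essentially the paper's argument, just packaged a bit more cleanly: the paper expands $\langle v_i,v_j\rangle = \langle v_i-w(B),v_j-w(B)\rangle + \cdots$ inside the quadratic form and uses $\sum_j z_j=0$ to kill the cross terms, which is exactly your observation that $\Rad$ annihilates constants; after that both proofs invoke $\|\Rad\|\le 1$ and the pointwise bound $\|v_i-w(B)\|_2\le R(B)$.
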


\begin{proof} Let $\{A_1,\ldots,A_k\}$ be a partition of $\R^{k-1}$
into measurable sets such that if we define
$z_j=\int_{A_j}xd\gamma_{k-1}(x)$ then
\begin{multline}\label{eq:before projection}
C(B)=\sum_{i=1}^k\sum_{j=1}^k \left\langle v_i,v_j \right\rangle
\left\langle z_i,z_j \right\rangle\\=\sum_{i=1}^k\sum_{j=1}^k
\left\langle v_i-w(B),v_j-w(B) \right\rangle \left\langle z_i,z_j
\right\rangle+2\sum_{i=1}^k \left\langle v_i,w(B)
\right\rangle\left\langle z_i,\sum_{j=1}^k
z_j\right\rangle+\|w(B)\|_2^2\cdot\left\|\sum_{j=1}^k
z_j\right\|_2^2.
\end{multline}
Since $\sum_{j=1}^kz_j=0$ it follows from~\eqref{eq:rademacher
identity} and~\eqref{eq:before projection} that for
$f:\R^{k-1}\to\left\{v_i-w(B)\right\}_{i=1}^k$ defined by
$f|_{A_i}=v_i-w(B)$ we have:
$$
C(B)=\left\|\Rad
(f)\right\|_{L_2(\gamma_n,\R^k)}^2\stackrel{(\star)}{\le}
\left\|f\right\|_{L_2(\gamma_n,\R^k)}^2\le
\left\|f\right\|_{L_\infty(\gamma_n,\R^k)}^2=\max_{i\in
\{1,\ldots,k\}}\left\|v_i-w(B)\right\|_2^2=R(B)^2,
$$
where in $(\star)$ we used the fact that $\Rad$ is an orthogonal
projection on the Hilbert space $L_2(\gamma_n,\R^k)$.
\end{proof}

\section{Generalized positive semidefinite Grothendieck
inequalities}\label{sec:ineq}

The purpose of this section is to prove the following theorem, which
as explained in the introduction, is an extension of Grothendieck's
inequality for positive semidefinite matrices.

\begin{theorem}\label{them:our grothendieck}
Let $A=(a_{ij})\in M_n(\R)$ be an $n\times n$ symmetric positive
semidefinite matrix. Let $v_1,\ldots,v_k\in \R^k$ be $k\ge 2$
vectors and let $B=(b_{ij}=\langle v_i,v_j\rangle)$ be the
corresponding Gram matrix. Then
\begin{equation}\label{eq:our gro}
\max_{x_1,\ldots,x_n\in S^{n-1}}\sum_{i=1}^n\sum_{j=1}^n a_{ij}
\langle x_i,x_j\rangle \le
\frac{1}{C(B)}\max_{\sigma:\{1,\ldots,n\}\to
\{1,\ldots,k\}}\sum_{i=1}^n\sum_{j=1}^n a_{ij}\langle
v_{\sigma(i)},v_{\sigma(j)}\rangle.
\end{equation}
\end{theorem}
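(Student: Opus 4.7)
The plan is to establish the inequality by a randomized rounding argument, where one fixes a near-optimal configuration $x_1,\ldots,x_n \in S^{n-1}$ on the left-hand side, chooses a random assignment $\sigma:\{1,\ldots,n\}\to\{1,\ldots,k\}$ using the optimal partition for $C(B)$, and shows that the expected clustering value is at least $C(B)$ times the SDP value. The hinge of the argument is the Hermite/Poisson identity~\eqref{poisson identity intro}, which we assume here as Lemma~\ref{lem:hermite}.

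First I would fix unit vectors $x_1,\ldots,x_n\in S^{n-1}$. Invoke Lemma~\ref{lem:maximizer exists} together with Lemma~\ref{lem:old reduction to k-1} to obtain a measurable partition $\{A_1,\ldots,A_k\}$ of $\R^{k-1}$ with $\sum_{\ell,m} b_{\ell m}\langle z_\ell,z_m\rangle=C(B)$, where $z_\ell=\int_{A_\ell}yd\gamma_{k-1}(y)$. Let $G$ be a $(k-1)\times n$ matrix with i.i.d.\ standard Gaussian entries, and define the random labeling $\sigma(i)=\ell$ iff $Gx_i\in A_\ell$. Because each row of $G$ is a standard Gaussian vector in $\R^n$ and the $x_i$ are unit, the joint law of $(Gx_i,Gx_j)$ is governed by the single scalar $\langle x_i,x_j\rangle$, so identity~\eqref{poisson identity intro} applies with $E=A_{\ell}$ and $F=A_{m}$.

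Next I would compute $\E\left[\langle v_{\sigma(i)},v_{\sigma(j)}\rangle\right]=\sum_{\ell,m} b_{\ell m}\Pr[Gx_i\in A_\ell,\,Gx_j\in A_m]$ by substituting the identity and interchanging the sum with the expansion. The $\ell$-th order coefficient collapses into a squared norm:
\[
\sum_{\ell,m}b_{\ell m}\,\alpha_s(A_\ell)\alpha_s(A_m)=\Bigl\|\sum_{\ell=1}^k \alpha_s(A_\ell)v_\ell\Bigr\|_2^2\ge 0,
\]
and the same algebra shows that the constant term equals $\bigl\|\sum_\ell\gamma_{k-1}(A_\ell)v_\ell\bigr\|_2^2\ge 0$ while the first-order coefficient equals exactly $C(B)$. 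Hence
\[
\E\left[\langle v_{\sigma(i)},v_{\sigma(j)}\rangle\right]=c_0+C(B)\langle x_i,x_j\rangle+\sum_{p\ge 2} c_p\,\langle x_i,x_j\rangle^p,
\]
with $c_0,c_2,c_3,\ldots\ge 0$ (the $c_p$ themselves sums of squared norms of linear combinations of the $v_\ell$).

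Now I would multiply by $a_{ij}$ and sum. Positive semidefiniteness of $A$ gives $\sum_{i,j}a_{ij}=\bigl\|\sum_i y_i\bigr\|_2^2\ge 0$ for any Gram representation $a_{ij}=\langle y_i,y_j\rangle$, while the Schur product theorem ensures that the Hadamard power $(\langle x_i,x_j\rangle^p)_{i,j}$ is positive semidefinite, so $\sum_{i,j}a_{ij}\langle x_i,x_j\rangle^p\ge 0$ for every integer $p\ge 0$. Therefore every non-linear term is non-negative and
\[
\E\Bigl[\sum_{i,j}a_{ij}\langle v_{\sigma(i)},v_{\sigma(j)}\rangle\Bigr]\ge C(B)\sum_{i,j}a_{ij}\langle x_i,x_j\rangle.
\]
Selecting a realization of $\sigma$ attaining at least this expectation, then taking the maximum over $\sigma$ on the left and the supremum over $x_1,\ldots,x_n\in S^{n-1}$ on the right, yields~\eqref{eq:our gro}. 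The only nontrivial ingredient is the identity~\eqref{poisson identity intro}; once that Hermite-expansion computation is in hand, the rest is bookkeeping and two applications of positive semidefiniteness (Schur's theorem and the Gram representation of $A$).
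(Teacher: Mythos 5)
Your argument is correct and is essentially the same as the paper's: both apply Lemma~\ref{lem:hermite} to the random rounding $\sigma(i)=\ell \Leftrightarrow Gx_i\in A_\ell$, expand $\E[\langle v_{\sigma(i)},v_{\sigma(j)}\rangle]$ as a series in $\langle x_i,x_j\rangle$ with non-negative coefficients, and discard the constant and higher-order terms using positive semidefiniteness of $A$ (via Schur's product theorem / the tensor-power identity $\langle x^{\otimes\ell},y^{\otimes\ell}\rangle=\langle x,y\rangle^\ell$) and of $B$. The only cosmetic difference is that you fix the partition attaining $C(B)$ up front, whereas the paper runs the argument for an arbitrary partition and takes the supremum at the end.
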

We shall prove in Section~\ref{sec:optimal} that the factor
$\frac{1}{C(B)}$ in~\eqref{eq:our gro} cannot be improved, even when
in~\eqref{eq:our gro} $A$ is restricted to be centered, i.e.,
$\sum_{i=1}^n\sum_{j=1}^n a_{ij}=0$.

The key tool in the proof of Theorem~\ref{them:our grothendieck} is
the following lemma:

\begin{lemma}\label{lem:hermite} Let $\left\{g_{ij}:\
i\in\{1,\ldots,m\},\ j\in \{1,\ldots,n\}\right\}$ be i.i.d. standard
Gaussian random variables and let $G=(g_{ij})$ be the corresponding
$m\times n$ random Gaussian matrix. Fix two unit vectors $x,y\in
S^{n-1}$ and two measurable subsets $E,F\subseteq \R^m$. Then:
%Let
%$g_1,\ldots,g_m\in \R^n$ be i.i.d. standard Gaussian random vectors,
%i.e., they are independent and distributed according to $\gamma_n$.
%Define two random vectors $G_x,G_y\in \R^m$ by $G_x=(\langle
%g_1,x\rangle,\ldots, \langle g_m,x\rangle)$ and $G_y=(\langle
%g_m,y\rangle,\ldots, \langle g_m,y\rangle)$. Then for every two
%measurable subsets $E,F\subseteq \R^m$ we have
\begin{multline}\label{poisson identity}
\Pr\left[Gx\in E\ \wedge\ Gy\in
F\right]\\=\gamma_m(E)\gamma_m(F)+\langle
x,y\rangle\left\langle\int_Eud\gamma_m(u),\int_Fud\gamma_m(u)\right\rangle+\sum_{\ell=2}^\infty
\left\langle x^{\otimes \ell},y^{\otimes
\ell}\right\rangle\sum_{\substack{s\in
(\N\cup\{0\})^m\\s_1+\cdots+s_m=\ell}}\alpha_s(E)\alpha_s(F),
\end{multline}
for some real coefficients $\{\alpha_s(E)\}_{s\in
(\N\cup\{0\})^m},\{\alpha_s(F)\}_{s\in (\N\cup\{0\})^m}\subseteq
\R$.
\end{lemma}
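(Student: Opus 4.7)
\medskip

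\noindent\textbf{Proof plan.} The plan is to reduce the statement to a Hermite (Mehler) expansion of the joint density of the pair $(Gx,Gy)$. First I would observe that since $G$ has i.i.d.\ standard Gaussian entries, for any $x,y\in S^{n-1}$ the vectors $Gx$ and $Gy$ are each standard Gaussian in $\R^m$, and
$$
\E\bigl[(Gx)_i (Gy)_j\bigr] \;=\; \sum_{k,\ell=1}^n \E[g_{ik}g_{j\ell}]\, x_k y_\ell \;=\; \delta_{ij}\langle x,y\rangle.
$$
Hence $(Gx,Gy)$ is jointly Gaussian in $\R^{2m}$, and its law depends on $x,y$ only through the single parameter $\rho\coloneqq \langle x,y\rangle\in[-1,1]$: it is the law of $m$ independent copies of a centered Gaussian pair in $\R^2$ with unit variances and correlation $\rho$.

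Next I would invoke the multivariate Mehler formula. Let $\{h_j\}_{j=0}^\infty$ be the Hermite polynomials orthonormal in $L_2(\gamma_1)$ (so $h_0\equiv 1$ and $h_1(t)=t$), and for a multi-index $s=(s_1,\ldots,s_m)\in(\N\cup\{0\})^m$ set $h_s(u)\coloneqq \prod_{i=1}^m h_{s_i}(u_i)$. Then $\{h_s\}_{s}$ is an orthonormal basis of $L_2(\gamma_m)$, and Mehler's formula (applied coordinatewise and multiplied) gives the $L_2(\gamma_{2m})$-convergent expansion
\begin{equation}\label{eq:mehler-expand}
\frac{p_\rho(u,v)}{\gamma_m(u)\gamma_m(v)} \;=\; \sum_{s\in(\N\cup\{0\})^m} \rho^{|s|}\, h_s(u)\, h_s(v),
\end{equation}
where $p_\rho(u,v)$ is the density of $(Gx,Gy)$ and $|s|\coloneqq s_1+\cdots+s_m$. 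Multiplying \eqref{eq:mehler-expand} by $\gamma_m(u)\gamma_m(v)\mathbf{1}_E(u)\mathbf{1}_F(v)$ and integrating (the $L_2$ convergence of the partial sums in \eqref{eq:mehler-expand} together with Cauchy--Schwarz justifies the exchange of sum and integral, since $|\rho|\le 1$) yields
$$
\Pr\bigl[Gx\in E\ \wedge\ Gy\in F\bigr] \;=\; \sum_{\ell=0}^\infty \rho^{\ell}\! \sum_{\substack{s\in (\N\cup\{0\})^m\\|s|=\ell}} \alpha_s(E)\alpha_s(F),
\qquad \alpha_s(E)\coloneqq \int_E h_s(u)\,d\gamma_m(u).
$$

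Finally I would isolate the $\ell=0$ and $\ell=1$ terms. For $\ell=0$ the only multi-index is $s=0$, $h_0\equiv 1$, giving $\alpha_0(E)\alpha_0(F)=\gamma_m(E)\gamma_m(F)$. For $\ell=1$ the multi-indices are $e_1,\ldots,e_m$, and since $h_1(t)=t$ we get $\alpha_{e_i}(E)=\int_E u_i\, d\gamma_m(u)$, so the $\ell=1$ contribution equals
$$
\rho\sum_{i=1}^m\left(\int_E u_i d\gamma_m\right)\!\!\left(\int_F u_i d\gamma_m\right) = \langle x,y\rangle\!\left\langle \int_E u\, d\gamma_m(u),\,\int_F u\, d\gamma_m(u)\right\rangle.
$$
Using the identity $\rho^\ell=\langle x,y\rangle^\ell=\langle x^{\otimes \ell}, y^{\otimes\ell}\rangle$ for the remaining terms produces exactly \eqref{poisson identity}.

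The only nontrivial ingredient is the multivariate Mehler identity \eqref{eq:mehler-expand}, which is a standard closed-form expression for the Poisson (Ornstein--Uhlenbeck) kernel of the Hermite polynomials; apart from this, the argument is bookkeeping. The main thing to be careful about is the justification of termwise integration, which is routine via $L_2$-convergence since $|\rho|\le 1$ and $\1_E,\1_F\in L_2(\gamma_m)$.
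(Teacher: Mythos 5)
Your proposal is correct and follows essentially the same route as the paper: both reduce to the observation that $(Gx,Gy)$ is an i.i.d.\ family of correlated Gaussian pairs with correlation $\langle x,y\rangle$ and then expand the joint density via Mehler's formula (the Poisson kernel for Hermite polynomials). The only differences are cosmetic — you work with probabilist's orthonormal Hermite polynomials while the paper uses the physicists' $H_k$ and tracks the resulting normalization constants explicitly in $\alpha_s$, and you add a brief justification for termwise integration that the paper omits.
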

\begin{proof}
Denote $r=\langle x,y\rangle$. Let $g,h\in \R$ be independent
standard Gaussian random variables and let $g_1,\ldots,g_m\in \R^n$
 be i.i.d. standard Gaussian random vectors in $\R^n$ (i.e., they are independent and distributed according to
 $\gamma_n$). Then for each $i\in \{1,\ldots,m\}$ the planar random vector
 $(\langle g_i,x\rangle,\langle g_i,y\rangle)\in \R^2$ has the same
 distribution as $\left(g,rg+\sqrt{1-r^2}h\right)\in \R^2$, and
 hence its density is given for $(u,v)\in \R^2$ by:
$$
f_r(u,v)\coloneqq
\frac{1}{2\pi\sqrt{1-r^2}}\cdot\exp\left(-\frac{u^2-2ruv+v^2}{2(1-r^2)}\right).
$$
The Hermite polynomials $\{H_k\}_{k=0}^\infty$ are defined as:
$$
H_k(t)\coloneqq
(-1)^ke^{t^2}\frac{d^k}{dt^k}\left(e^{-t^2}\right)=\sum_{s=0}^{\lfloor
k/2\rfloor}\frac{(-1)^sk!}{s!(k-2s)!}(2t)^{k-2s}.
$$
The formula for the Poison kernel for Hermite polynomials (see for
example equation 6.1.13 in~\cite{AAR99} or the discussion
in~\cite{Stein93}) says that
$$
f_r(u,v)=\frac{e^{-(u^2+v^2)/2}}{2\pi}\sum_{k=0}^\infty
\frac{r^k}{2^kk!}H_k\left(\frac{u}{\sqrt{2}}\right)H_k\left(\frac{v}{\sqrt{2}}\right).
$$
Since the vector $(Gx,Gy)\in \R^{2m}$ has the same distribution as
the vector $\big((\langle g_i,x\rangle,\langle
g_i,y\rangle)\big)_{i=1}^m$, whose (planar) entries are i.i.d. with
density $f_r$, we see that:

\begin{eqnarray*}
&&\!\!\!\!\!\!\!\!\!\!\!\!\Pr\left[Gx\in E\ \wedge\ Gy\in
F\right]=\int_{E\times F} \left(\prod_{i=1}^m
f_r(u_i,v_i)\right)dudv\\&=&\int_{E\times F}
\frac{e^{-(\|u\|_2^2+\|v\|_2^2)/2}}{(2\pi)^m}\left(\prod_{i=1}^m\left(
\sum_{k=0}^\infty
\frac{r^k}{2^kk!}H_k\left(\frac{u_i}{\sqrt{2}}\right)H_k\left(\frac{v_i}{\sqrt{2}}\right)\right)\right)dudv\\&=&
\int_{E\times F}\left(\sum_{s\in
(\N\cup\{0\})^m}\frac{r^{s_1+\cdots+s_m}}{2^{s_1+\cdots+s_m}\prod_{i=1}^m
s_i!}\left(\prod_{i=1}^mH_{s_i}
\left(\frac{u_i}{\sqrt{2}}\right)\right)\left(\prod_{i=1}^mH_{s_i}\left(\frac{v_i}{\sqrt{2}}\right)\right)\right)d\gamma_m(u)d\gamma_m(v)\\
&=& \gamma_m(E)\gamma_m(F)+\langle
x,y\rangle\left\langle\int_Eud\gamma_m(u),\int_Fud\gamma_m(u)\right\rangle+\sum_{\ell=2}^\infty
\left\langle x^{\otimes \ell},y^{\otimes
\ell}\right\rangle\sum_{\substack{s\in
(\N\cup\{0\})^m\\s_1+\cdots+s_m=\ell}}\alpha_s(E)\alpha_s(F),
\end{eqnarray*}
where we used the fact that $H_0(t)=1$ and $H_1(t)=2t$, and for
every measurable subset $W\subseteq R^m$ and $s\in (\N\cup\{0\})^m$
the notation
$$
\alpha_s(W)\coloneqq \frac{1}{2^{(s_1+\cdots+s_m)/2}\prod_{i=1}^m
\sqrt{s_i!}}\int_W \left(\prod_{i=1}^mH_{s_i}
\left(\frac{u_i}{\sqrt{2}}\right)\right)d\gamma_m(u).
$$
The proof of the identity~\eqref{eq:identity} is complete.
\end{proof}

\begin{proof}[Proof of Theorem~\ref{them:our grothendieck}]
Fix $n$ unit vectors $x_1,\ldots,x_n\in S^{n-1}$. Let
$\{A_1,\ldots,A_k\}$ be a partition of $\R^{k-1}$ into measurable
subsets. Let $G$ be a random Gaussian matrix as in
Lemma~\ref{lem:hermite} with $m=k-1$. Define a random assignment
$\sigma:\{1,\ldots,n\}\to\{1,\ldots,k\}$ by setting $\sigma(i)$ to
be the unique $p\in \{1,\ldots,k\}$ for which $Gx_i\in A_p$. Then
for every $i,j\in \{1,\ldots,n\}$ we have
$$
\E \left[\left\langle
v_{\sigma(i)},v_{\sigma(j)}\right\rangle\right]=\sum_{p=1}^k\sum_{q=1}^k
\left\langle v_{p},v_{q}\right\rangle\Pr\left[Gx_i\in A_p\ \wedge\
Gx_j\in A_q\right]=\sum_{p=1}^k\sum_{q=1}^k b_{pq}\Pr\left[Gx_i\in
A_p\ \wedge\ Gx_j\in A_q\right].
$$
We may therefore apply Lemma~\ref{lem:hermite} to deduce that:
\begin{eqnarray*}
\E\left[\sum_{i=1}^n\sum_{j=1}^na_{ij}\left\langle
v_{\sigma(i)},v_{\sigma(j)}\right\rangle\right]&=&\left(\sum_{i=1}^n\sum_{j=1}^na_{ij}\right)
\sum_{p=1}^k\sum_{q=1}^k
b_{pq}\gamma_{k-1}(A_p)\gamma_{k-1}(A_q)\\&\phantom{\le}&+
\left(\sum_{i=1}^n\sum_{j=1}^na_{ij}\left\langle
x_i,x_j\right\rangle\right)\sum_{p=1}^k
\sum_{q=1}^kb_{pq}\left\langle\int_{A_p}xd\gamma_{k-1}(x),\int_{A_q}xd\gamma_{k-1}(x)\right\rangle\\&\phantom{\le}&+
\sum_{\ell=2}^\infty
\left(\sum_{i=1}^n\sum_{j=1}^na_{ij}\left\langle x_i^{\otimes
\ell},x_j^{\otimes \ell}\right\rangle\right)\sum_{\substack{s\in
(\N\cup\{0\})^m\\s_1+\cdots+s_m=\ell}}\sum_{p=1}^k
\sum_{q=1}^kb_{pq}\alpha_s(A_p)\alpha_s(A_q)\\
&\ge& \left(\sum_{i=1}^n\sum_{j=1}^na_{ij}\left\langle
x_i,x_j\right\rangle\right)\sum_{p=1}^k
\sum_{q=1}^kb_{pq}\left\langle\int_{A_p}xd\gamma_{k-1}(x),\int_{A_q}xd\gamma_{k-1}(x)\right\rangle,
\end{eqnarray*}
where we used the fact that both $A$ and $B$ are positive
semidefinite. It thus follows that there exists an assignment
$\sigma:\{1,\ldots,n\}\to\{1,\ldots,k\}$ for which
$$
\sum_{i=1}^n\sum_{j=1}^na_{ij}\left\langle
v_{\sigma(i)},v_{\sigma(j)}\right\rangle\ge
\left(\sum_{i=1}^n\sum_{j=1}^na_{ij}\left\langle
x_i,x_j\right\rangle\right)\sum_{p=1}^k
\sum_{q=1}^kb_{pq}\left\langle\int_{A_p}xd\gamma_{k-1}(x),\int_{A_q}xd\gamma_{k-1}(x)\right\rangle,
$$
and since this is true for all measurable partitions
$\{A_1,\ldots,A_k\}$ of $\R^{k-1}$ we deduce that there exists an
assignment $\sigma:\{1,\ldots,n\}\to\{1,\ldots,k\}$ for which:
$$
\sum_{i=1}^n\sum_{j=1}^na_{ij}\left\langle
v_{\sigma(i)},v_{\sigma(j)}\right\rangle\ge
C(B)\sum_{i=1}^n\sum_{j=1}^na_{ij}\left\langle x_i,x_j\right\rangle,
$$
as required.
\end{proof}

\subsection{Optimality}\label{sec:optimal}

The purpose of this section is to show that Theorem~\ref{them:our
grothendieck} is sharp:

\begin{theorem}\label{thm:sharp}
 Let $v_1,\ldots,v_k\in \R^k$ be $k\ge 2$
vectors and let $B=(b_{ij}=\langle v_i,v_j\rangle)$ be the
corresponding Gram matrix. Assume that $K>0$ is a constant such that
for every $n\in \N$ and every centered symmetric positive
semidefinite matrix $A=(a_{ij})\in M_n(\R)$ we have:
\begin{equation}\label{eq:our gro with K}
\max_{x_1,\ldots,x_n\in S^{n-1}}\sum_{i=1}^n\sum_{j=1}^n a_{ij}
\langle x_i,x_j\rangle \le K\max_{\sigma:\{1,\ldots,n\}\to
\{1,\ldots,k\}}\sum_{i=1}^n\sum_{j=1}^n a_{ij}\langle
v_{\sigma(i)},v_{\sigma(j)}\rangle.
\end{equation}
Then $K\ge \frac{1}{C(B)}$.
\end{theorem}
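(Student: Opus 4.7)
The plan is to prove $K\ge 1/C(B)$ by exhibiting, for every $\varepsilon>0$, a centered positive semidefinite matrix $A$ whose SDP-to-Clust ratio is at least $1/C(B)-\varepsilon$; the resulting integrality gap forces the claimed lower bound on $K$. The construction reverses the Gaussian rounding in the proof of Theorem~\ref{them:our grothendieck}: one uses the optimal partition of $\R^{k-1}$ that realises $C(B)$ to define an instance on which the Grothendieck bound is (asymptotically) attained.

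Concretely, fix a measurable partition $\{A_p^*\}_{p=1}^k$ of $\R^{k-1}$ with $\sum_{p,q}b_{pq}\langle z_p^*,z_q^*\rangle \ge C(B)-\varepsilon$ (Lemma~\ref{lemma:cnb-partition}), where $z_p^*=\int_{A_p^*}x\,d\gamma_{k-1}$. Refine $\{A_p^*\}$ to an equi-probable cell decomposition $Q_1,\ldots,Q_N$ of $\R^{k-1}$ and set $\eta_i = N\int_{Q_i}x\,d\gamma_{k-1}$. Define the $N\times N$ matrix
\[
a_{ij} \;=\; \tfrac{1}{N^2}\langle \eta_i,\eta_j\rangle.
\]
Then $A$ is a Gram matrix (so PSD) and exactly centered since $\sum_i\eta_i = N\int x\,d\gamma_{k-1}=0$. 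For every labelling $\sigma:[N]\to[k]$ the induced coarsening $A_p^{(\sigma)}=\bigcup_{i\in \sigma^{-1}(p)}Q_i$ satisfies
\[
\sum_{i,j}a_{ij}b_{\sigma(i)\sigma(j)} \;=\; \sum_{p,q}b_{pq}\Bigl\langle\int_{A_p^{(\sigma)}}x\,d\gamma_{k-1},\int_{A_q^{(\sigma)}}x\,d\gamma_{k-1}\Bigr\rangle \;\le\; C(B),
\]
directly from the definition of $C(B)$; the $\sigma$ that refines $\{A_p^*\}$ witnesses $\mathrm{Clust}(A|B)\ge C(B)-\varepsilon$, so $\mathrm{Clust}(A|B)\in[C(B)-\varepsilon,C(B)]$.

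For the matching SDP lower bound $\mathrm{SDP}(A)\to 1$, I would embed each $\eta_i$ as a unit vector in $\R^{k-1}\oplus\R^N$ of the form $x_i=\bigl(\lambda\eta_i,\sqrt{1-\lambda^2\|\eta_i\|^2}\,e_i\bigr)$, with $\lambda^2=1/(k-1)$ and $\{e_i\}$ standard basis vectors of $\R^N$. Then $\|x_i\|=1$ and $\langle x_i,x_j\rangle = \langle\eta_i,\eta_j\rangle/(k-1)$ for $i\ne j$, giving
\[
\sum_{i,j}a_{ij}\langle x_i,x_j\rangle \;=\; \tfrac{1}{(k-1)N^2}\sum_{i\ne j}\langle\eta_i,\eta_j\rangle^2 \;+\;\tfrac{1}{N^2}\sum_i\|\eta_i\|^2.
\]
As the discretization is refined, $\tfrac{1}{N}\sum_i\eta_i\eta_i^T$ converges to the Gaussian covariance $I_{k-1}$, so $\tfrac{1}{N^2}\sum_{i,j}\langle\eta_i,\eta_j\rangle^2 = \tfrac{1}{N^2}\|\Phi^T\Phi\|_F^2\to (k-1)$ while the diagonal term is $O(1/N)$. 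Hence $\sum_{i,j}a_{ij}\langle x_i,x_j\rangle\to 1$, and combined with the Clust bound this gives $\mathrm{SDP}(A)/\mathrm{Clust}(A|B)\to 1/C(B)$.

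The main obstacle is that the embedding requires $\|\eta_i\|\le\sqrt{k-1}$ in order for the orthogonal noise component to be real, yet some Gaussian cells have $\|\eta_i\|$ exceeding this threshold. I would handle this by a truncation argument: restrict the discretization to a large Euclidean ball of radius $L$ (with $\lambda$ rescaled accordingly) and absorb the Gaussian tail into a single cell, whose contributions to both the numerator and denominator of the SDP-to-Clust ratio vanish as $L\to\infty$. Taking the limits in the order $N\to\infty$, $L\to\infty$, $\varepsilon\to 0$ yields the desired gap, and hence $K\ge 1/C(B)$.
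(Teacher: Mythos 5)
Your outline reverses the rounding argument in a natural way, and the two building blocks you set up are both correct: the matrix $a_{ij}=\tfrac{1}{N^2}\langle\eta_i,\eta_j\rangle$ is PSD and exactly centered (since $\sum_i\eta_i=N\int x\,d\gamma_{k-1}=0$), and the aggregation argument showing $\sum_{i,j}a_{ij}b_{\sigma(i)\sigma(j)}\le C(B)$ for every $\sigma$, with near-equality for the $\sigma$ that refines $\{A_p^*\}$, is sound. The problem lies entirely in the SDP lower bound, and it is not repairable by the truncation you sketch.

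The obstruction is that you work in the \emph{fixed} ambient dimension $k-1$, and for such instances the SDP value is bounded away from $1$ by an amount that does not vanish. Your embedding needs $\lambda\le 1/\max_i\|\eta_i\|$, and after restricting to a ball of radius $L$ you have $\max_i\|\eta_i\|\approx L$, so $\lambda^2\le 1/L^2$. Meanwhile $\tfrac{1}{N^2}\sum_{i,j}\langle\eta_i,\eta_j\rangle^2\to\bigl\|\int_{\|x\|\le L}xx^T\,d\gamma_{k-1}\bigr\|_F^2=c(L)^2(k-1)$ with $c(L)=\tfrac{1}{k-1}\int_{\|x\|\le L}\|x\|_2^2\,d\gamma_{k-1}<1$, so the SDP value you obtain is at most about $c(L)^2(k-1)/L^2$, which goes to $0$ both as $L\to\infty$ (since $c(L)\le 1$) and as $L\to 0$; it never approaches $1$. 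More to the point, this is not a defect of the particular embedding: one can solve the SDP for your instance exactly in the simplest case $k=2$, where the $\eta_i$ are scalars and $A=\tfrac{1}{N^2}\eta\eta^T$. The dual ($\min\sum d_i$ subject to $\mathrm{diag}(d)\succeq\eta\eta^T$, i.e.\ $\sum\eta_i^2/d_i\le 1$) gives $\mathrm{SDP}(A)=\tfrac{1}{N^2}\bigl(\sum_i|\eta_i|\bigr)^2\to\E[|g|]^2=2/\pi<1$. Combined with $\mathrm{Clust}(A|B)\to C(B)=\|v_1-v_2\|_2^2/(2\pi)$, your construction for $k=2$ yields a ratio tending to $4/\|v_1-v_2\|_2^2=\tfrac{2}{\pi}\cdot\tfrac{1}{C(B)}$, strictly short of the target $1/C(B)$. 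So the gap is genuine, not just an artifact of the analysis.

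What the paper does differently, and what fixes this, is to let the ambient dimension $m$ of the Gaussian space tend to infinity (with $m\ge k-1$ so that $C(m,B)=C(B)$ by Lemma~\ref{lem:old reduction to k-1}). Then the simple unit-norm assignment $x\mapsto x/\|x\|_2$ already yields SDP value $\int\langle x,y\rangle\langle x/\|x\|,y/\|y\|\rangle\,d\gamma_m\,d\gamma_m=\tfrac{1}{m}\E[\|g_m\|_2]^2\ge 1-O(1/m)$, with no extra ``noise'' coordinate and hence no constraint of the form $\|\eta_i\|\le\sqrt{k-1}$; the only truncation needed is the mild one required to make the instance finite. If you carry out your own construction in $\R^m$ with $m\to\infty$ (keeping the target set $\{v_1,\ldots,v_k\}$ and the bound $\mathrm{Clust}\le C(B)$, which is dimension-free), your plan would work, but at that point it essentially coincides with the paper's proof.
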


\begin{proof}
The proof consists of a discretization of a continuous example. The
discretization step is somewhat tedious, but straightforward. We
will start with a presentation of the continuous example. Fix $m\in
\N$ and let $g,h\in \R^m$ be independent standard gaussian random
vectors. Since $(\|g\|_2,\|h\|_2)$ is independent of
$\left(\frac{g}{\|g\|_2},\frac{h}{\|h\|_2}\right)$ we have:
\begin{multline}\label{eq;cont kernel}
\int_{\R^m\times \R^m} \langle x,y\rangle\cdot
\left\langle\frac{x}{\|x\|_2},\frac{y}{\|y\|_2}\right\rangle
d\gamma_m(x)d\gamma_m(y)=\E\left[\|g\|_2\cdot\|h\|_2\left\langle\frac{g}{\|g\|_2},
\frac{h}{\|h\|_2}\right\rangle^2\right]\\=
\E\left[\|g\|_2\cdot\|h\|_2\right]\cdot
\E\left[\left\langle\frac{g}{\|g\|_2},
\frac{h}{\|h\|_2}\right\rangle^2\right]=\E\left[\|g\|_2\right]^2\E\left[\frac{g_1^2}{\|g\|_2^2}\right]=
\E\left[\|g\|_2\right]^2\frac{1}{m}\sum_{i=1}^m\E\left[\frac{g_i^2}{\|g\|_2^2}\right]=
\frac{1}{m}\E\left[\|g\|_2\right]^2,
\end{multline}
where we used the rotation invariance of the distribution of $h$.

The distribution of $\|g\|_2^2$ is the $\chi^2$ distribution with
$m$ degrees of freedom, and therefore its density at $u>0$ equals $
\frac{1}{2^{m/2}\Gamma(m/2)}u^{\frac{m}{2}-1}e^{-u/2}$. It follows
that
\begin{equation}\label{eq:stirling}
\E\left[\|g\|_2\right]=\frac{1}{2^{m/2}\Gamma(m/2)}\int_0^\infty
\sqrt{u}\cdot u^{\frac{m}{2}-1}e^{-u/2} du=\sqrt{2}\cdot
\frac{\Gamma\left(\frac{m+1}{2}\right)}{\Gamma\left(\frac{m}{2}\right)}\ge
\sqrt{m}\left(1-O\left(\frac{1}{m}\right)\right),
\end{equation}
where the last step is an application of Stirling's formula.
Plugging~\eqref{eq:stirling} into~\eqref{eq;cont kernel} we see
that:
\begin{equation}\label{eq:conclusion cont kernel}
 \int_{\R^m\times \R^m} \langle x,y\rangle\cdot
\left\langle\frac{x}{\|x\|_2},\frac{y}{\|y\|_2}\right\rangle
d\gamma_m(x)d\gamma_m(y)\ge 1-O\left(\frac{1}{m}\right).
\end{equation}

Now, assuming that $m\ge k-1$, for every $f:\R^m\to
\{v_1,\ldots,v_k\}$ we have
\begin{multline}\label{eq:continuous RHS}
\int_{\R^m\times \R^m} \langle x,y\rangle\cdot \left\langle
f(x),f(y)\right\rangle d\gamma_m(x)d\gamma_m(y)=\left\|\int_{\R^m}
x\otimes f(x)d\gamma_m(x)\right\|_2^2\\=\left\|\sum_{i=1}^m
e_i\otimes \left(\int_{\R^m} x_i
f(x)d\gamma_m(x)\right)\right\|_2^2=\left\|\Rad(f)\right\|_{L_2(\gamma_m,\R^k)}^2\le
C(B),
\end{multline}
where we used Lemma~\ref{lem:rademacher} (and here $e_1,\ldots,e_m$
is the standard basis or $\R^m$).

We shall now perform a simple discretization argument to conclude
the proof of Theorem~\ref{thm:sharp}. Fix $\e>0$ and $M\in \N$. Let
$\F$ be the set of all axis parallel cubes in $[-\e M,\e M]^m$ which
are a product of $m$ intervals whose endpoints are consecutive
integer multiples of $\e$ in $[-M,M]$. Thus $|\F|=(2M)^m$ and each
$Q\in \F$ has volume $\e^m$. For  $Q\in \F$ let $z_Q$ be the center
of $Q$. For every $P,Q\in \F$ define
$$
a_{PQ}\coloneqq
\e^{2m}e^{-\frac{\|z_P\|_2^2+\|z_Q\|_2^2}{2}}\left\langle
z_P,z_Q\right\rangle.
$$
By our assumption~\eqref{eq:our gro with K} there is an assignment
$\sigma:\F\to \{1,\ldots,k\}$ such that
\begin{equation}\label{eq:use assumtion sharpness}
\sum_{P,Q\in \F} a_{PQ} \left\langle
\frac{z_P}{\|z_P\|_2},\frac{z_Q}{\|z_Q\|_2}\right\rangle\le
K\sum_{P,Q\in \F} a_{PQ}\left\langle
v_{\sigma(P)},v_{\sigma(Q)}\right\rangle.
\end{equation}
We shall now use the following straightforward (and crude)
estimates:
\begin{eqnarray*}
&&\!\!\!\!\!\!\!\!\!\!\!\!\!\!\!\!\left|\int_{\R^m\times \R^m}
\langle x,y\rangle
\left\langle\frac{x}{\|x\|_2},\frac{y}{\|y\|_2}\right\rangle
d\gamma_m(x)d\gamma_m(y)-\sum_{P,Q\in \F} a_{PQ} \left\langle
\frac{z_P}{\|z_P\|_2},\frac{z_Q}{\|z_Q\|_2}\right\rangle\right|\\
&\le& \sum_{P,Q\in \F} \int_{P\times
Q}\left|e^{-\frac{\|z_P\|_2^2+\|z_Q\|_2^2}{2}}\left\langle
z_P,z_Q\right\rangle\left\langle
\frac{z_P}{\|z_P\|_2},\frac{z_Q}{\|z_Q\|_2}\right\rangle-
e^{-\frac{\|x\|_2^2+\|y\|_2^2}{2}}\left\langle
x,y\right\rangle\left\langle
\frac{x}{\|x\|_2},\frac{y}{\|y\|_2}\right\rangle\right|dxdy\\
&\phantom{\le}&+\left|\int_{(\R^m\times \R^m)\setminus ([-\e M,\e
M]^m\times [-\e M,\e M]^m)} \langle x,y\rangle
\left\langle\frac{x}{\|x\|_2},\frac{y}{\|y\|_2}\right\rangle
d\gamma_m(x)d\gamma_m(y)\right|\\
&\le& O(1) \sqrt{m}\e\left(\sqrt{m}M\e\right)^{3}\sum_{P,Q\in \F}
\int_{P\times Q}e^{-\frac{\|x\|_2^2+\|y\|_2^2}{2}}dx
dy+O(1)m^2e^{-\frac{(\e M)^2}{4}}\\
&\le& O(1)
\sqrt{m}\e\left(\sqrt{m}M\e\right)^{3}+O(1)m^2e^{-\frac{(\e
M)^2}{4}}.
\end{eqnarray*}
We shall require in what follows that $\e M=2m$. Hence,
using~\eqref{eq:conclusion cont kernel} we deduce that:
\begin{equation}\label{eq:conclusion discretization cont kernel}
\sum_{P,Q\in \F} a_{PQ} \left\langle
\frac{z_P}{\|z_P\|_2},\frac{z_Q}{\|z_Q\|_2}\right\rangle\ge
1-O\left(m^5\e+\frac{1}{m}\right).
\end{equation}
On the other hand, define $f:\R^m\to \{v_1,\ldots,v_k\}$ by
$$
f(x)=\left\{\begin{array}{ll}
 v_{\sigma(Q)}&  x\in Q\in\F,\\
 v_1 & x\notin [-\e M,\e M]^m.\end{array}\right.
$$
Observe that by symmetry
$$
\int_{(\R^m\times \R^m)\setminus \left([-\e M,\e M]^m\times [-\e
M,\e M]^m\right)}\langle x,y\rangle\cdot \left\langle
f(x),f(y)\right\rangle d\gamma_m(x)d\gamma_m(y)=0,
$$
and therefore a similar crude estimate yields:
\begin{eqnarray}\label{eq;discretize RHS}
&&\!\!\!\!\!\!\!\!\!\!\!\!\!\!\!\!\nonumber\left|\int_{\R^m\times
\R^m} \langle x,y\rangle\cdot \left\langle f(x),f(y)\right\rangle
d\gamma_m(x)d\gamma_m(y)-\sum_{P,Q\in \F} a_{PQ}\left\langle
v_{\sigma(P)},v_{\sigma(Q)}\right\rangle\right|\\
&\le&\nonumber\sum_{P,Q\in \F} \int_{P\times Q}
\left|e^{-\frac{\|x\|_2^2+\|y\|_2^2}{2}}\left\langle
x,y\right\rangle-e^{-\frac{\|z_P\|_2^2+\|z_Q\|_2^2}{2}}\left\langle
z_P,z_Q\right\rangle\right|\left|\left\langle
v_{\sigma(P)},v_{\sigma(Q)}\right\rangle\right|dxdy\\&\le& O\left(
m^5\e\right)\max_{i\in \{1,\ldots,k\}}\|v_i\|_2^2.
\end{eqnarray}
Choosing $\e=m^{-6}$ (and thus $M=2m^7$), and
combining~\eqref{eq;discretize RHS} with~\eqref{eq:continuous RHS}
and~\eqref{eq:conclusion discretization cont kernel}, yields in
combination with~\eqref{eq:use assumtion sharpness} the bound:
$$
1-O\left(\frac{1}{m}\right)\le
K\left(C(B)+O\left(\frac{1}{m}\right)\max_{i\in
\{1,\ldots,k\}}\|v_i\|_2^2\right).
$$
Letting $m\to \infty$ concludes the proof of
Theorem~\ref{thm:sharp}.
\end{proof}

\remove{

\begin{corollary}\label{C(B) lower} If $B=(b_{ij})\in
M_k(\R)$ is symmetric, positive semidefinite, centered
($\sum_{i=1}^k\sum_{j=1}^k b_{ij}=0$) and spherical ($b_{ii}=1$ for
all $i\in \{1,\ldots,k\}$) then:
$$C(B)\ge\frac{9k}{8\pi(k-1)}.$$
\end{corollary}

\begin{proof}
In~\cite{KN08} it was shown that under these
assumptions~\eqref{eq:our gro with K} holds with
$K=\frac{8\pi}{9}\left(1-\frac{1}{k}\right)$, so that the required
result follows from Theorem~\ref{thm:sharp}.
\end{proof}}

\section{A sharp approximation algorithm for kernel
clustering}\label{sec:alg}

Let $A=(a_{ij})\in M_n(\R)$ be a centered symmetric positive
semidefinite matrix and let $B=(b_{ij})\in M_k(\R)$ be a symmetric
positive semidefinite matrix. Our goal is to design a polynomial
time algorithm which approximates the value:
$$ \Clust(A|B) = \max_{\sigma: \{1,\ldots,n\} \to \{1,\ldots,k\}}
  \sum_{i=1}^n\sum_{j=1}^n a_{ij} b_{\sigma(i) \sigma(j)}.
  $$
We proceed as follows. We first find vectors $v_1,\ldots,v_k\in
\R^k$ such that $b_{ij}=\langle v_i,v_j\rangle$ for all $i,j\in
\{1,\ldots, k\}$. This can be done in polynomial time (Cholesky
decomposition). Let $R(B)$ be the minimum radius of the Euclidean
ball in $\R^k$ that contains $\{v_1,\ldots,v_k\}$ and let $w(B)$ be
the center of this ball. Both $R(B)$ and $w(B)$ can be efficiently
computed by solving an appropriate semidefinite program.

We now use semidefinite programming to compute the value:
\begin{multline}\label{def:SDP}
\SDP(A|B)\coloneqq \max\left\{\sum_{i=1}^n\sum_{j=1}^n
a_{ij}\left\langle x_i,x_j\right \rangle :\ x_1,\ldots,x_n\in \R^n\
\wedge\ \|x_i\|_2\le 1\ \forall
i\in\{1,\ldots,n\}\right\}\\=\max\left\{\sum_{i=1}^n\sum_{j=1}^n
a_{ij}\left\langle x_i,x_j\right \rangle :\ x_1,\ldots,x_n\in
S^{n-1}\right\},
\end{multline}
where the last equality in~\eqref{def:SDP} holds since the function
$(x_1,\ldots,x_n)\mapsto \sum_{i=1}^n\sum_{j=1}^n a_{ij}\left\langle
x_i,x_j\right \rangle$ is convex (by virtue of the fact that $A$ is
positive semidefinite). We claim that
\begin{equation}\label{eq:guarantee}
\frac{\Clust(A|B)}{R(B)^2}\le \SDP(A|B)\le \frac{\Clust(A|B)}{C(B)},
\end{equation}
which implies that if we output the number $R(B)^2\Clust(A|B)$ we
will obtain a polynomial time algorithm which approximates
$\Clust(A|B)$ up to a factor of $\frac{R(B)^2}{C(B)}$.

To verify~\eqref{eq:guarantee} let $x_1^*,\ldots,x_n^*\in S^{n-1}$
and $\sigma^*:\{1,\ldots,n\}\to \{1,\ldots,k\}$ be such that
$$
\SDP(A|B)=\sum_{i=1}^n\sum_{j=1}^n a_{ij}\left\langle
x_i^*,x_j^*\right \rangle,
$$
and
$$
\Clust(A|B)=\sum_{i=1}^n\sum_{j=1}^n a_{ij} b_{\sigma^*(i)
\sigma^*(j)}.
$$

Write $(a_{ij})_{i,j=1}^n=(\langle u_i,u_j\rangle)_{i,j=1}^n$ for
some $u_1,\ldots,u_n\in\R^n$. The assumption that $A$ is centered
means that $\sum_{i=1}^nu_i=0$. The right-hand side of inequality
in~\eqref{eq:guarantee} is simply a restatement of
Theorem~\ref{them:our grothendieck}. The left-hand side
inequality~\eqref{eq:guarantee} follows from the fact that
$\frac{v_{\sigma^*(i)}-w(B)}{R(B)}$ has norm at most $1$ for all
$i\in \{1,\ldots,n\}$. Indeed, these norm bounds imply that:
\begin{eqnarray*}
\SDP(A|B)&\ge&  \sum_{i=1}^n\sum_{j=1}^n a_{ij}\left\langle
\frac{v_{\sigma^*(i)}-w(B)}{R(B)},\frac{v_{\sigma^*(j)}-w(B)}{R(B)}\right
\rangle\\
&=& \frac{1}{R(B)^2} \sum_{i=1}^n\sum_{j=1}^n a_{ij}\left\langle
v_{\sigma^*(i)},v_{\sigma^*(j)}\right
\rangle-\frac{2}{R(B)^2}\sum_{i=1}^n\left\langle
w(B),v_{\sigma^*(i)}\right\rangle\left\langle u_i,\sum_{j=1}^nu_j
\right\rangle+\frac{\|w(B)\|_2^2}{R(B)^2} \sum_{i=1}^n\sum_{j=1}^n
a_{ij}\\
&=& \frac{\Clust(A|B)}{R(B)^2}.
\end{eqnarray*}
\medskip

This completes the proof that our algorithm approximates efficiently
the number $\Clust(A|B)$, but does not address the issue of how to
efficiently compute an assignment $\sigma:\{1,\ldots,n\}\to
\{1,\ldots,k\}$ for which the induced clustering of $A$ has the
required value. An inspection of the proof of Theorem~\ref{them:our
grothendieck} shows that the issue here is to find efficiently a
conical simplicial partition $A_1,\ldots,A_k$ of $\R^{k-1}$ at which
$C(B)$ is almost attained, say
$$
\sum_{p=1}^k
\sum_{q=1}^kb_{pq}\left\langle\int_{A_p}xd\gamma_{k-1}(x),\int_{A_q}xd\gamma_{k-1}(x)\right\rangle\ge
(1-\e)C(B).
$$
Once this partition is computed, using the notation in the proof of
Theorem~\ref{them:our grothendieck} we have a randomized algorithm
which outputs an assignment $\sigma:\{1,\ldots,n\}\to
\{1,\ldots,k\}$ such that
$$
\E_\sigma\left[\sum_{i=1}^n\sum_{j=1}^n
a_{ij}b_{\sigma(i)\sigma(j)}\right]\ge
\frac{(1-\e)C(B)}{R(B)^2}\Clust(A|B).
$$
Note that there is no difficulty to compute $\sigma$ efficiently
once the partition $\{A_1,\ldots,A_k\}$ is given, since these sets
are simplicial cones. The issue with efficiency here is how to
compute this partition in polynomial time. As we discussed in
Remark~\ref{rem:how to compute}, this can be done when $k$ is fixed
(or grows very slowly with $n$), but we do not know how to do this
when, say, $k=\sqrt{n}$.

\section{Matching Unique Games hardness}\label{sec:UGC}

In this section we show that for a fixed positive semi-definite
matrix $B$, approximating $\Clust(A|B)$ within a ratio strictly
smaller than $\frac{R(B)^2}{C(B)}$ is Unique Games hard. We will
study functions $f: \{1,\ldots,k\}^n \to \R$ and their Fourier
spectrum at the first level. A novel feature of our proof is that
our Fourier analysis will be carried out with respect to a
distribution on $\{1,\ldots,k\}$
  that is not necessarily uniform. In fact, the choice of the distribution itself
is dictated by the matrix $B$ as described in
Section~\ref{sec:choose-mu}.

%\begin{fact}

%\end{fact}

\subsection{Choosing a special probability distribution on $\{1,\ldots,k\}$} \label{sec:choose-mu}

\begin{fact}\label{fact:separation}  Let $B=(b_{ij})$ be a $k \times k$ symmetric positive semi-definite matrix
and $b_{ij} = \langle v_i, v_j \rangle$ be its Gram representation,
where $v_1, \ldots, v_k$ are vectors (w.l.o.g.)  in $\R^k$.   Let
$R(B)$ be the minimum radius of a Euclidean ball containing all
these vectors, and $w(B)$ be the center of this ball. Then $w(B)$ is
a convex combination of the $v_i$'s that are on the boundary of the
ball. In other words, there exist non-negative coefficients $p(1),
\ldots, p(k)$ such that $\sum_{i=1}^k p(i) = 1$,
 $ \ w(B) = \sum_{i=1}^k p(i) v_i$ and  $p(i) \not= 0$
only if $\| v_i - w(B)\|_2 = R(B)$.
\end{fact}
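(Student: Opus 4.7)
The plan is to prove this classical ``smallest enclosing ball'' fact by a separation argument combined with a perturbation of the center. Let $S \coloneqq \{i \in \{1,\ldots,k\} : \|v_i - w(B)\|_2 = R(B)\}$ denote the set of indices of boundary points, and assume for contradiction that $w(B) \notin \conv\{v_i : i \in S\}$. The goal will be to produce a slightly different center $w'$ such that every $v_i$ lies in the open ball of radius $R(B)$ around $w'$, which contradicts the minimality of $R(B)$.

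First I would invoke the separating hyperplane theorem applied to the point $w(B)$ and the (closed, convex, compact) set $\conv\{v_i : i \in S\}$: this yields a unit vector $u \in \R^k$ and a real $\delta > 0$ such that
\begin{equation*}
\langle u, v_i - w(B)\rangle \ge \delta \quad \text{for every } i \in S.
\end{equation*}
The idea is to push the center in the direction $u$, setting $w' \coloneqq w(B) + \eps u$ for a small parameter $\eps > 0$ to be chosen.

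Next I would estimate $\|v_i - w'\|_2^2$ in the two cases. For $i \in S$, expanding the square gives
\begin{equation*}
\|v_i - w'\|_2^2 = R(B)^2 - 2\eps\langle u, v_i - w(B)\rangle + \eps^2 \le R(B)^2 - 2\eps\delta + \eps^2,
\end{equation*}
which is strictly less than $R(B)^2$ provided $0 < \eps < 2\delta$. For $i \notin S$ we have the strict inequality $\|v_i - w(B)\|_2 < R(B)$, and since the finite set $\{v_i : i \notin S\}$ is compact and the map $w \mapsto \max_{i \notin S}\|v_i - w\|_2$ is continuous, for $\eps$ small enough we still have $\|v_i - w'\|_2 < R(B)$ for all $i \notin S$. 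Choosing $\eps > 0$ small enough to satisfy both conditions yields $\max_i \|v_i - w'\|_2 < R(B)$, contradicting the definition of $R(B)$ as the minimum radius. Hence $w(B) \in \conv\{v_i : i \in S\}$, which gives the claimed non-negative coefficients $p(1),\ldots,p(k)$ summing to $1$ with $p(i) \neq 0$ only if $i \in S$, i.e., only if $\|v_i - w(B)\|_2 = R(B)$.

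There is essentially no obstacle here; the only care needed is to verify that the $v_i \notin S$ can be handled uniformly by a small perturbation, which is immediate from compactness since there are only finitely many $v_i$'s. The argument is a standard application of separation plus perturbation, and the uniqueness of the minimum enclosing ball (which implicitly justifies speaking of ``the'' center $w(B)$) also follows from the same type of convexity reasoning but is not needed for the statement we must prove.
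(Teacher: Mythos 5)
Your proof is correct and follows essentially the same route as the paper's: a separating-hyperplane argument applied to $w(B)$ and the convex hull of the boundary points, followed by a small perturbation of the center in the separating direction to derive a contradiction with the minimality of $R(B)$. You have merely spelled out the $\eps$--$\delta$ details that the paper leaves as a sketch.
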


Fact~\ref{fact:separation} is well known (see for example the proof
of Proposition 1.13 in~\cite{BenLin}). Its proof is a simple
separation argument. Indeed, define $J\coloneqq \{j\in
\{1,\ldots,k\}:\ \|v_j-w(B)\|_2=R(B)\}$ and let $K$ be the convex
hull of $\{v_j\}_{j\in J}$. Assume for the sake of contradiction
that $w(B)\notin K$. Then there would be a hyperplane $H$ separating
$w(B)$ from $K$. Moving $w(B)$ a little in the direction of $H$
would turn the equalities on $J$ to strict inequalities, while
preserving the strict inequalities off $J$. This contradicts the
minimality of $R(B)$.

We intend to use the probability distribution $(p(1), \ldots, p(k))$
from fact~\ref{fact:separation}.
%The probability space $(\Omega=[k], p(\cdot))$ will be important to us.
However, for technical reasons,  we need the probability mass for
each atom to be non-zero, and therefore, we will use a very small
perturbation of this distribution. Towards this end we define
$\mu(i) = (1-\beta) p(i) + \frac{\beta}{k}$ for every $i\in
\{1,\ldots,k\}$. The value of $\beta
> 0$ is chosen to be sufficiently small as in the following lemma.

\begin{lemma}\label{lemma:rb-approx}  Fix any $\eps > 0$ and the matrix $B$. Then for
a sufficiently small $\beta = \beta(\eps, B) > 0$,
\begin{equation}
 \sum_{i=1}^k \mu(i) \left\| v_i - \sum_{j=1}^k \mu(j) v_j \right\|_2^2
\geq R(B)^2-\eps. \label{eqn:p-approx}
\end{equation}
\end{lemma}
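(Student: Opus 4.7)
The plan is to observe that the expression
$$F(\beta)\coloneqq \sum_{i=1}^k \mu(i) \left\| v_i - \sum_{j=1}^k \mu(j) v_j \right\|_2^2$$
is a polynomial (hence continuous) function of $\beta\in [0,1]$, since $\mu(i)=(1-\beta)p(i)+\beta/k$ depends linearly on $\beta$, and then to verify that $F(0)=R(B)^2$. Continuity will then immediately give $F(\beta)\ge R(B)^2-\eps$ for all sufficiently small $\beta=\beta(\eps,B)>0$, which is exactly the content of \eqref{eqn:p-approx}.

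The only step with content is evaluating $F(0)$. When $\beta=0$ we have $\mu(i)=p(i)$, so $\sum_{j=1}^k \mu(j)v_j=\sum_{j=1}^k p(j)v_j=w(B)$ by the definition of $p$ in Fact~\ref{fact:separation}. Hence
$$F(0)=\sum_{i=1}^k p(i)\,\|v_i-w(B)\|_2^2.$$
Fact~\ref{fact:separation} further tells us that $p(i)\neq 0$ only when $\|v_i-w(B)\|_2=R(B)$, so every nonzero term in the sum contributes $p(i)R(B)^2$, and $\sum_{i=1}^k p(i)=1$ yields $F(0)=R(B)^2$.

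Combining the two observations, given $\eps>0$ there is some $\beta=\beta(\eps,B)>0$ such that $|F(\beta)-F(0)|<\eps$, and for any such $\beta$ we obtain $F(\beta)\ge R(B)^2-\eps$, as required. There is no real obstacle here; the lemma is a soft continuity/perturbation statement whose sole purpose is to replace the degenerate distribution $p$ (which may be supported on a strict subset of $\{1,\ldots,k\}$) by a fully supported distribution $\mu$ without losing more than $\eps$ in the quadratic moment around the mean. This fully supported property is what will be needed later in Section~\ref{sec:UGC} in order to set up the dictatorship test on the product space $(\{1,\ldots,k\}^n,\mu^{\otimes n})$.
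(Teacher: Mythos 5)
Your proof is correct and matches the paper's primary argument almost verbatim: the paper likewise observes that at $\beta=0$ the expression equals $\sum_i p(i)\|v_i-w(B)\|_2^2=R(B)^2$ (by Fact~\ref{fact:separation}) and invokes continuity in $\beta$. The paper then also supplies, "for concreteness," an explicit chain of inequalities yielding a quantitative bound $\beta=\eps/(7R(B)^2)$, which your soft argument omits but which is not needed for the lemma as stated.
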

\begin{proof} Note that if $\beta = 0$, then $\mu(i)=p(i)$ for all $i\in \{1,\ldots, k\}$,
 and
 $$\sum_{i=1}^k \mu(i) \left\| v_i - \sum_{j=1}^k \mu(j) v_j \right\|_2^2 =
 \sum_{i=1}^k p(i) \| v_i - w(B) \|_2^2 = R(B)^2,$$
 since  $p(i) \not= 0$
only if $\| v_i - w(B)\|_2 = R(B)$.  Thus by continuity for
sufficiently small $\beta$ the inequality (\ref{eqn:p-approx})
holds. For concreteness we also give a direct argument which gives a
reasonable bound on $\beta$. Assume that $\beta<\frac17$. Then,
using the fact that $\mu\ge (1-\beta)p$ (point-wise), we see that:
\begin{eqnarray*}
&&\!\!\!\!\!\!\!\!\!\!\!\!\!\!\!\left(\sum_{i=1}^k \mu(i) \left\|
v_i - \sum_{j=1}^k \mu(j) v_j \right\|_2^2\right)^{1/2}\ge
\sqrt{1-\beta}\left(\sum_{i=1}^k p(i) \left\|(1-\beta)\left( v_i -
\sum_{j=1}^k p(j) v_j\right)+\frac{\beta}{k}\sum_{j=1}^k(v_i-v_j)
\right\|_2^2\right)^{1/2}\\
&\ge& \sqrt{1-\beta}\left(\sum_{i=1}^k p(i) \left\|(1-\beta)
(v_i-w(B))\right\|_2^2\right)^{1/2}-\sqrt{1-\beta}\left(\sum_{i=1}^kp(i)
\left\|\frac{\beta}{k}\sum_{j=1}^k(v_i-v_j)\right\|_2^2
\right)^{1/2}\\&\ge&
(1-\beta)^{3/2}R(B)-\beta\sqrt{1-\beta}\left(\sum_{i=1}^kp(i)
\frac{1}{k}\sum_{j=1}^k\left\|v_i-v_j\right\|_2^2 \right)^{1/2}\\
&\ge& (1-\beta)^{3/2}R(B)-\beta\sqrt{1-\beta}\max_{i,j\in
\{1,\ldots,k\}}\|v_i-v_j\|_2\\
&\ge& \sqrt{1-\beta}\left(1-3\beta\right)R(B)\\&\ge&
\sqrt{1-7\beta}\cdot R(B),
\end{eqnarray*}
where in the penultimate inequality we used the trivial fact that
$\max_{i,j\in \{1,\ldots,k\}}\|v_i-v_j\|_2\le 2R(B)$. Thus we can
take $\beta=\frac{\e}{7R(B)^2}$ to ensure the validity
of~\eqref{eqn:p-approx}.
\end{proof}

Henceforth we fix the probability space $(\Omega = \{1,\ldots,k\},
\mu)$. %Also, for the sake of simplicity use the notation
%$[k]\coloneqq\{1,\ldots,k\}$.
Let $U=(u_{ij})$ be a $k\times k$ orthogonal matrix such that
$u_{1j}=\sqrt{\mu(j)}$ for all $j\in \{1,\ldots,k\}$ (such an
orthogonal matrix exists since this ensures that $\sum_{j=1}^k
u_{1j}^2=1$). Now define random variables
$X_1,\ldots,X_k:\{1,\ldots,k\}\to \R$ by
$X_i(j)=\frac{u_{ij}}{\sqrt{\mu(j)}}$ (here is one place where we
need the atoms of $\mu$ to have positive mass. We will also use this
fact to allow for the application of the result of~\cite{MOO} in the
proof of Theorem~\ref{thm:dict-soundness} below). Then by design
$X_1$ is the constant $1$ function, and for all $i,j\in
\{1,\ldots,k\}$ we have:
$$
\sum_{\ell=1}^k\mu(\ell)X_i(\ell)X_j(\ell)=\sum_{\ell=1}^k
u_{i\ell}u_{j\ell}=(UU^t)_{ij}=\delta_{ij},
$$
where $\delta_{ij}$ is the Kronecker delta. Similarly:
$$
\sum_{\ell=1}^kX_\ell(i)X_\ell(j)=\frac{1}{\sqrt{\mu(i)\mu(j)}}\sum_{\ell=1}^k
u_{\ell i}u_{\ell
j}=\frac{(U^tU)_{ij}}{\sqrt{\mu(i)\mu(j)}}=\frac{\delta_{ij}}{\mu(i)}.
$$
By relabeling these random variables (for the sake for simplicity of
later notation) we thus obtain the following lemma:
\begin{lemma} \label{lemma:ortho-basis}
There exist random variables $X_0, X_1,\ldots, X_{k-1}$ on $\Omega$
such that:
\begin{itemize}
\item $X_0 \equiv 1$.
\item For $i,j\in \{0,\ldots,k-1\}$ we have
$$\E_{\mu} [ X_i  X_j
 ]= \left\{\begin{array}{ll}0 & \mathrm{if}\ i\not=j,\\1 &\mathrm{if}\
 i=j.\end{array}\right.$$
\item For every $\omega, \omega' \in \Omega$ we have
$$\sum_{i=0}^{k-1} X_i(\omega) X_i(\omega')= \left\{\begin{array}{ll}0 & \mathrm{if}\ \omega\not=\omega',
\\\frac{1}{\mu(\omega)} &\mathrm{if}\
 \omega=\omega'.\end{array}\right.$$
\end{itemize}
\end{lemma}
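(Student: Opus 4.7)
The plan is to build the $X_i$ directly from the rows of a carefully chosen orthogonal matrix, so that orthogonality both in rows and in columns gives us the two required inner-product identities. The probability measure $\mu$ sits in the picture through a $\sqrt{\mu}$-reweighting which turns the constant function $1$ on $\Omega$ into a unit vector in $\R^k$.

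More precisely, I would first observe that the vector $(\sqrt{\mu(1)}, \ldots, \sqrt{\mu(k)}) \in \R^k$ has Euclidean norm $1$ because $\mu$ is a probability measure, and that all of its coordinates are strictly positive by the definition of $\mu$ via the perturbation $\mu(i) = (1-\beta)p(i) + \beta/k$. I would then extend this unit vector to an orthonormal basis of $\R^k$, say by Gram-Schmidt, yielding an orthogonal $k \times k$ matrix $U = (u_{ij})$ whose first row is $(u_{1j})_{j=1}^k = (\sqrt{\mu(j)})_{j=1}^k$.

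The next step is to define, for $i \in \{1,\ldots,k\}$, a function $X_i \colon \Omega \to \R$ by $X_i(j) \coloneqq u_{ij}/\sqrt{\mu(j)}$; this is well-defined precisely because $\mu(j) > 0$ for every $j$ (which is why we needed the perturbation by $\beta$). With this definition, $X_1 \equiv 1$ by the choice of the first row of $U$, while for any $i, i' \in \{1,\ldots,k\}$ the expectation $\E_\mu[X_i X_{i'}]$ equals $\sum_{j=1}^k \mu(j) \cdot \frac{u_{ij} u_{i'j}}{\mu(j)} = (UU^t)_{ii'} = \delta_{ii'}$. Similarly, for $\omega, \omega' \in \Omega$, $\sum_{i=1}^k X_i(\omega) X_i(\omega') = \frac{1}{\sqrt{\mu(\omega)\mu(\omega')}}(U^t U)_{\omega\omega'} = \delta_{\omega\omega'}/\mu(\omega)$. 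Relabeling the $X_i$ as $X_0, \ldots, X_{k-1}$ (with $X_0$ taken to be the constant $1$ coming from the first row) yields the statement.

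There is no real obstacle here: every step is routine linear algebra once the correct first row of $U$ is identified. The only conceptual point to flag is the necessity of $\mu(j) > 0$ for all $j$, which is exactly why Lemma~\ref{lemma:rb-approx} uses the $\beta/k$ perturbation of $p$; without it, the functions $X_i(j) = u_{ij}/\sqrt{\mu(j)}$ would be undefined on atoms where $p(j)=0$, and this is the same feature that will later enable the invocation of~\cite{MOO} in the proof of Theorem~\ref{thm:dict-soundness}.
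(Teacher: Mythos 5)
Your proof is correct and follows exactly the paper's construction: extend the unit vector $(\sqrt{\mu(j)})_{j=1}^k$ to an orthogonal matrix $U$, set $X_i(j)=u_{ij}/\sqrt{\mu(j)}$, and read off the two orthogonality identities from $UU^t=I$ and $U^tU=I$. You also correctly identify the role of the $\beta/k$ perturbation in guaranteeing $\mu(j)>0$, which is the same point the paper makes.
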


\subsection{Dictatorships vs. functions with small influences}

%In what follows all functions are assumed to be measurable %and we
%use the notation $[k]\coloneqq\{1,\ldots,k\}$.
In this section we
will associate to every function from $\{1,\ldots,k\}^n$ to
$$\Delta_k\coloneqq \left\{ x \in \R^k : \ x_i \geq 0 \
\forall \ i \in \{1,\ldots,k\}, \  \ \sum_{i=1}^k x_i = 1 \right\}$$
a numerical parameter, or ``objective value". We will show that the
value of this parameter for functions which depend only on a single
coordinate (i.e. dictatorships) differs markedly from its value on
functions which do not depend significantly on any particular
coordinate (i.e. functions with small influences). This step is an
analog of the ``dictatorship test" which is prevalent in PCP based
hardness proofs.

We begin with some notation and preliminaries on Fourier-type
expansions. For any function $f: \R^n \to \Delta_k$ we write
$f=(f_1, f_2, \ldots, f_k)$ where $f_i : \R^n \to [0,1]$ and
$\sum_{i=1}^k f_i = 1$. With this notation we have
 $$ C(B) =\sup_{f : \R^{k-1} \to \Delta_k}  \sum_{i=1}^k\sum_{j=1}^k b_{ij}
   \left\langle  \int_{\R^{k-1}} x
 f_i(x) d\gamma_{k-1} (x), \ \int_{\R^{k-1}} x
 f_j(x) d\gamma_{k-1} (x)   \right\rangle $$
where $C(B)$ is as in Section~\ref{sec:simplices}.  We have already
seen that the supremum above is actually attained. Also $C(B)$
remains the same if the supremum is taken over functions over $\R^n$
with $n \geq k-1$, i.e. for every $n \geq k-1$,
$$ C(B) =\sup_{f : \R^{n} \to \Delta_k}  \sum_{i=1}^k\sum_{j=1}^k b_{ij}
   \left\langle  \int_{\R^{n}} x
 f_i(x) d\gamma_{n} (x), \ \int_{\R^{n}} x
 f_j(x) d\gamma_{n} (x)   \right\rangle.  $$

Let $(\Omega = \{1,\ldots,k\}, \mu)$ be the probability space as
chosen in Section \ref{sec:choose-mu}.
  Let $(\Omega^n, \mu^n)$ be the associated product space. We
will be analyzing functions $f: \Omega^n \to \Delta_k$ (and more
generally into $\R^k$). As in Lemma \ref{lemma:ortho-basis}, fix a
basis of orthonormal random variables on $\Omega$ where one of them
is the constant $1$ function, that is  $\{X_0 \equiv 1, X_1, \ldots,
X_{k-1}\}$.
%where $\forall \ i, X_i : \Omega \to \R$,
%$X_0 \equiv 1$ and $\E_{\omega \in \Omega} [ X_i (\omega) X_j
%(\omega) ] = 0$ for $i\not=j$ and equal to $1$ if $i=j$.
 Then any
function $f: \Omega \to \R$ can be written as a linear combination
of the $X_i$'s.

In order to analyze functions $f: \Omega^n \to \R$, we let $\calX =
 (\calX_1, \calX_2, \ldots, \calX_n)$ be an ``ensemble" of random variables
 where for $i\in \{1,\ldots,n\}$ we write $\calX_i = \{ X_{i,0}, X_{i,1}, \ldots, X_{i,k-1}\}$, and
 for every $i$,  $\{X_{i,j}\}_{j=0}^{k-1}$ are independent copies of  the $\{X_j\}_{j=0}^{k-1}$. Any
 $\sigma = (\sigma_1, \sigma_2, \ldots, \sigma_n) \in \{0,1,2,\ldots, k-1\}^n$ will be called a
 multi-index. We shall denote by $|\sigma|$ the number on non-zero
 entries in $\sigma$.
 Each multi-index defines a monomial
 $$
 x_{\sigma}
  := \prod_{\substack{i \in \{1,\ldots,n\}\\ \sigma_i\not=0}} x_{i, \sigma_i}
  $$ on a set of $n(k-1)$ indeterminates $\{x_{ij} \ | \
   i \in \{1,\ldots,n\}, j \in \{1,2,\ldots, k-1\} \}$, and also a random variable
   $X_\sigma : \Omega^n \to \R$ as
    $$ X_\sigma (\omega) :=  \prod_{i=1}^n X_{i,\sigma_i} (\omega_i). $$
The random variables $\{X_\sigma\}_\sigma$ form an orthonormal basis
for the space of functions $f: \Omega^n \to \R$. Thus, every such
$f$ can be written uniquely as (the ``Fourier expansion")
 $$ f = \sum_{\sigma} \widehat{f}(\sigma) X_\sigma, \quad \widehat{f}(\sigma)
 \in \R. $$
We denote the corresponding multi-linear polynomial as $Q_f =
\sum_\sigma \widehat{f}(\sigma) x_\sigma$. One can think of $f$ as
the polynomial $Q_f$ applied to the ensemble $\calX$, i.e. $f =
Q_f(\calX)$. Of course, one can also apply $Q_f$ to any other
ensemble, and specifically to the Gaussian ensemble $\calG =
(\calG_1, \calG_2, \ldots, \calG_n)$ where $\calG_i = \{
G_{i,0}\equiv 1, G_{i,1},\ldots, G_{i, k-1}\}$ and $G_{i,j}, i \in
\{1,\ldots,n\}, j\in \{1,\ldots,k-1\}$ are i.i.d. standard
Gaussians. Define the influence of the $i$'th variable on $f$ as
$$
\infl_i(f)\coloneqq \sum_{\sigma_i\neq 0} \widehat f(\sigma)^2.
$$
Roughly speaking, the results of~\cite{Rotar79,MOO}
 say that if $f: \Omega^n \to [0,1]$ is a
function  all of whose influences are small, then $f= Q_f(\calX)$
and $Q_f(\calG)$ are almost identically distributed, and in
particular, the values of $Q_f(\calG)$ are essentially contained in
 $[0,1]$. Note that $Q_f(\calG)$ is a random variable on
the probability space $(\R^{n (k-1)}, \gamma_{n(k-1)})$.

%Note that $Q_f$ is a multi-linear polynomial and
%we can ``substitute" $x_{ij} = G_{ij}$ where $G_{i0} \equiv 1$ and $G_{ij}, j \not= 0$ and standard i.i.d. gaussians. The resulting random variable is denoted as
% $Q_f(

%\section{Dictatorship Test}

Consider functions $f: \Omega^n \to \Delta_k$. We write $f= (f_1,
f_2, \ldots, f_k)$ where $f_i : \Omega^n \to [0,1]$ with
$\sum_{i=1}^k f_i = 1$. Each $f_i$ has a unique representation
(along with the corresponding multi-linear polynomial)
 $$f_i = \sum_\sigma \widehat{f_i}(\sigma) X_\sigma,  \quad \quad Q_i := Q_{f_i} =
  \sum_{\sigma} \widehat{f_i}(\sigma) x_\sigma.  $$

We shall define an objective function ${\rm OBJ}(f)$ that is a
positive semidefinite quadratic form on the table of values of $f$
which corresponds to a centered symmetric positive semidefinite
bilinear form. Then we analyze the value of this objective function
when $f$ is a ``dictatorship" versus when $f$ has all low
influences.

\subsubsection*{The objective value} For a function $f: \Omega^n \to
\Delta_k$ (or more generally, $f: \Omega^n \to \R^k$) define
\begin{equation}\label{eq:OBJ}
 \mbox{OBJ} (f)   :=  \sum_{i=1}^k\sum_{j=1}^k b_{ij}
  \left( \sum_{\sigma:\  |\sigma|=1} \widehat{f_i}( \sigma) \widehat{f_j}( \sigma) \right).
\end{equation}
 Note that there are
$n(k-1)$ multi-indices $\sigma$ such that $|\sigma|=1$.

\subsubsection*{The objective value for dictatorships} For $\ell \in
\{1,\ldots,n\}$ we define a dictatorship function $f^{dict, \ell}:
\Omega^n \to \Delta_k$ as follows. The range of the function is
limited to only $k$ points in $\Delta_k$, namely the points $\{e_1,
e_2, \ldots, e_k\}$ where $e_i$ is a vector with $i^{th}$ coordinate
$1$ and all other coordinates zero.

\begin{equation}
  f^{dict, \ell} (\omega)  :=  e_i \ \quad  \mbox{if}  \ \omega_\ell = i.
\end{equation}
In other words, when one writes
 $ f^{dict, \ell}  = (f_1, f_2, \ldots, f_k)$, for $i\in \{1,\ldots,k\}$,
 $f_i$ is  $\{0,1\}$-valued and
 $f_i(\omega) =1$ if and only if $\omega_\ell = i$.
 The Fourier expansion of $f_i$
  is
 \begin{eqnarray}\label{eq:expand dict}
 f_i (\omega) =  \mu(i) \sum_{\sigma:\  \sigma_j=0 \ \forall j \not= \ell}
  X_{\sigma_\ell}(i) \    X_\sigma (\omega).
   \end{eqnarray}
Indeed, the right hand side of~\eqref{eq:expand dict} equals
\begin{eqnarray*}
 \mu(i) \sum_{0 \leq \sigma_\ell \leq k-1}
  X_{\sigma_\ell}(i) \    X_{\sigma_\ell} (\omega_\ell) =
  \left\{ \begin{array}{l}
           1 \ \  \mbox{if} \ \omega_\ell = i, \\
           0 \ \  \mbox{otherwise.} \quad\quad\quad \mbox{(see Lemma \ref{lemma:ortho-basis})}
           \end{array} \right.
\end{eqnarray*}
Thus,
\begin{eqnarray}
\mbox{OBJ} \left(f^{dict,\ell}\right)   & = &
\sum_{i=1}^k\sum_{j=1}^k b_{ij}
  \left( \sum_{\sigma:\  |\sigma|=1} \widehat{f_i}( \sigma) \widehat{f_j}( \sigma) \right) \nonumber \\
  & = & \sum_{i=1}^k\sum_{j=1}^k b_{ij} \left( \sum_{r=1}^{k-1} \mu(i)X_r(i) \mu(j) X_r(j) \right) \nonumber \\
  & = & \sum_{i=1}^k\sum_{j=1}^k b_{ij} \cdot \mu(i)\mu(j) \left(\sum_{r=0}^{k-1} X_r(i) X_r(j) - 1 \right) \nonumber \\
  & = & \sum_{\substack{i,j\in \{1,\ldots,k\}\\i\neq j}} \langle v_i, v_j \rangle \cdot \mu(i)\mu(j) (-1)
   + \sum_{i=1}^k \langle v_i, v_i \rangle \cdot \mu(i)^2 \left( \frac{1}{\mu(i)} -1  \right) \nonumber \\
  & = & \sum_{i=1}^k \mu(i) \left\| v_i - \sum_{j=1}^k \mu(j)v_j \right\|_2^2 \nonumber \\
  & \geq & R(B)^2- \eps, \label{eqn:rbs-dict}
\end{eqnarray}
using Lemma \ref{lemma:rb-approx}.

\subsubsection*{The objective value for functions with low
influences} For $f:\Omega^n\to \R$, $j\in \{1,\ldots,n\}$ and $m\in
\mathbb N$ denote (the ``degree $m$-influence" of $f$):
$$
\infl_j^{\le m}(f)\coloneqq \sum_{\substack{|\sigma|\le
m\\\sigma_j\neq 0}}\widehat f(\sigma)^2.
$$
For every $0 \leq \rho \leq 1$ we will use the smoothing operator:
$$
T_\rho f = \sum_{\sigma} \rho^{|\sigma|}\widehat{f}(\sigma)
X_\sigma.
$$

Equivalently,
$$ T_\rho f (\omega_1,\ldots,\omega_n) = \E [f(\omega'_1,\ldots,\omega'_n)], $$
where independently for each $i$, $\omega'_i$ is chosen to be
$\omega_i$ with probability $\rho$ and a random (with respect to the
underlying distribution $\mu$) element in $\Omega$ with probability
$1-\rho$.

\medskip
The following theorem is the key analytic fact used in our UGC
hardness result:

\begin{theorem}\label{thm:dict-soundness}
 For every $\eps > 0$, there exists $\tau > 0$  so that the following holds: for any function
 $f : \Omega^n\to \Delta_k$ which satisfies
 $$\forall\  i \in \{1,\ldots,k\},\ \forall\  j \in \{1,\ldots,n\},  \quad \infl_j^{\leq \log (1/\tau)}(f_i) \leq \tau
 $$
 we have,
 $$  {\rm OBJ}(f) \leq C(B)+\eps. $$
\end{theorem}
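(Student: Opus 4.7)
My plan is to reduce the discrete quantity $\mathrm{OBJ}(f)$ to a bound on Gaussian space, where the definition of $C(B)$ applies directly. The crucial observation is that for any multilinear polynomial $Q$, the degree-one Hermite coefficients of $Q(\calG)$ coincide with the first-level coefficients of $Q(\calX)$. Hence, writing $Q_i\coloneqq Q_{f_i}$,
$$\mathrm{OBJ}(f)=\sum_{i=1}^k\sum_{j=1}^k b_{ij}\left\langle \Rad(Q_i(\calG)),\Rad(Q_j(\calG))\right\rangle_{L_2(\gamma_{n(k-1)})},$$
where $\Rad$ denotes the Hermite degree-one (Rademacher) projection on $L_2(\gamma_{n(k-1)})$. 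Since $\Rad$ is an orthogonal projection, an $L_2$-small perturbation of each $Q_i(\calG)$ changes this bilinear form by only a small amount.

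First I would replace $f$ by $T_{1-\eta}f$ for an $\eta=\eta(\eps,B)>0$ to be chosen. Since $\mathrm{OBJ}$ involves only level-one Fourier coefficients, which $T_{1-\eta}$ scales by $1-\eta$, we have $\mathrm{OBJ}(T_{1-\eta}f)=(1-\eta)^2\,\mathrm{OBJ}(f)$, so it suffices to prove $\mathrm{OBJ}(T_{1-\eta}f)\le C(B)+O(\eps)$. The smoothing also makes the $L_2$-mass of $T_{1-\eta}f_i$ above degree $d\coloneqq \log(1/\tau)$ exponentially small in $d$, which, together with the hypothesis $\infl_j^{\le d}(f_i)\le\tau$, puts us in the standard setting of the Mossel--O'Donnell--Oleszkiewicz (MOO) invariance principle (the atoms of $\mu$ have positive mass by construction in Section~\ref{sec:choose-mu}).

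Second, I would apply the MOO invariance principle to the tuple $(Q_1,\ldots,Q_k)$ with a bounded smooth approximation of $x\mapsto \mathrm{dist}(x,\Delta_k)^2$ as the test function, coupled with standard hypercontractive tail bounds on $\|Q(\calG)\|_2$ to remove contributions from large $\|x\|$. Since $Q(\calX)\in\Delta_k$ almost surely, this yields $\E\|Q(\calG)-\Pi_{\Delta_k}Q(\calG)\|_2^2\le\eps'$ for any desired $\eps'>0$ once $\tau$ is taken small enough, where $\Pi_{\Delta_k}:\R^k\to\Delta_k$ is the nearest-point projection (which is $1$-Lipschitz) and $Q(\calG)\coloneqq (Q_1(\calG),\ldots,Q_k(\calG))$. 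Setting $g\coloneqq\Pi_{\Delta_k}\circ Q(\calG):\R^{n(k-1)}\to\Delta_k$, each coordinate $g_i$ is $L_2$-close to $Q_i(\calG)$.

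Third, since $g\in\Delta_k(\gamma_{n(k-1)})$, Lemma~\ref{lem:old reduction to k-1} and the definition of $C(B)$ give
$$\sum_{i=1}^k\sum_{j=1}^k b_{ij}\left\langle \int x g_i(x)d\gamma_{n(k-1)}(x),\int x g_j(x)d\gamma_{n(k-1)}(x)\right\rangle \le C(B).$$
The left-hand side equals $\sum_{i,j}b_{ij}\langle\Rad(g_i),\Rad(g_j)\rangle$, and by Cauchy--Schwarz together with the $L_2$-closeness of $g_i$ to $Q_i(\calG)$ it differs from $\mathrm{OBJ}(T_{1-\eta}f)$ by at most $O\bigl(\sqrt{\eps'}\cdot\sum_{i,j}|b_{ij}|\cdot\max_i\|\Rad(Q_i(\calG))\|_2\bigr)$, where the $\Rad$ norms are $O(1)$ since $\|Q_i(\calG)\|_{L_2}=\|T_{1-\eta}f_i\|_{L_2(\mu^n)}\le 1$. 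Choosing $\eta$ and then $\tau$ as appropriate functions of $\eps$ and $B$ completes the proof. The main obstacle is the rounding step: MOO yields comparison only for smooth bounded test functions, while $\mathrm{dist}(\cdot,\Delta_k)^2$ is unbounded. This is resolved by a hypercontractive tail cutoff, reducing to a ball on which this function can be smoothly approximated. A secondary subtlety, specific to general $B$, is that we genuinely need $g\in\Delta_k(\gamma_{n(k-1)})$ rather than in the larger set $\widetilde{\Delta_k(\gamma_{n(k-1)})}$ from Remark~\ref{remark:simplex with 0}, which is why we use the full simplex projection $\Pi_{\Delta_k}$ rather than coordinatewise chopping into $[0,1]$.
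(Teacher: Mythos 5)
Your proposal is correct and follows the same broad outline as the paper's proof --- apply $T_{1-\eta}$ (the paper writes $T_{1-\delta}$), pass from the discrete ensemble $\calX$ to the Gaussian ensemble $\calG$ via the invariance principle (using the identity that ${\rm OBJ}(f)$ is exactly the $C(B)$-type bilinear form on the degree-one Hermite coefficients of $Q_i(\calG)$, which the paper also uses in equation~\eqref{eqn:estimate-Rn}), repair the resulting real-valued functions on Gaussian space so they take values in $\Delta_k$, and invoke the definition of $C(B)$ --- but the repair step takes a genuinely different route. The paper applies Theorem 3.20 of~\cite{MOO} to each coordinate $Q_i$ separately, obtaining $\|S_i - \chop(S_i)\|_{L_2(\gamma_{n(k-1)})}^2 \le \eta$, and then restores the simplex constraint by renormalizing, setting $S_i^* = \chop(S_i)/\sum_{j=1}^k\chop(S_j)$; the point making this division harmless is that $\sum_{j=1}^k S_j \equiv 1$ exactly (since $T_{1-\delta}$ fixes constants and $\sum_j f_j = 1$), whence $|\sum_j\chop(S_j) - 1| \le \sum_j|S_j - \chop(S_j)|$ pointwise, yielding $\|S_i - S_i^*\|_{L_2(\gamma_{n(k-1)})} \le (k+1)\sqrt{\eta}$. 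You instead run a vector-valued invariance argument with a smoothed, tail-truncated version of $x\mapsto\mathrm{dist}(x,\Delta_k)^2$ as test function and then apply the $1$-Lipschitz nearest-point projection $\Pi_{\Delta_k}$ in one shot. Your route is geometrically cleaner and avoids dividing by $\sum_j\chop(S_j)$, at the cost of needing a multi-function form of the invariance principle and the hypercontractive tail cutoff you describe; the paper's route is more off-the-shelf, since the coordinate-wise chop bound is precisely what MOO Theorem 3.20 delivers without further smoothing of a test function. Both constructions land in $\Delta_k(\gamma_{n(k-1)})$ and both close correctly. The subtlety you flag about needing membership in $\Delta_k(\gamma_{n(k-1)})$ rather than merely $\widetilde{\Delta_k(\gamma_{n(k-1)})}$ is real and important for general $B$ (coordinate-wise chopping alone only places the vector in $[0,1]^k$, not on the simplex), but the paper addresses it too, via the renormalization step rather than via the simplex projection.
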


\begin{proof} Let $\delta, \eta > 0$ be sufficiently small constants to be chosen later.
Let   $Q_i= Q_{f_i}$ be the multi-linear polynomial associated with
$f_i$. Recall that $Q_i$ is a multi-linear polynomial in the
$n(k-1)$ indeterminates
 $\left\{ x_{jp} \ | \ j \in \{1,\ldots,n\}, p \in \{1,\ldots, k-1\}\right\}$.  Moreover $f_i = Q_i(\calX)$
has range $[0,1]$ and $\sum_{i=1}^k f_i = 1$.

%For the sake of analysis, it would be convenient to define for every $T \subseteq
%[k]$, $Q_T := \sum_{i\in T} Q_i$. Clearly, $Q_T(\calX)$ is bounded in $[0,1]$.

Let $R_i = (T_{1-\delta} Q_i) (\calX) $ and $S_i = (T_{1-\delta}
Q_i) (\calG)$ (the smoothening operator $T_{1-\delta}$ helps  us
meet some technical pre-conditions before applying the invariance
principle of~\cite{MOO}).
 Note that $R_i$ has range $[0,1]$ and $S_i$ has range $\R$. It will follow however from~\cite{MOO}
 that $S_i$ is {\it essentially}  in $[0,1]$. First we relate
${\rm OBJ}(f)$ to the functions $S_i$ which will, up to truncation,
induce a partition of $\R^{n(k-1)}$, which in turn will give the
bound in terms of $C(B)$.

%$R= \sum_{i=1}^k R_i = (T_{1-\delta} Q) (\calX) $, and $S = \sum_{i=1}^k S_i = (T_{1-\delta} Q) (\calG) $.

%\begin{remark}  Note that $\sum_{i=1}^k R_i \equiv \sum_{i=1}^k S_i
% \equiv \sum_{i=1}^k T_{1-\delta} Q_i \equiv  T_{1-\delta} \sum_{i=1}^k f_i \equiv T_{1-\delta} \ 1 \equiv 1$.
%\end{remark}

\begin{eqnarray}
&&\!\!\!\!\!\!\!\!\!\!\!\!\!\!\!\!(1-\delta)^2 \cdot \mbox{OBJ}(f)
    =  (1-\delta)^2 \sum_{i=1}^k\sum_{\ell=1}^k b_{i\ell} \sum_{\sigma: |\sigma|=1} \widehat{f_i}(\sigma)\widehat
    {f_\ell}(\sigma)  \nonumber \\
   & = & (1-\delta)^2 \sum_{i=1}^k\sum_{\ell=1}^k b_{i\ell}  \sum_{j=1}^n \sum_{p=1}^{k-1}
    \left(  \int_{\R^{n(k-1)}}
  x_{jp}  \ Q_i (x) d\gamma_{n(k-1)} (x)\right) \cdot \left(   \int_{\R^{n(k-1)}}
  x_{jp}  \ Q_\ell (x) d\gamma_{n(k-1)} (x)  \right)  \nonumber \\
  & = & (1-\delta)^2 \sum_{i=1}^k\sum_{\ell=1}^k  b_{i\ell}
    \left\langle  \int_{\R^{n(k-1)}}
  x   \ Q_i (x) d\gamma_{n(k-1)} (x),   \int_{\R^{n(k-1)}}
  x  \ Q_\ell (x) d\gamma_{n(k-1)} (x)  \right\rangle  \nonumber \\
 & = & \sum_{i=1}^k\sum_{\ell=1}^k b_{i\ell}
     \left\langle  \int_{\R^{n(k-1)}}
  x   \ (T_{1-\delta}Q_i) (x) d\gamma_{n(k-1)} (x),   \int_{\R^{n(k-1)}}
  x  \ (T_{1-\delta}Q_\ell)  (x) d\gamma_{n(k-1)} (x)  \right\rangle  \nonumber \\
   & = & \sum_{i=1}^k\sum_{\ell=1}^k b_{i\ell}
     \left\langle  \int_{\R^{n(k-1)}}
  x   \ S_i (x) d\gamma_{n(k-1)} (x),   \int_{\R^{n(k-1)}}
  x  \ S_\ell  (x) d\gamma_{n(k-1)} (x)  \right\rangle.
    \label{eqn:estimate-Rn}
\end{eqnarray}
We shall now bound the last term above by $C(B)+ o(1)$. For any
real-valued function $h$ on $\R^{n (k-1)}$, let
   $$ \chop(h) (x) :=   \left\{ \begin{array}{ll}
                         0 & \mbox{if} \ h(x) < 0, \\
                         h(x) & \mbox{if} \ h(x) \in [0,1], \\
                         1 & \mbox{if} \ h(x) > 1.
                           \end{array}  \right.  $$
%For every subset $I \subseteq [k]$, let $Q_I := \sum_{i \in I} Q_i$.
%Since every $Q_i$ has small low-degree influence, so does every
%$Q_I$. Let
% $R_I \coloneqq \sum_{i \in I} (T_{1-\delta} Q_i) (\calX) $, and
% $S_I \coloneqq \sum_{i \in I} (T_{1-\delta} Q_i) (\calG)$. Note that $R_{ \{i\} }=R_i$
% and $S_{ \{ i \} }=S_i$.
  Applying Theorem 3.20 in~\cite{MOO} to the polynomial $Q_i$, it
follows that (provided $\tau$ is sufficiently small compared to
$\delta$ and $\eta$),
 \begin{eqnarray} \label{error-term-1}
 \left\| S_i  - \chop(S_i) \right\|_{L_2(\gamma_{n(k-1)})}^2 =  \int_{\R^{n(k-1)}}
 \left|S_i(x) - \chop(S_i)(x) \right|^2 d\gamma_{n(k-1)} (x) \leq \eta.
 \end{eqnarray}

The functions $\chop(S_i)$ are almost what we want except that they
might not sum up to  $1$. So further define
\begin{equation}
 S^*_i(x) := \frac{\chop(S_i)(x)}{\sum_{i=1}^k  \chop(S_i) (x)}.
                      \nonumber
\end{equation}
Clearly, $\left\{S^*_i\right\}_{i=1}^k$ have range $[0,1]$ and
$\sum_{i=1}^k S^*_i \equiv 1$. Observe that the following holds
point-wise:

$$  \sum_{j=1}^k  \left|\chop(S_j) - S_j^* \right|
 = \left| \sum_{j=1}^k \chop(S_j)-1 \right|  =  \left| \sum_{j=1}^k \chop(S_j)-
  \sum_{j=1}^k S_j  \right| \leq \sum_{j=1}^k \left| S_j -
  \chop(S_j)\right|,
   $$
where we used that $\sum_{j=1}^k S_j = T_{1-\delta} \sum_{j=1}^k Q_j
= T_{1-\delta} 1 = 1$.  It follows that for all $i\in
\{1,\ldots,k\}$ we have:
$$\left\|\chop(S_i) - S_i^*\right\|_{L_2(\gamma_{n(k-1)})}\le\sum_{j=1}^k\left\|\chop(S_j) - S_j^*\right\|_{L_2(\gamma_{n(k-1)})} \leq \sum_{j=1}^k \left\|S_j -
\chop(S_j) \right\|_{L_2(\gamma_{n(k-1)})}
 \leq k \sqrt{\eta},
 $$
 where we used (\ref{error-term-1}).  Finally,

\begin{equation}
\left\| S_i - S^*_i \right\|_{L_2(\gamma_{n(k-1)})} \leq \left\|S_i
- \chop(S_i) \right\|_{L_2(\gamma_{n(k-1)})} + \left\|\chop(S_i) -
S^*_i\right\|_{L_2(\gamma_{n(k-1)})} \leq
  (k+1)\sqrt{\eta}.   \label{eqn:error-term-2}
\end{equation}   Now write
\begin{eqnarray}\label{eq:two integrals} u_i = \int_{\R^{n(k-1)}}
  x \ S_i (x) d\gamma_{n(k-1)} (x), \quad \quad  w_i  =  \int_{\R^{n(k-1)}}
  x \ S^*_i (x) d\gamma_{n(k-1)} (x). \end{eqnarray}
The norm of $u_i - w_i$ is bounded by $(k+1)\sqrt{\eta}$ using
 (\ref{eqn:error-term-2}) and Lemma
\ref{lemma:trivial-bound} below.  Since $| S^*_i | \leq 1$, the norm
of $w_i$ is bounded by $1$.  Returning to the estimation in Equation
(\ref{eqn:estimate-Rn}) and applying Lemma \ref{lemma:uw-approx}
below, we see that:

\begin{equation*}
(1-\delta)^2  \cdot \mbox{OBJ}(f)  =  \sum_{i=1}^k\sum_{\ell=1}^k
b_{i\ell} \langle u_i, u_\ell\rangle
  \leq
 \sum_{i=1}^k\sum_{\ell=1}^k b_{i\ell} \langle w_i, w_\ell\rangle +
   O\left(k \sqrt{\eta}\right)\left( \sum_{i=1}^k\sum_{\ell=1}^k |b_{i\ell}|\right).
\end{equation*}
Since $\sum_{i=1}^k S_i^* \equiv 1$ we have
\begin{multline*}
 \sum_{i=1}^k\sum_{\ell=1}^k b_{i\ell} \langle w_i, w_\ell\rangle
  =  \sum_{i=1}^k\sum_{\ell=1}^k b_{i\ell}\left\langle  \int_{\R^{n(k-1)}}
  x \ S_i^* (x) d\gamma_{n(k-1)} (x), \ \int_{\R^{n(k-1)}}
  x \ S_\ell^* (x) d\gamma_{n(k-1)} (x)   \right\rangle \\
     \leq  \sup_{f: \R^{n(k-1)} \to \Delta_k}
  \left( \sum_{i=1}^k\sum_{\ell=1}^k b_{i\ell}
   \left\langle  \int_{\R^{n(k-1)}}
  x \ f_i (x) d\gamma_{n(k-1)} (x), \ \int_{\R^{n(k-1)}}
  x \ f_\ell (x) d\gamma_{n(k-1)} (x)   \right\rangle   \right)    = C(B).
\end{multline*}
It follows that $\mbox{OBJ}(f)
 \leq C(B) + \eps$, provided that $\eta$ and $\delta$ are small
 enough.
\end{proof}

\begin{lemma}\label{lemma:trivial-bound}
 Let $g \in L_2(\R^n, \gamma_n)$. Then
 $$ \left\| \int_{\R^n} x \ g(x) d\gamma_n (x) \right\|_2  \leq  \|g \|_{L_2(\R^n, \gamma_n)}. $$
\end{lemma}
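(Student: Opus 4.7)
The plan is to recognize the vector $\int_{\R^n} x\, g(x)\, d\gamma_n(x)\in \R^n$ as a vector of Fourier coefficients of $g$ against an orthonormal system, after which the bound reduces to Bessel's inequality. Concretely, let $e_1,\ldots,e_n$ denote the standard basis of $\R^n$ and let $p_i:\R^n\to \R$ be the $i$-th coordinate function $p_i(x)=x_i$. Then
$$
\int_{\R^n} x\, g(x)\, d\gamma_n(x)=\sum_{i=1}^n \left(\int_{\R^n} x_i\, g(x)\, d\gamma_n(x)\right)e_i=\sum_{i=1}^n \langle p_i,g\rangle_{L_2(\gamma_n)}\, e_i,
$$
so its squared Euclidean norm equals $\sum_{i=1}^n \langle p_i,g\rangle_{L_2(\gamma_n)}^2$.

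Next I would observe that $\{p_1,\ldots,p_n\}$ is an orthonormal family in $L_2(\gamma_n)$, since $\int_{\R^n} x_ix_j\, d\gamma_n(x)=\delta_{ij}$ (this is the standard fact that first-degree Hermite polynomials in distinct coordinates are orthonormal under the Gaussian measure). Bessel's inequality applied to the orthonormal system $\{p_1,\ldots,p_n\}\subseteq L_2(\gamma_n)$ then yields
$$
\sum_{i=1}^n \langle p_i,g\rangle_{L_2(\gamma_n)}^2\le \|g\|_{L_2(\gamma_n)}^2,
$$
which is precisely the desired inequality.

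There is no real obstacle here; the only thing to double-check is the orthonormality of the coordinate functions under $\gamma_n$, which is immediate from the fact that $\gamma_n$ is a product of independent standard Gaussians with mean $0$ and variance $1$. If one prefers to avoid invoking Bessel, the same estimate follows by writing $\left\|\int x\, g\, d\gamma_n\right\|_2=\sup_{\|v\|_2=1}\int \langle v,x\rangle g(x)\, d\gamma_n(x)$ and applying the Cauchy–Schwarz inequality together with $\int \langle v,x\rangle^2 d\gamma_n(x)=\|v\|_2^2=1$.
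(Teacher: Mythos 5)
Your proof is correct and follows essentially the same route as the paper: both express the squared norm as $\sum_{i=1}^n \langle p_i, g\rangle_{L_2(\gamma_n)}^2$, note that the coordinate functions form an orthonormal system in $L_2(\gamma_n)$, and conclude by Bessel's inequality. The alternative Cauchy--Schwarz argument you mention at the end is a fine variant but not needed.
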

\begin{proof} Note that the square of the left hand side  equals
 $$ \sum_{i=1}^n \left| \int_{\R^n} x_i \ g(x)  d\gamma_n (x) \right|^2
  =  \sum_{i=1}^n  \langle x_i, g \rangle^2.  $$
Since $x_i \in  L_2(\R^n, \gamma_n)$ are an orthonormal set of
functions, the sum of squares of projections of $g$ onto them is at
most the squared norm of $g$.
\end{proof}

\begin{lemma}\label{lemma:uw-approx}
 Suppose $\{u_i\}_{i=1}^k$ and $\{w_i\}_{i=1}^k$ are vectors in $\R^n$ such
 that $\|u_i - w_i \|_2 \leq d$
for every $i\in \{1,\ldots,k\}$ and $\|w_i \|_2 \leq 1$. Let
$B=(b_{ij})$ be a $k\times k$  matrix. Then
 $$ \left|
  \sum_{i=1}^k\sum_{\ell=1}^k b_{i\ell} \langle u_i, u_\ell \rangle  -
    \sum_{i=1}^k\sum_{\ell=1}^k b_{i\ell} \langle w_i, w_\ell \rangle \right| \leq
    \left(2d+d^2\right)\sum_{i=1}^k\sum_{\ell=1}^k |b_{i\ell}|.$$
\end{lemma}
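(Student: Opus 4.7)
The plan is a direct term-by-term telescoping estimate together with the triangle inequality. First I would rewrite each difference of inner products as
\[
\langle u_i,u_\ell\rangle - \langle w_i,w_\ell\rangle = \langle u_i-w_i,u_\ell\rangle + \langle w_i,u_\ell-w_\ell\rangle,
\]
so that each factor in each summand can be controlled by the hypotheses $\|u_i-w_i\|_2\le d$ and $\|w_i\|_2\le 1$.

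Next I would derive the a priori bound $\|u_\ell\|_2\le\|w_\ell\|_2+\|u_\ell-w_\ell\|_2\le 1+d$ from the triangle inequality. Combining this with Cauchy--Schwarz applied to the two summands above yields
\[
\bigl|\langle u_i,u_\ell\rangle - \langle w_i,w_\ell\rangle\bigr| \le d(1+d) + 1\cdot d = 2d+d^2,
\]
uniformly in $i,\ell\in\{1,\ldots,k\}$.

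Finally I would put the absolute values inside the double sum: the triangle inequality gives
\[
\left|\sum_{i=1}^k\sum_{\ell=1}^k b_{i\ell}\langle u_i,u_\ell\rangle - \sum_{i=1}^k\sum_{\ell=1}^k b_{i\ell}\langle w_i,w_\ell\rangle\right| \le \sum_{i=1}^k\sum_{\ell=1}^k |b_{i\ell}|\cdot\bigl|\langle u_i,u_\ell\rangle - \langle w_i,w_\ell\rangle\bigr|,
\]
and substituting the pointwise bound $2d+d^2$ yields exactly $(2d+d^2)\sum_{i,\ell}|b_{i\ell}|$, as required. There is no real obstacle here; the lemma is a routine stability estimate for the bilinear form $(x,y)\mapsto\sum b_{i\ell}\langle x_i,y_\ell\rangle$ on a product of balls, and the only mildly non-obvious point is to split the difference asymmetrically (one $u$ and one $w$ in each summand) so that one can use the bound on $\|w_i\|_2$ in one factor and the bound $\|u_\ell\|_2\le 1+d$ in the other.
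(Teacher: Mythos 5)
Your proof is correct and essentially the same as the paper's: both are the standard telescoping-plus-Cauchy--Schwarz stability estimate. The only cosmetic difference is that the paper expands both factors as $u_i=w_i+a_i$, $u_\ell=w_\ell+a_\ell$ to get the three terms $\langle a_i,w_\ell\rangle+\langle w_i,a_\ell\rangle+\langle a_i,a_\ell\rangle$, whereas you telescope one slot at a time and absorb the $\langle a_i,a_\ell\rangle$ term into the bound $\|u_\ell\|_2\le 1+d$; the two routes give identically the bound $2d+d^2$.
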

\begin{proof} From the given conditions on the norms of $a_i = u_i-w_i$ and $w_i$,
it follows that for any $ i,\ell\in\{1,\ldots,k\}$,
 $$ \left| \langle u_i, u_\ell \rangle - \langle w_i , w_\ell \rangle \right|
  \leq |\langle a_i, w_\ell \rangle| +
  |\langle a_\ell, w_i \rangle| +
  |\langle a_i, a_\ell \rangle| \leq    2d+d^2. $$
Hence,
$$ \left|
  \sum_{i=1}^k\sum_{\ell=1}^k b_{i\ell} \langle u_i, u_\ell \rangle  -
    \sum_{i=1}^k\sum_{\ell=1}^k b_{i\ell} \langle w_i, w_\ell \rangle \right| \leq
 \sum_{i=1}^k\sum_{\ell=1}^k |b_{i\ell}| \left| \langle u_i, u_\ell \rangle  -
    \langle w_i, w_\ell \rangle \right| \leq
    \left(2d+d^2\right)  \sum_{i=1}^k\sum_{\ell=1}^k |b_{i\ell}|,$$
    as required.
\end{proof}

\subsubsection*{The intended hardness factor}

As we show next, the dictatorship test can be translated (in a more
or less standard way by now) into a Unique Games hardness result.
The hardness factor (as usual) turns out to be
 the ratio of the objective value when the function is a
dictatorship versus when the function has all low influences, i.e.
$$\frac{R(B)^2-\eps}{C(B)+\eps}
 = \frac{R(B)^2}{C(B)} - o(1).
 $$

\subsection{The reduction from unique games to kernel clustering}

Given a Unique Games Instance $\calL(G(V,W,E), n,
\{\pi_{vw}\}_{(v,w)\in E})$, we construct an instance  of the
clustering problem.

\subsubsection*{Reformulation of the clustering problem}

As in our earlier paper \cite{KN08},  we first reformulate
 the kernel clustering problem for the ease of presentation.  As
 observed there, we can
reformulate it as (the matrix $A$ in the problem $\Clust(A|B)$ is
captured by the quadratic form $Q$ below):

\medskip
\noindent {\bf Kernel Clustering Problem:} Given a $k\times k$
symmetric positive semidefinite matrix B, and a symmetric positive
semidefinite quadratic form $Q(\cdot ,\cdot )$ on $\R^N \times
\R^N$, find
 $F: \{1,\ldots,N\} \to \Delta_k$,  $F= (F_1, F_2,\ldots, F_k)$, so as to maximize
 $\sum_{i=1}^k\sum_{j=1}^k b_{ij} Q(F_i, F_j)$.

\subsubsection*{The clustering problem instance} Given a Unique
Games instance $\L\left(G(V,W,E), n, \{\pi_{vw}\}_{(v,w)\in
E}\right)$, the clustering  problem is to find a function $F: W
\times \Omega^n \to \Delta_k$ so as to maximize
$\sum_{i=1}^k\sum_{j=1}^k b_{ij}Q(F_i, F_i)$ where $Q$ is a suitably
defined symmetric positive semidefinite quadratic form.  For
notational convenience, we write:
   $$ F_w :=  F(w, \cdot),  \quad\quad F_w : \Omega^n \to \Delta_k.$$
Also, for every $v \in V$, we write:
$$  F_v \coloneqq  \E_{(v,w) \in E} \left[ F_w  \circ \pi_{vw} \right], \quad\quad\quad
  F_v : \Omega^n \to \Delta_k.  $$
We used the following notation: for any function $g : \Omega^n \to
\Delta_k$ and $\pi : \{1,\ldots,n\} \to \{1,\ldots,n\}$ we write  $g
\circ \pi: \Omega^n \to \Delta_k$ for the  function $ (g \circ \pi)
(\omega) :=  g ( \omega_{\pi(1)}, \omega_{\pi(2)}, \ldots,
\omega_{\pi(n)} )$. As usual, we denote $F_w = (F_{w,1}, F_{w,2},
\ldots, F_{w,k})$ where each $F_{w,i}$ has range $[0,1]$ and
$\sum_{i=1}^k F_{w,i} = 1$. Similarly, $F_v = (F_{v,1}, F_{v,2},
\ldots, F_{v,k})$ and $\sum_{i=1}^k F_{v,i} = 1$. Now we are ready
to define the clustering problem instance.

\medskip\noindent {\bf Clustering instance:} The goal is to find
$F:  W \times \Omega^n \to \Delta_k$ so as to maximize:
\begin{eqnarray}\label{eqn:clustering-obj}
  \max_{F: W \times \Omega^n \to \Delta_k} \ \
   {\E}_{v \in V} \left[ \mbox{OBJ} (F_v) \right] =
    \max_{F: W \times \Omega^n \to \Delta_k} \ \
    \ \E_{v \in V}
   \left[ \sum_{i=1}^k\sum_{j=1}^k b_{ij} \sum_{\sigma: |\sigma|=1} \widehat{F}_{v,i}(\sigma)
     \cdot \widehat{F}_{v,j}(\sigma)
   \right].
\end{eqnarray}

\subsubsection*{Completeness} We will show that if the Unique
Games instance has an almost satisfying labeling, then the objective
value of the clustering problem is at least $R(B)^2-o(1)$. So, let
$\rho: V \cup W \to \{1,\ldots,n\}$ be the labeling, such that for
at least $1-\eps$ fraction of the vertices $v \in V$ (call such $v$
good) we have
 $$  \pi_{vw} ( \rho(w) ) = \rho(v) \ \ \forall \ (v,w) \in E. $$
%Call $v$ good if this holds and let $V^*$ be the set of good $v$'s with $|V^*| \geq (1-\eps)
% |V|$.
Define $F: W \times \Omega^n \to \Delta_k$ as follows: for every $w
\in W$,
 $F_w: \Omega^n \to \Delta_k$
  equals the dictatorship corresponding to $\rho(w) \in \{1,\ldots,n\}$, i.e.,
   $$ F_w :=   f^{dict, \rho(w)}.$$

\begin{lemma}[\cite{KN08}]
For a good $v\in V$ we have   $F_v = f^{dict, \rho(v)}$.
\end{lemma}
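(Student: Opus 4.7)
The plan is to unwind the definitions and use the fact that a good $v$ has every incident edge satisfied by the labeling $\rho$. The key observation is that for each individual edge $(v,w)\in E$ incident to a good $v$, the composition $F_w\circ \pi_{vw}$ is \emph{already} equal to $f^{\mathrm{dict},\rho(v)}$ on the nose, so the averaging in the definition of $F_v$ is trivial.

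More concretely, I would first substitute $F_w=f^{\mathrm{dict},\rho(w)}$ into $F_w\circ \pi_{vw}$. Unwinding the definition of composition with a permutation gives, for every $\omega\in \Omega^n$,
\[
(F_w\circ \pi_{vw})(\omega)=f^{\mathrm{dict},\rho(w)}\bigl(\omega_{\pi_{vw}(1)},\ldots,\omega_{\pi_{vw}(n)}\bigr).
\]
By the definition of the dictatorship function, the right-hand side is $e_i$ precisely when the coordinate in position $\rho(w)$ of the permuted tuple equals $i$, that is, when $\omega_{\pi_{vw}(\rho(w))}=i$. Since $v$ is good, $\pi_{vw}(\rho(w))=\rho(v)$, so this condition becomes $\omega_{\rho(v)}=i$. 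Hence $(F_w\circ \pi_{vw})(\omega)=f^{\mathrm{dict},\rho(v)}(\omega)$ for every $\omega$ and every edge $(v,w)\in E$.

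Finally, averaging a constant-in-$w$ quantity gives
\[
F_v=\E_{(v,w)\in E}\bigl[F_w\circ \pi_{vw}\bigr]=f^{\mathrm{dict},\rho(v)},
\]
which is the desired identity. There is no real obstacle here; the only thing to be careful about is correctly tracking how composition with the permutation interacts with the index $\rho(w)$ singled out by the dictatorship, and to verify that the equality $\pi_{vw}(\rho(w))=\rho(v)$ from the goodness of $v$ is precisely what identifies the two dictatorships.
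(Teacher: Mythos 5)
Your proof is correct. The paper itself cites this lemma from \cite{KN08} without reproducing the argument, but the natural verification is exactly what you wrote: substitute $F_w=f^{\mathrm{dict},\rho(w)}$, unwind the composition with $\pi_{vw}$ to see that $(F_w\circ\pi_{vw})(\omega)$ depends only on $\omega_{\pi_{vw}(\rho(w))}$, invoke the goodness of $v$ to replace $\pi_{vw}(\rho(w))$ by $\rho(v)$ uniformly over all incident edges, and note that averaging a constant family is trivial.
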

%\begin{proof} For a good $v$, $\pi_{vw}(\rho(w))=\rho(v)$ for every $(v,w)\in E$. Thus
%\begin{multline*}
%F_v  = \E_{(v,w)\in E}
%   \Big[ \  F_w \circ \pi_{vw} \Big]
%     =
%    \E_{ (v,w)\in E}
%   \Big[ \  f^{dict, \rho(w)} \circ \pi_{vw} \Big]
%    \\=  \E_{ (v,w)\in E}
%   \Big[ \  f^{dict, \pi_{vw}(\rho(w))}  \Big]
%    =  \E_{ (v,w)\in E}
%   \Big[ \  f^{dict, \rho(v)}   \Big] =  f^{dict, \rho(v)}
%\end{multline*}
%\end{proof}

Thus the contribution of $v$ in  (\ref{eqn:clustering-obj}) is
$\mbox{OBJ}(f^{dict, \rho(v)}) \geq R(B)^2 - \eps$ as observed in
Equation (\ref{eqn:rbs-dict}). Since $1-\eps$ fraction of $v \in V$
are good, (\ref{eqn:clustering-obj}) is at least $(1-\eps)\cdot
(R(B)^2-\eps) = R(B)^2 - o(1)$.

\subsubsection*{Soundness} Suppose for the sake of contradiction that the value
of (\ref{eqn:clustering-obj}) is at least $C(B) + 2\eps$. As in
\cite{KN08}, it can be proved that  the Unique Games instance must
have a labeling that satisfies at least a constant fraction of its
edges, the constant depending on the parameter $\tau$ used in
Theorem \ref{thm:dict-soundness}. This is a contradiction, provided
the soundness of the Unique Games instance  is chosen to be even
lower to begin with. The proof is the same as in \cite{KN08}, by
replacing the $C(k)$ therein by $C(B)$ (\cite{KN08} focused on the
case when $B$ is the $k \times k$ identity matrix. The constant
$C(k)$ therein is same as our constant $C(B)$ when $B$ is the $k
\times k$ identity matrix).

\section{A concrete example}\label{sec:example}

In this section we will use our results to evaluate the UGC hardness
threshold of the problem of computing
\begin{equation}\label{eq:c}
\Clust\left(A\left|\  \begin{pmatrix}
  1 & 0 & 0 \\
   0 & 1 & 0\\
   0 & 0 & c
   \end{pmatrix}\right.\right),
\end{equation}
where $A\in M_n(\R)$ is centered, symmetric and positive
semidefinite and $c\in (0,\infty)$ is a parameter. The case $c=1$,
corresponding to $B=I_3$ (the $3\times 3$ identity matrix) was
evaluated in~\cite{KN08}, where it was shown that the UGC hardness
threshold in this case equals $\frac{16\pi}{27}$.

For general $c>0$ the optimization problem in~\eqref{eq:c}
corresponds to the following question: given $n$ random variables
$X_1,\ldots,X_n$ the goal is to partition them into three sets
$S_1,S_2,S_3\subseteq \{1,\ldots,n\}$ such that
\begin{equation}\label{eq:weighted c}
\sum_{i,j\in S_1} \E\left[X_iX_j\right]+\sum_{i,j\in S_2}
\E\left[X_iX_j\right]+c\sum_{i,j\in S_3} \E\left[X_iX_j\right]
\end{equation}
is maximized. Thus we wish to cluster the variables into three
clusters so as to maximize the intra-cluster correlations, while the
parameter $c$ allows us to tune the relative importance of one of
the clusters. We stress that we do not claim that this optimization
problem is of particular intrinsic importance. We chose it as a way
to concretely demonstrate our results for the simplest possible
perturbation of the case of $B=I_3$.
%Nevertheless, even this simple
%case seems interesting and exhibits a phase transition at
%$c=\frac12$.
We remark that it is also possible to explicitly solve the case of
general $3\times 3$ diagonal matrices $B$, i.e., the case of a
general weighting of the clusters in~\eqref{eq:weighted c}. The
formula for the UGC hardness threshold for general $3\times 3$
diagonal matrices turns out to be quite complicated, so we chose to
deal only with~\eqref{eq:c} as a simple example for the sake of
illustration. Note that for $3\times 3$ matrices the
characterization of $C(B)$ in terms of planar conical partitions is
particularly simple, and allows for explicit computations of the UGC
hardness threshold in additional cases.

Denote $ B\coloneqq \begin{pmatrix}
  1 & 0 & 0 \\
   0 & 1 & 0\\
   0 & 0 & c
   \end{pmatrix}=(\langle v_i,v_j\rangle)_{i,j=1}^3$, where
$v_1=(1,0,0),v_2=(0,1,0),v_3=(0,0,\sqrt{c})\in \R^3$. The side
lengths of the triangle whose vertices are $v_1,v_2,v_3$ are
$\left\{\ell_1=\sqrt{1+c},\ell_2=\sqrt{1+c},\ell_3=\sqrt{2}\right\}$.
Note that this is an acute triangle, so its smallest bounding circle
coincides with its circumcircle, and therefore its radius is given
by~\cite{Johnson60}:
\begin{equation}\label{eq:radius}
R(B)^2=\frac{\ell_1^2\ell_2^2\ell_3^2}{(\ell_1+\ell_2+\ell_3)
(-\ell_1+\ell_2+\ell_3)(\ell_1-\ell_2+\ell_3)(\ell_1+\ell_2-\ell_3)}=\frac{(1+c)^2}{2+4c}.
\end{equation}

We shall now compute $C(B)$. By Lemma~\ref{lem:simplicial} the
partition $\{A_1,A_2,A_3\}$ of $\R^2$ at which $C(B)$ is attained
consists of disjoint cones of angles $\alpha_1,\alpha_2,\alpha_3\in
[0,2\pi]$ where $\alpha_1+\alpha_2+\alpha_3=2\pi$. A direct
computation shows that for $j\in \{1,2,3\}$ we have:
$$
\left\|\int_{A_j}xd\gamma_2(x)\right\|_2^2=\frac{1}{2\pi}\sin^2\left(\frac{\alpha_j}{2}\right).
$$
Hence
\begin{equation}\label{eq:maximum}
C(B)=\frac{1}{2\pi}\max_{\substack{\alpha_1,\alpha_2,\alpha_3\in
[0,2\pi]\\\alpha_1+\alpha_2+\alpha_3=2\pi}}\left(\sin^2\left(\frac{\alpha_1}{2}\right)
+\sin^2\left(\frac{\alpha_2}{2}\right)+c\sin^2\left(\frac{\alpha_3}{2}\right)
\right).
\end{equation}
Assume for the moment that the maximum in~\eqref{eq:maximum} is
attained when $\alpha_1,\alpha_2,\alpha_3\in (0,2\pi)$. Then using
Lagrange multipliers we see that
$\sin\alpha_1=\sin\alpha_2=c\sin\alpha_3$. This implies in
particular that either $\alpha_1=\alpha_2$ or (since
$\alpha_1,\alpha_2,\alpha_3\in (0,2\pi)$ and
$\alpha_1+\alpha_2+\alpha_3=2\pi$) $\alpha_1+\alpha_2=\pi$. In the
latter case $\alpha_3=\pi$, and it follows from the Lagrange
multiplier equations that $\sin\alpha_1=\sin\alpha_2=0$, which
forces one of $\{\alpha_1,\alpha_2\}$ to vanish, contrary to our
assumption. Hence we know that $\alpha_1=\alpha_2\coloneqq\alpha$.
Then $\alpha_3=2\pi-2\alpha$, and since $\alpha_3\in (0,2\pi)$ we
also know that $\alpha\in (0,\pi)$. The Lagrange multiplier
equations imply that $\sin
\alpha=c\sin(2\pi-2\alpha)=-2c\sin\alpha\cos\alpha$. Thus
$\cos\alpha= -\frac{1}{2c}$, and in particular we see that
necessarily $c\ge \frac12$. It follows that
$$
\sin^2\left(\frac{\alpha}{2}\right)=\frac{1-\cos\alpha}{2}=\frac{2c+1}{4c},
$$
and
$$
\sin^2\left(\frac{\alpha_3}{2}\right)=\sin^2\left(\pi-\alpha\right)=1-\cos^2\alpha=1-\frac{1}{4c^2}.
$$
Hence in this case:
\begin{equation}\label{eq:candidate}
\sin^2\left(\frac{\alpha_1}{2}\right)
+\sin^2\left(\frac{\alpha_2}{2}\right)+c\sin^2\left(\frac{\alpha_3}{2}\right)=2\frac{2c+1}{4c}+c\frac{4c^2-1}{4c^2}
=\frac{(2c+1)^2}{4c}.
\end{equation}

It remains to deal with the boundary case
$\{\alpha_1,\alpha_2,\alpha_3\}\cap\{0,2\pi\}\neq \emptyset$, which
as we have seen above is where the maximum in~\eqref{eq:maximum} is
necessarily attained if $c<\frac12$. If one of
$\{\alpha_1,\alpha_2,\alpha_3\}$ equals $2\pi$ then the others must
vanish, in which case $\sin^2\left(\frac{\alpha_1}{2}\right)
+\sin^2\left(\frac{\alpha_2}{2}\right)+c\sin^2\left(\frac{\alpha_3}{2}\right)=0$.
If one of $\{\alpha_1,\alpha_2,\alpha_3\}$ vanishes then in order to
maximize $\sin^2\left(\frac{\alpha_1}{2}\right)
+\sin^2\left(\frac{\alpha_2}{2}\right)+c\sin^2\left(\frac{\alpha_3}{2}\right)$
the other two must equal $\pi$, in which case the maximum value of
this quantity is $\max\{2,1+c\}$. Since $\max\{2,1+c\}$ never
exceeds the quantity $\frac{(2c+1)^2}{4c}$ from~\eqref{eq:candidate}
it follows that the maximum of
$\sin^2\left(\frac{\alpha_1}{2}\right)
+\sin^2\left(\frac{\alpha_2}{2}\right)+c\sin^2\left(\frac{\alpha_3}{2}\right)$
over $\{\alpha_1+\alpha_2+\alpha_3=2\pi\ \wedge\
\alpha_1,\alpha_2,\alpha_3\in [0,2\pi]\}$ equals
$\frac{(2c+1)^2}{4c}$ when $c\ge \frac12$ and equals $2$ when $c\le
\frac12$. We therefore proved that
\begin{equation}\label{eq:Cc}
C(B)=\left\{\begin{array}{ll}\frac{(2c+1)^2}{8\pi c}&\mathrm{if}\ c\ge\frac12,\\
\frac{1}{\pi}&\mathrm{if}\ c\le\frac12.\end{array}\right.
\end{equation}
By combining~\eqref{eq:radius} with~\eqref{eq:Cc} we conclude that
the UGC hardness threshold for computing~\eqref{eq:c} is:
\begin{eqnarray}\label{eq:phase}
\frac{R(B)^2}{C(B)}= \left\{\begin{array}{ll}\frac{4\pi
c(1+c)^2}{(1+2c)^3}&\mathrm{if}\ c\ge\frac12,\\
\frac{\pi (1+c)^2}{2+4c}&\mathrm{if}\ c\le\frac12.\end{array}\right.
\end{eqnarray}

\begin{remark}\label{rem:phase}
{\em An inspection of the above argument, in  combination with our
algorithm that was presented in Section~\ref{sec:alg}, shows that
the phase transition in~\eqref{eq:phase} at $c=\frac12$ corresponds
to a qualitative change in the optimal algorithm: after shifting the
vectors $\{v_1,\ldots,v_k\}$ so that $w(B)=0$ and renormalizing by
$R(B)$, for $c>\frac12$ the algorithm projects the points obtained
from the SDP to $\R^2$ and classifies them according to a partition
of $\R^2$ into three cones of positive measure, while for
$c<\frac12$ the partitioning is into two half-planes and the third
set (the one weighted by $c$) is empty.}
\end{remark}

\bibliographystyle{abbrv}

\bibliography{smola}

\end{document}